\newcommand{\todonote}[2][]{\todo[color=green,#1]{#2}}
\newcommand{\tvnote}[2][]{\todonote[#1]{Th: #2}}
\newcommand{\chnote}[2][]{\todonote[#1]{Cl: #2}}
\newtheorem{theorem}{Theorem}
\newtheorem{proposition}[theorem]{Proposition}
\newtheorem{corollary}[theorem]{Corollary}
\newtheorem{lemma}[theorem]{Lemma}
\newtheorem{conjecture}[theorem]{Conjecture}
\theoremstyle{definition}
\newtheorem{definition}[theorem]{Definition}
\newtheorem{example}[theorem]{Example}
\theoremstyle{remark}
\newtheorem{remark}[theorem]{Remark}
\renewcommand{\algorithmicrequire}{{\bf Input:}}
\renewcommand{\algorithmicensure}{{\bf Output:}}
\def\<#1>{\langle#1\rangle}
\def\Fr{\Sigma}
\def\MFr{M(\Fr)}
\let\set\mathbb
\newcommand\Q{\set Q}
\newcommand\N{\set N}
\newcommand\M{M}
\DeclareMathOperator{\lt}{lt}
\DeclareMathOperator{\lm}{lm}
\DeclareMathOperator{\lc}{lc}
\DeclareMathOperator{\coeff}{coeff}
\DeclareMathOperator{\supp}{supp}
\DeclareMathOperator{\sigc}{\mathfrak{sc}}
\DeclareMathOperator{\s}{\mathfrak{s}}
\DeclareMathOperator{\st}{\mathfrak{st}}
\DeclareMathOperator{\spol}{sp}
\DeclareMathOperator{\Syz}{Syz}
\newcommand{\Htriv}{H_{\textup{triv}}}
\newcommand\MMA{\mbox{\textsc{Mathematica}}}
\newcommand{\textwhile}{``\emph{while}''\xspace}
\begin{document}

\begin{frontmatter}

\title{Signature Gr\"obner bases, bases of syzygies\\and cofactor reconstruction in the free algebra}

\author{Clemens Hofstadler}
\ead{clemens.hofstadler@jku.at}

\author{Thibaut Verron}
\ead{thibaut.verron@jku.at}

\address{Institute for Algebra, Johannes Kepler University Linz, Austria}

\begin{keyword}
  Noncommutative polynomials \sep
  Signature Gröbner bases \sep
  Syzygy module \sep
  Cofactor reconstruction
\end{keyword}

\begin{abstract}
Signature-based algorithms have become a standard approach for computing Gr\"obner bases in commutative polynomial rings.
However, so far, it was not clear how to extend this concept to the setting of noncommutative polynomials in the free algebra.
In this paper, we present a signature\nobreakdash-based algorithm for computing Gröbner bases in precisely this setting.
The algorithm is an adaptation of Buchberger's algorithm including signatures.
We prove that our algorithm correctly enumerates a signature Gr\"obner basis as well as a Gr\"obner basis of the module generated by the leading terms of the generators' syzygies, and that it terminates whenever the ideal admits a finite signature Gröbner basis.
Additionally, we adapt \mbox{well-known} signature-based criteria eliminating redundant reductions, such as the syzygy criterion, the F5 criterion and the singular criterion, to the case of noncommutative polynomials.
We also generalize reconstruction methods from the commutative setting that allow to recover, from partial information about signatures, the coordinates of elements of a Gröbner basis in terms of the input polynomials, as well as a basis of the syzygy module of the generators.
We have written a toy implementation of all the algorithms in the \MMA\ package \texttt{OperatorGB} and we compare our signature-based algorithm to the classical Buchberger algorithm for noncommutative polynomials.
\end{abstract}

\end{frontmatter}

\section{Introduction}

Gröbner bases have become a fundamental and multi-purpose tool in computational algebra.
They were initially introduced~\cite{Buchberger-1965} to answer questions about ideals of multivariate (commutative) polynomials, and they were subsequently generalized to several noncommutative settings, including Weyl polynomials~\cite{Galligo-1985-some-algorithmic-questions} encoding, for example, differential equations, and noncommutative polynomials in the free algebra~\cite{Mor85}, 
which can model matrix identities and, more generally, identities of linear operators.
In the latter setting, Gröbner bases and the associated reduction machinery are notably useful for simplifying and proving operator identities~\cite{HW94,HSW98,HRR19,CHRR20,SL20,RRH21}.

The main theoretical results needed to adapt the concept of Gr\"obner bases to the free algebra are due to Bergman~\cite{Ber78}, who used the abstract concept of reduction systems to generalize the ideas from the commutative setting.
This development was independent of the commutative theory.
Around the same time, also Bokut'~\cite{Bok76} proved statements that are essentially equivalent to the ones by Bergman.
Bokut' attributes his results to Shirshov, who had published similar results in the context of Lie algebras~\cite{Sir62}.
The first explicit algorithm for computing noncommutative Gr\"obner bases was proposed by Mora~\cite{Mor85} a few years later, who adapted Buchberger's algorithm to the free algebra.
Later, Mora also managed to unify the theory of Gr\"obner bases for commutative and noncommutative polynomial rings via a generalization of the Gaussian elimination algorithm~\cite{Mor94}.
Recently, also the F4 algorithm~\cite{Faugere-1999-F4} has been adapted to this setting~\cite{Xiu12}.
However, contrary to the commutative or the Weyl case, not all ideals in the free algebra admit a finite Gröbner basis.
Instead, the algorithms always enumerate a Gröbner basis, with termination if and only if a finite Gröbner basis exists w.r.t.\ the chosen monomial order.

In the case of commutative polynomials, the latest generation of Gröbner basis algorithms are the so-called signature-based algorithms, heralded by the F5 algorithm~\cite{Faugere-2002-F5}.
This class of algorithms was the subject of extensive research in the past 20 years, a survey of which can be found in~\cite{EF17}.
Those algorithms compute, in addition to a Gröbner basis, some information on how the polynomials in that basis were computed.
Using this information, the algorithms are able to identify relations between the computed polynomials, and use them to predict and avoid reductions to zero and redundant computations.
This yields a significant performance improvement.

Additionally, it was observed recently~\cite{Sun11,Gao-2015-new-framework-for} that the data of signatures is enough to reconstruct the cofactors of the Gröbner basis, that is, the coordinates of the elements of the basis in terms of the input polynomials.
Similarly, one can also compute a basis of the syzygy module of the generators.
Alternatively, those operations can be realised using the classical theory and algorithms for Gröbner bases of modules (one can see~\cite[Ch.~10]{Becker-1993} for a textbook exposition with references, or~ \cite[Ch.~3]{Adams-1994-introduction-to-groebner}), but signatures allow to reduce the cost of those computations to that of a Gröbner basis of an ideal.

Algorithms for computing Gröbner bases with signatures have also been developed in the case of Weyl algebras~\cite{Sun-2012-signature-groebner-solvable}, with the same application to the computation of coordinates of Gröbner basis elements, and of syzygies.
\chnote{Added reference}
Furthermore, in~\cite{Kin14} a noncommutative version of the F5 algorithm for right modules over quotients of path algebras was described.
In this setting, the information encoded by the signatures is used to efficiently compute bases of Loewy layers.
Finally, in the context of the free algebra, recent work~\cite{ChenavierLeonardVaccon2021} has independently introduced similar definitions as in Section~\ref{sec:sign-labell-grobn}, and proved lower bounds for the complexity of the set of leading monomials of the module of syzygies, in the sense of the Chomsky hierarchy.

The problem of computing the cofactors of a Gr\"obner basis is also central when working with noncommutative polynomials.
In particular, when proving operator identities, this information allows to construct a proof certificate for a given identity, which can be checked easily and independently of how it was obtained, see for example~\cite{Hof20}.
Like in the commutative case, the classical theory and algorithms for computing Gröbner bases in modules can also be used to obtain such information~\cite{BK06,Mor16}, but those algorithms are significantly more expensive than a mere Gröbner basis computation.
Furthermore, for most ideals in the free algebra, the module of syzygies is not finitely generated and does not admit a finite Gröbner basis, which makes those algorithms in fact only enumeration procedures.

In this paper, we show how to define and compute signature Gröbner bases for noncommutative polynomials in the free algebra, and we show how to use them to reconstruct the module representation of elements of the ideal, and a basis of the syzygy module of the generators.
We also generalize some classical signature-based criteria such as the syzygy criterion, the F5 criterion and the singular criterion, in order to use signatures to accelerate the algorithms.
\chnote{Link to algorithms}
More precisely, for introductory purposes, we first present Algorithm~\ref{algo LabelledGB}.
This algorithm is impractical because it has to perform expensive module computations.
Replacing those computations with signature manipulations naturally leads to Algorithm~\ref{algo SigGB}, which also includes the signature-based criteria.
This algorithm can be considered as a generic template for signature-based algorithms in the free algebra.
Finally, we introduce Algorithms~\ref{algo 2} and~\ref{algo 3} which allow to reconstruct the output of Algorithm~\ref{algo LabelledGB} using only the signatures.

A difficulty specific to the case of noncommutative polynomials is that some ideals may not have a finite signature Gröbner basis, even if the ideal has a finite Gröbner basis.
This is unavoidable, but we prove that the algorithms nonetheless correctly enumerate a signature Gröbner basis.

Additionally, as already mentioned, the module of syzygies of the generators usually does not admit a finite Gröbner basis. More precisely, the module spanned by the leading terms of the so-called \mbox{``trivial syzygies''} (sometimes called Koszul syzygies, or principal syzygies) is typically not finitely generated.
On the other hand, a strength of signature-based algorithms is precisely that they make it possible to identify those trivial syzygies, and in particular this set of trivial syzygies admits a finite and effective representation in terms of a signature Gröbner basis.
In classical cases, avoiding those trivial syzygies is the crux of the F5 criterion, and in the noncommutative case, it allows the algorithm to enumerate a basis of the syzygy module by only considering the non-trivial syzygies.

If our algorithm terminates, it computes a finite signature Gröbner basis, and in particular, a finite Gröbner basis and a finite and effective description of the module of syzygies of the input polynomials.
We conjecture (see Conjecture~\ref{sec:conj-termination}) that conversely, the existence of a finite Gröbner basis of the ideal and of a finite description of the module of syzygies of the generators, implies the existence of a finite signature Gröbner basis.

We also provide a toy implementation\footnote{Available at \url{https://clemenshofstadler.com/software/}} of the algorithms presented in this paper in the \MMA\ package \texttt{OperatorGB}~\cite{HRR19,Hof20}.
For an overview on other available software packages in the realm of noncommutative Gr\"obner bases, see~\cite{LSZ20} and references therein.

We show experimentally that the use of signatures allows to drastically reduce the number of S-polynomials considered and reduced to zero.
\chnote{Added something about efficiency}
However, first timings indicate that our implementation cannot compete with the standard noncommutative Buchberger algorithm.
So, while conceptually our algorithm is fairly simple and very similar to Buchberger's algorithm, implementing it in an \emph{efficient} way seems to be a highly non-trivial task.
However, we are optimistic that, as in the commutative case, noncommutative signature-based algorithms can lead to an acceleration of Gr\"obner basis computations in the free algebra.
In this work, we focus on building a theoretical foundation for this kind of algorithms and on highlighting possible uses of the additional information provided by the signatures.



\section{Preliminaries}

For the convenience of the reader, we recall the most important aspects of the theory of Gr\"obner bases in the free algebra and of Gr\"obner bases of submodules of the free bimodule in this section.
Additionally, we introduce the notion of \emph{signatures} in this noncommutative setting as a straightforward generalization of signatures from the commutative case.

We fix a finite set of indeterminates $X = \{x_1,\dots,x_n\}$ and denote by $\<X>$ the free monoid over $X$ containing all \emph{words} (or \emph{monomials}) of the form $w = x_{i_1}\dots x_{i_k}$ including the \emph{empty word} 1.
The quantity $k$ is called the \emph{length} of $w$.

For a field $K$, we let
\[
	K\<X> = \left\{ \sum_{w \in \<X>} c_w w \mid c_w \in K \text{ such that only finitely many }c_w \neq 0\right\}
\]
be the \emph{free algebra} generated by $X$ over $K$.
We consider the elements in $K\<X>$ as noncommutative polynomials with coefficients in $K$ and indeterminates in $X$, where indeterminates commute with coefficients but not with each other.

For a given set of polynomials $F \subseteq K\<X>$, we denote by $(F)$ the \emph{(two-sided) ideal} generated by $F$, that is
\[
	(F) = \left\{\sum_{i=1}^d a_i f_i b_i \mid f_i\in F,\; a_i, b_i\in K\<X>,\; d\in\N\right\}.
\]
The set $F$ is called a \emph{set of generators} of $(F)$. An ideal $I \subseteq K\<X>$ is said to be \emph{finitely generated} if there exists a finite set of generators $F \subseteq K\<X>$ such that $I = (F)$. 
We agree upon the convention to write $(f_1,\dots,f_r)$ instead of $(\{f_1,\dots,f_r\})$  if the elements of $F = \{f_1,\dots,f_r\}$ are given explicitly.

\begin{remark}
If $|X| > 1$, the free algebra $K\<X>$ is not Noetherian, i.e., there exist ideals in $K\<X>$ which are not finitely generated.
One prominent example is the ideal $(xy^ix \mid i\in\N) \subseteq K\langle x,y\rangle$, which has no finite set of generators.
\end{remark}

\begin{definition}
A \emph{monomial ordering} on $\<X>$ is a well-ordering $\preceq$ that is compatible with the multiplication in $\<X>$, that is,
$w \preceq w'$ implies $awb \preceq aw'b$ for all $a,b,w,w'\in\<X>$.
\end{definition}

An example of a monomial ordering on $\<X>$ is the degree lexicographic ordering $\preceq_{\textup{deglex}}$,
where two words $w,w' \in \<X>$ are first compared by their length and ties are broken by comparing the variables in $w$ and $w'$ from left to right using the lexicographic ordering $x_1 \prec_{\textup{lex}} \dots \prec_{\textup{lex}} x_n$.

In what follows, we fix a monomial ordering $\preceq$ on $\<X>$.
Then, every non-zero $f\in K\<X>$ has a unique representation of the form $f = c_1 w_1 + \dots + c_d w_d$ with $c_1,\dots,c_d \in K\setminus\{0\}$ and $w_1,\dots,w_d \in \<X>$ such that $w_1 \succ \dots \succ w_d$.

\begin{definition}
Let $f = c_1 w_1 + \dots + c_d w_d \in K\<X>\setminus\{0\}$ with $c_1,\dots,c_d \in K\setminus\{0\}$ and $w_1,\dots,w_d \in \<X>$ such that
$w_1 \succ \dots \succ w_d$.
Then, $w_1$ is called the \emph{leading monomial} of $f$, denoted by $\lm(f)$.
The coefficient $c_i$ of $w_i$ is denoted by $\coeff(f,w_i)$ for $i = 1,\dots,d$.
We call $c_1$ the \emph{leading coefficient} of $f$, abbreviated as $\lc(f)$.
If $\lc(f) = 1$, then $f$ is called \emph{monic}.
Furthermore, the \emph{leading term} $\lt(f)$ of $f$ is $\lt(f) = \lc(f)\cdot\lm(f)$.
Finally, the set $\{w_1,\dots,w_d\}$ is called the \emph{support} of $f$ and denoted by $\supp(f)$.
\end{definition}

We use the convention that the leading term and leading coefficient of the zero polynomial are $0$.
The leading monomial of the zero polynomial is undefined.

In the following, we briefly recall the most important results about Gr\"obner bases in $K\<X>$. 
For a more extensive treatment of this subject, we refer to the recent surveys~\cite{Xiu12,Mor16,Hof20}.
The main concept needed to discuss and compute noncommutative Gr\"obner bases is polynomial reduction.

\begin{definition}
Let $f,f',g \in K\<X>$ with $g \neq 0$. We say that $f$ \emph{reduces} to $f'$ by $g$ if there exist $a,b\in \<X>$ such that
$a\lm(g)b \in \supp(f)$ and
\[
	f' = f - \frac{\coeff(f, a\lm(g)b)}{\lc(g)}\cdot  agb.
\]
In this case, we write 
$f \rightarrow_g f'$.
\end{definition}

Based on this concept, for a set $G \subseteq K\<X>$, we define a reduction relation $\rightarrow_G\, \subseteq K\<X> \times K\<X>$ by 
$f \rightarrow_G f'$ if there exists $g \in G$ such that $f \rightarrow_g f'$.
We denote by $\overset{*}{\rightarrow}_G$ the reflexive, transitive closure of $\rightarrow_G$.
Using this reduction relation, we can now define Gr\"obner bases in $K\<X>$.

\begin{definition}
Let $I\subseteq K\<X>$ be an ideal and $G\subseteq I$.
Then, $G$ is a \emph{Gr\"obner basis} of $I$ if $f \overset{*}{\rightarrow}_G 0$ for all $f\in I$.
\chnote{extended this definition and added the paragraph afterwards.}
Furthermore, $G$ is called \emph{reduced} if all elements in $G$ are monic and no $g \in G$ is reducible by $G \setminus \{g\}$.
\end{definition}

We note that not all finitely generated ideals in $K\<X>$ have a finite Gr\"obner basis as witnessed by the following example.
For this example, we recall the well-known facts that the reduced Gr\"obner basis $G$ of an ideal $I \subseteq K\<X>$ is unique (w.r.t.\ a fixed monomial ordering) and that
$I$ has a finite Gr\"obner basis if and only if $G$ is finite (see for example~\cite[Prop.~3.3.17, Cor.~3.3.18]{Xiu12}).

\begin{example}\label{ex infinite gb}
\chnote{ok?}
Let $K$ be a field and $X = \{x,y\}$.
The principal ideal $I = (xyx - xy) \subseteq K\<X>$ does not have a finite Gr\"obner basis for any monomial ordering  $\preceq$ on $\<X>$.
In fact, the reduced Gr\"obner basis of $I$ is given by the infinite set $G = \{xy^nx - xy^n \mid n\geq 1\}$.
To see this, we first note that $g_n = xy^nx - xy^n \in I$ for all $n \geq 1$, which follows inductively from $g_1 = xyx -xy \in I$ and $g_{n+1} = xy g_n + g_1 (y^n- y^nx)$.
Consequently, we get that $G \subseteq I$.
Furthermore, since $xyx = (xy)\cdot x \succ (xy)\cdot 1 = xy$, all elements in $G$ are monic.
Also, no element in $G$ can be reduced by any other element, which shows that $G$ is reduced.
Then, one can verify that $G$ is indeed a Gr\"obner basis using the noncommutative analogue of Buchberger's S\nobreakdash-polynomial criterion, also known as Bergman's diamond lemma~\cite[Thm.~1.2]{Ber78}.
We note that each pair $g_i, g_j \in G$ leads to an S-polynomial
$s_{i,j} = xy^{i+j}x - xy^ixy^j$,
which can be reduced to zero as follows:
$ xy^{i+j}x - xy^ixy^j \,\rightarrow_{g_i}\, xy^{i+j}x - xy^{i+j} \,\rightarrow_{g_{i+j}}\,0.$
\end{example}


Even though ideal membership is undecidable in the free algebra, a noncommutative analog of Buchberger's algorithm can be used to enumerate a (possibly infinite) Gr\"{o}bner basis.

We also need the notion of Gr\"obner bases of  sub-bimodules of a free $K\<X>$-bimodule.
Hence, we shall briefly recall this concept in the following.
For further details on this topic, we refer to~\cite{Xiu12,Mor16}.
We note that when speaking about a (sub)module, we always mean a \mbox{(sub-)bimodule}.

For a fixed $r \in \N$, we denote by $\Fr = (K\<X> \otimes K\<X>)^r$ the \emph{free $K\<X>$-bimodule} of rank $r$ with the canonical basis $\varepsilon_1,\dots,\varepsilon_r$, where
$\varepsilon_i = (0,\dots,0,1 \otimes 1,0,\dots,0)$ with $1 \otimes 1$ appearing in the $i$-th position for $i=1,\dots,r$.
Furthermore, we let $\MFr = \{a\varepsilon_ib \mid a,b \in \<X>, 1 \leq i \leq r\}$ be the set of \emph{module monomials} in $\Fr$.
This set forms a $K$-vector space basis of $\Fr$.

\begin{definition}
A \emph{module ordering} on $\MFr$ is a well-ordering $\preceq_\M$ that is compatible with the scalar multiplication, that is,
$\mu \preceq_\M \mu'$ implies $a\mu b \preceq_\M a\mu'b$ for all $\mu,\mu'\in\MFr$ and $a,b\in\<X>$.
\end{definition}

Given a monomial ordering $\preceq$, an example of a module ordering is the term-over-position ordering $\preceq_{\textbf{top}}$, where $a_1 \varepsilon_{i_{1}} b_1 \preceq_{\textbf{top}} a_2 \varepsilon_{i_{2}} b_2$ for two module monomials $a_1 \varepsilon_{i_{1}} b_1, a_2 \varepsilon_{i_{2}} b_2 \in \MFr$ if one of the following conditions holds:
\begin{enumerate}
	\item $a_1 b_1 \prec a_2 b_2$;
	\item $a_{1} b_1 = a_2 b_2$ and $a_1 \prec a_2$;
	\item $a_1 b_1 = a_2 b_2$ and $a_1 = a_2$ and $i_{1} \leq i_{2}$.
\end{enumerate}

In the following, we fix a module ordering $\preceq_\M$ on $\MFr$. 
Then, using the $K$-vector space basis $\MFr$, every non-zero $\alpha \in \Fr$ can be uniquely written as $\alpha = c_1 \mu_1 + \dots + c_d \mu_d$ with $c_1,\dots,c_d \in K\setminus\{0\}$ and $\mu_1,\dots,\mu_d \in \MFr$ such that $\mu_1 \succ_\M \dots \succ_\M \mu_d$.

\begin{definition}
Let $\alpha = c_1 \mu_1 + \dots + c_d \mu_d$ with $c_1,\dots,c_d \in K\setminus\{0\}$ and $\mu_1,\dots,\mu_d \in \MFr$ such that $\mu_1 \succ_\M \dots \succ_\M \mu_d$.
Then, $\mu_1$ is called the \emph{signature} of $\alpha$, denoted by $\s(\alpha)$.
Analogously to the case of polynomials, we also define the \emph{signature coefficient} $\sigc(\alpha)$ of $\alpha$ to be the coefficient appearing in front of $\s(\alpha)$,
that is, $\sigc(\alpha) = c_1$.
Furthermore, the \emph{signature term} $\st(\alpha)$ of $\alpha$ is $\st(\alpha) = \sigc(\alpha)\cdot \s(\alpha)$.
\end{definition}

As for polynomials, we use the convention that the signature coefficient and the signature term of the zero element in $\Fr$ are $0$.
The signature of the zero element remains undefined.
By a slight abuse of notation, for a set $F \subseteq \Fr$, we denote by $\s(F)$ the set of signatures of all non-zero $\alpha \in F$, i.e., $\s(F) = \{ \s(\alpha) \mid 0 \neq \alpha \in F\}$.

\begin{remark}
\chnote{Added this remark and the lemma afterwards.}
Note that we define the signature $\s(\alpha)$ \emph{of a module element} $\alpha \in \Fr$.
This is different to the original definition of a signature.
Initially, signatures were looked at from a polynomial point of view and therefore signatures \emph{of polynomials} were defined~\cite{Faugere-2002-F5}.
The notion of signature which we use was first introduced in~\cite{G2V}.
\end{remark}

One immediate consequence of the definition of signatures is the following lemma.

\begin{lemma}
Let $\alpha \in \Fr$ be non-zero and let $a,b \in \<X>$. 
Then, $\s(a \alpha b)  = a \s(\alpha) b$.
\end{lemma}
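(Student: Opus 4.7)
The plan is to use the unique representation of $\alpha$ in the $K$-basis $\MFr$, apply left- and right-multiplication termwise, and then verify that this yields the unique ordered representation of $a\alpha b$.

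First, I would write $\alpha = c_1 \mu_1 + \dots + c_d \mu_d$ with $c_i \in K\setminus\{0\}$ and $\mu_1 \succ_\M \dots \succ_\M \mu_d$, so that $\s(\alpha) = \mu_1$. Multiplying on both sides gives
\[
  a\alpha b = c_1 (a\mu_1 b) + \dots + c_d (a\mu_d b).
\]
Each $\mu_i$ has the form $u_i \varepsilon_{j_i} v_i$ for some $u_i,v_i\in\<X>$ and $1\le j_i\le r$, so $a\mu_i b = (au_i)\varepsilon_{j_i}(v_i b) \in \MFr$; in particular each $a\mu_i b$ is again a module monomial. Hence the display above is an expansion of $a\alpha b$ in the basis $\MFr$, and it suffices to show that the module monomials $a\mu_i b$ are pairwise distinct and that $a\mu_1 b$ is the largest among them with respect to $\preceq_\M$.

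For distinctness, I would invoke the cancellativity of the free monoid $\<X>$: if $a\mu_i b = a\mu_j b$, then $(au_i,j_i,v_i b) = (au_j,j_j,v_j b)$ componentwise, and left/right cancellation in $\<X>$ yields $u_i = u_j$, $v_i = v_j$, and of course $j_i=j_j$, so $\mu_i = \mu_j$. For the ordering, compatibility of $\preceq_\M$ with scalar multiplication gives $\mu_i \succeq_\M \mu_j \Rightarrow a\mu_i b \succeq_\M a\mu_j b$, so from $\mu_1 \succ_\M \mu_i$ for $i \geq 2$ we get $a\mu_1 b \succeq_\M a\mu_i b$. Combined with the distinctness established above and the fact that $\preceq_\M$ is a total (indeed well-) order, this non-strict inequality upgrades to $a\mu_1 b \succ_\M a\mu_i b$.

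The subtle point — and the only one worth dwelling on — is precisely this upgrade from the non-strict inequality provided by compatibility to a strict inequality; it is not purely formal, and genuinely relies on the cancellativity of $\<X>$. Once it is in place, uniqueness of the ordered representation in $\MFr$ gives $\s(a\alpha b) = a\mu_1 b = a\s(\alpha)b$, as desired.
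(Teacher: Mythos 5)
Your proof is correct. The paper states this lemma without proof, presenting it as an ``immediate consequence'' of the definitions, so there is no paper argument to compare against; what you have written supplies exactly the detail being glossed over: expand $\alpha$ in the basis $\MFr$, multiply termwise, check that the resulting monomials are pairwise distinct, and check that $a\mu_1 b$ remains the maximum. Your emphasis on the only non-trivial step is well placed: the paper's definition of a module ordering only asserts the weak implication $\mu \preceq_\M \mu' \Rightarrow a\mu b \preceq_\M a\mu'b$, and upgrading this to a strict inequality for $\mu_1 \succ_\M \mu_i$ really does require the injectivity of $\mu \mapsto a\mu b$, i.e.\ two-sided cancellativity of the free monoid. (Equivalently, one could observe that for a total order, compatibility plus injectivity of $\mu\mapsto a\mu b$ forces strict monotonicity.) One minor remark: the distinctness of the $a\mu_i b$ is needed not only to upgrade the inequality but also to ensure that the displayed expansion of $a\alpha b$ is already in reduced form in the basis $\MFr$, with no cancellation among terms — you establish this, but it is worth stating that this is the second place where cancellativity earns its keep.
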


\begin{definition}
Let $M \subseteq \Fr$ be a $K\<X>$-submodule.
A subset $G \subseteq M\setminus\{0\}$ is called a \emph{Gr\"obner basis} of $M$ if
\[
	\s(M) = \{a \s(\gamma) b \mid a,b \in \<X>, \gamma \in G\}.
\]
\end{definition}

\begin{remark}
  Gr\"obner bases of submodules can also be defined using a notion of reduction.
  The two definitions are equivalent, and the proof is the same as in the classical case of ideals.
  In our case, the present definition will be the more useful one for modules.
\end{remark}

\section{Signature and labelled Gr\"obner bases}
\label{sec:sign-labell-grobn}
The aim of this section is to introduce the notion of \emph{signature Gr\"obner bases} of ideals in the free algebra. 
As an intermediate notion, we define the concept of \emph{labelled Gr\"obner bases}, which are Gröbner bases keeping track of the construction of each element in terms of the generators.
As in the commutative case, they are defined using a more restrictive notion of polynomial reduction, called \emph{$\s$-reduction}.
Moreover, we also define and characterize noncommutative \emph{minimal} signature/labelled Gr\"obner bases.
We note that all notions introduced here are straightforward generalization of the same notions for commutative polynomials.
The key differences between the commutative and noncommutative case will become apparent in Section~\ref{sec:comp-sign-grobn}.

For the rest of this paper, we fix a finite indexed family of polynomials $(f_1,\dots,f_{r}) \in K\<X>^{r}$ generating an ideal $I = (f_1,\dots,f_r)$.
Furthermore, we fix a monomial ordering $\preceq$ on $\<X>$ and a module ordering $\preceq_\M$ on $\MFr$.
We additionally require the following two conditions:
\begin{enumerate}
	\item $\preceq$ and $\preceq_\M$ have to be \emph{compatible} in the sense that 
	\[
		a \prec b \iff a \varepsilon_i \prec_\M b \varepsilon_i \iff \varepsilon_i a \prec_\M \varepsilon_i b
	\]
	for all $a,b \in \<X>$ and $i = 1,\dots,r$.
	\item $\preceq_\M$ has to be \emph{fair}, meaning that the set $\{ \mu' \in \MFr \mid \mu' \prec_\M \mu\}$ has to be finite for all $\mu \in \MFr$.
\end{enumerate}

\begin{example}
  The module ordering $\preceq_{\mathbf{top}}$ with underlying monomial ordering $\preceq_{\textup{deglex}}$ is fair.
  Furthermore, in that case, the orderings $\preceq_{\textup{deglex}}$ and $\preceq_{\textbf{top}}$ are compatible.
  On the other hand, if the rank of the module is at least $2$, any position-over-term module ordering $\preceq_{\mathbf{pot}}$, that is, any ordering where $i_1 < i_2$ implies $a_1\varepsilon_{i_1}b_1 \prec_{\mathbf{pot}} a_2\varepsilon_{i_2}b_2$ for all $a_1,b_1,a_2,b_2 \in \langle X \rangle$, is not fair.
  Indeed, the set $\{\mu \in \MFr \mid \mu \prec_{\mathbf{pot}} \varepsilon_{2}\}$ is infinite.
\end{example}

\chnote{first version. needs to be polished}
We note that the requirement for a fair module ordering is a particularity of the noncommutative case.
To compute Gr\"obner bases (without signatures) in the free algebra using Buchberger's algorithm, a so-called \emph{fair} selection strategy has to be used.
Such a selection strategy ensures that every S-polynomial that is formed is eventually processed.
Using a non-fair selection strategy can cause the algorithm to run indefinitely, even if the ideal admits a finite Gr\"obner basis (w.r.t.\ the used monomial ordering).
Furthermore, in such cases, the infinite set produced by the algorithm fails to be a Gr\"obner basis, see~Example~47.6.27 and the subsequent discussion in~\cite{Mor16}.

Transferring the idea of a fair selection strategy to the case of signature-based algorithms leads to our definition of a fair module ordering.
Using such a fair ordering guarantees that the selection strategy used by our algorithm is fair. 
It is not clear whether or to what extent this requirement can be weakened.

From now on, we shall denote both orders $\preceq_\M$ and $\preceq$ by the same symbol $\preceq$.
It will be clear from the context which ordering is meant, as we will denote elements from $\Fr$ by Greek letters and elements from $K\<X>$ by Roman letters.
We note that all results that follow from here on depend (implicitly) on $\preceq$ and are to be understood w.r.t.\ our fixed monomial and module ordering.

Elements in the free $K\<X>$-bimodule $\Fr$ encode elements of the ideal $I$ via the $K\<X>$-module homomorphism
\begin{align*}
	\overline{\cdot} : \Fr \to K\<X>, \quad \alpha = \sum_i c_i a_i \varepsilon_{j_i} b_i \mapsto \overline \alpha \coloneqq \sum_i c_i a_i f_{j_i} b_i,
\end{align*}
with $a_i, b_i \in \<X>$ and $j_i \in \{1,\dots,r\}$.

We adapt the notation from~\cite{Sun11} and denote by $f^{[\alpha]}$ a pair $(f,\alpha) \in K\<X> \times \Fr$ with $f = \overline \alpha$.
We refer to $f^{[\alpha]}$ as a \emph{labelled polynomial}. 
By the definition of $\overline{\cdot}$, we always have $f^{[\alpha]} \in I \times \Fr$.
Furthermore, as done in~\cite{Sun11}, we denote by $f^{(\sigma)}$ a pair $(f,\sigma) \in K\<X> \times \MFr$ such that there exists $f^{[\alpha]} \in I^{[\Sigma]}$ with $\s(\alpha) = \sigma$.
We call such a pair a \emph{signature polynomial}.
Additionally, we denote by $I^{[\Sigma]}$ and $I^{(\Sigma)}$ the set of all labelled polynomials, respectively, the set of all signature polynomials, that is,
\begin{align*}
	I^{[\Sigma]} &\coloneqq \{f^{[\alpha]} \mid \alpha \in \Fr, f = \overline \alpha\} \subseteq I \times \Fr;\\
	I^{(\Sigma)} &\coloneqq \{f^{(\sigma)}\mid \exists f^{[\alpha]} \in I^{[\Sigma]}, \s(\alpha) = \sigma\}  \subseteq I \times M(\Sigma).
\end{align*}

\begin{remark}
We note that different families of generators of the same ideal $I \subseteq K\<X>$ always lead to different sets $I^{[\Sigma]}$.
To be more precise, given $I^{[\Sigma]}$, the family of generators of $I$ used in the construction can be recovered: $f_{i}$ is the polynomial part of the element $f_{i}^{[\varepsilon_{i}]}$ in $I^{[\Sigma]}$.

It is still true that different families of generators of $I$ can lead to different sets $I^{(\Sigma)}$, but it is not necessarily the case.
So to be precise, we should only speak of the set of signature polynomials $I^{(\Sigma)}$ \emph{w.r.t.\ the family of generators $(f_1,\dots,f_{r})$}.
However, whenever the generators $f_{1},\dots,f_{r}$ are clear from the context, we shall omit this part and only speak of $I^{(\Sigma)}$. 
\end{remark}

\chnote{Added this paragraph to make distinction clearer}
The motivation for using the notation $f^{[\alpha]}$ (resp.\ $f^{(\sigma)}$) is that the polynomial $f$, rather than the module element $\alpha$ (resp.\ the module monomial $\sigma$), is the main object of interest.
By abuse of language, we call $\s(\alpha)$ (resp.\ $\sigma$) the signature of the labelled polynomial $f^{[\alpha]}$ (resp.\ of the signature polynomial $f^{(\sigma)}$) and we denote it by $\s(f^{[\alpha]})$ (resp.~$\s(f^{(\sigma)})$).

The reason for introducing both labelled polynomials and signature polynomials is that the former allow to present the theory in a simpler fashion and lead to simpler proofs.
However, in an actual implementation of a signature-based algorithm, keeping track of the full module representation stored in a labelled polynomial causes a significant overhead in terms of memory consumption and overall computation time. 
Fortunately, we will see that all theoretical results only depend on information encoded in signature polynomials.
Consequently, when implementing a signature-based algorithm, one would only work with signature polynomials. 
This reduces the computational overhead. 
Additionally, we note that the reconstruction techniques discussed in Section~\ref{sec:reconstruction} allow to efficiently recover all information encoded in labelled polynomials from signature polynomials. 

In what follows, we will sometimes work with sets of polynomials, sometimes with sets of labelled polynomials and sometimes with sets of signature polynomials.
To be able to better distinguish between these cases, we denote subsets of $K\<X>$ by capital letters (e.g.\ $G \subseteq K\<X>$) and subsets of 
$I^{[\Sigma]}$ and $I^{(\Sigma)}$ by capital letters with the additional exponent  ${}^{[\Sigma]}$ and ${}^{(\Sigma)}$ respectively (e.g.~$G^{[\Sigma]} \subseteq I^{[\Sigma]}$ and $G^{(\Sigma)} \subseteq I^{(\Sigma)}$).

Computations in $I^{[\Sigma]}$ can be defined naturally. 
In particular, for $f^{[\alpha]}, g^{[\beta]}\in I^{[\Sigma]}$, $c \in K$ and $a,b \in \<X>$ we have
\begin{itemize}
	\item $f^{[\alpha]} + g^{[\beta]} = (f+g)^{[\alpha + \beta]}$, and
	\item $caf^{[\alpha]}b = (cafb)^{[ca\alpha b]}.$
\end{itemize}
With these operations the set $I^{[\Sigma]}$ becomes a $K\<X>$-bimodule.
Note that $f^{[\alpha]} = g^{[\beta]}$ if and only if $f = g$ and $\alpha = \beta$.



We call an element $\sigma \in \Fr$ a \emph{syzygy} if $\overline \sigma = 0$.
This corresponds to the element $0^{[\sigma]} \in I^{[\Sigma]}$.
The set of all syzygies of $f_1,\dots,f_r$ is denoted by $\Syz(f_1,\dots,f_r)$ and forms a $K\<X>$-submodule of $\Fr$.
For all labelled polynomials $f^{[\alpha]}, g^{[\beta]} \in I^{[\Sigma]}$ and any monomial $m \in \<X>$ we obtain with
$\sigma = \alpha m g -  f m \beta$ a so-called \emph{trivial syzygy} between $f^{[\alpha]}$ and $g^{[\beta]}$.

In order to discuss signature Gr\"obner bases, we need to adapt the notion of polynomial reduction to labelled polynomials.
This leads to the following definition of \emph{$\s$-reduction}.

\begin{definition}
Let $f^{[\alpha]}, f'^{[\alpha']}, g^{[\gamma]} \in I^{[\Sigma]}$ with $\alpha, g \neq 0$.
We say that $f^{[\alpha]}$ \emph{$\s$-reduces} to $f'^{[\alpha']}$ by $g^{[\gamma]}$ if there exist $a,b \in \<X>$ such that
\begin{itemize}
	\item $a \lm(g) b \in \supp(f)$, 
	\item $\s(a \gamma b) \preceq \s(\alpha)$, and
	\item $f'^{[\alpha']} = f^{[\alpha]} - \frac{\text{coeff}(f,a \lm(g) b)}{\lc(g)}a g^{[\gamma]} b$.
\end{itemize}
In this case, we write 
$f^{[\alpha]} \rightarrow_{g^{[\gamma]}} f'^{[\alpha']}$.

\tvnote{Moved those definitions here (and fixed them).}
If $a\lm(g)b = \lm(f)$, then the $\s$-reduction is called a \emph{top $\s$-reduction}.
Otherwise it is called a \emph{tail $\s$-reduction}.
The $\s$-reduction is called \emph{regular} if $\s(a\gamma b) \prec \s(\alpha)$.
Otherwise, that is, if $\s(a\gamma b) = \s(\alpha)$, it is called \emph{singular}.
\end{definition}

\tvnote{Added remarks on the operations using only the signatures}
\begin{remark}
In terms of polynomials, the first condition means that we can do usual polynomial reduction.
This implies that either $f' = 0$ or $\lm(f') \preceq \lm(f)$.
The second condition ensures that this inequality also transfers over to $\Fr$, i.e., that either $\alpha' = 0$ or $\s(\alpha') \preceq \s(\alpha)$.
Furthermore, if the $\s$-reduction is regular, then $\s(\alpha) = \s(\alpha')$.

Those observations allow to generalize the definition to signature polynomials.
More precisely, given $f^{(\alpha)}, g^{(\gamma)} \in I^{(\Sigma)}$, it is possible to test whether $f^{(\alpha)}$ is $\s$-reducible by $g^{(\gamma)}$.
Furthermore, if the $\s$-reduction is regular, it is possible to compute its remainder as a signature polynomial.
\end{remark}

A set of labelled polynomials $G^{[\Sigma]} \subseteq I^{[\Sigma]}$ induces a reduction relation $\rightarrow_{G^{[\Sigma]}} \subseteq I^{[\Sigma]} \times I^{[\Sigma]}$ by defining $f^{[\alpha]} \rightarrow_{G^{[\Sigma]}} f'^{[\alpha']}$ if there exists $g^{[\gamma]}\in G^{[\Sigma]}$ such that $f^{[\alpha]} \rightarrow_{g^{[\gamma]}} f'^{[\alpha']}$.
Furthermore, as in the case of polynomials, we denote by $\overset{*}{\rightarrow}_{G^{[\Sigma]}}$ the reflexive, transitive closure of $\rightarrow_{G^{[\Sigma]}}$.

If $f^{[\alpha]}$ $\s$-reduces to some $f'^{[\alpha']}$ with $f' = 0$, or in other words if $\alpha'$ is a syzygy, we say that $f^{[\alpha]}$ \emph{$\s$-reduces to zero}.

We capture some useful facts about $\s$-reduction that will be needed later.
\tvnote{Moved this down}
This first lemma follows immediately from the definition.
\begin{lemma}\label{lemma s-reduction}
If $f^{[\alpha]} \overset{*}{\rightarrow}_{G^{[\Sigma]}} f'^{[\alpha']}$, then $f \overset{*}\rightarrow_G f'$, where $G =  \{g \mid g^{[\gamma]} \in G^{[\Sigma]}\}$.
\end{lemma}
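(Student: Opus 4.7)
The plan is to proceed by induction on the length $n$ of the reduction chain $f^{[\alpha]} \overset{*}{\rightarrow}_{G^{[\Sigma]}} f'^{[\alpha']}$. As noted in the preceding remark, an $\s$-reduction step is just a polynomial reduction step subject to the extra condition $\s(a\gamma b) \preceq \s(\alpha)$, so the result should follow simply by forgetting the module data.

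For the base case $n=0$, we have $f^{[\alpha]} = f'^{[\alpha']}$, so in particular $f = f'$ and $f \overset{*}{\rightarrow}_G f'$ trivially. For the inductive step, it suffices to check that a single $\s$-reduction step $f^{[\alpha]} \rightarrow_{g^{[\gamma]}} f'^{[\alpha']}$ projects onto a single polynomial reduction $f \rightarrow_g f'$. Unfolding the definition of $\s$-reduction, there exist $a,b \in \langle X \rangle$ such that $g \neq 0$, $a\lm(g)b \in \supp(f)$, and
\[
    f'^{[\alpha']} = f^{[\alpha]} - \frac{\coeff(f, a\lm(g)b)}{\lc(g)}\, a g^{[\gamma]} b.
\]
Taking the polynomial part on both sides (which is compatible with the $K\langle X \rangle$-bimodule operations on $I^{[\Sigma]}$ defined above), we obtain
\[
    f' = f - \frac{\coeff(f, a\lm(g)b)}{\lc(g)}\, a g b,
\]
which together with $g \neq 0$ and $a\lm(g)b \in \supp(f)$ is precisely the definition of $f \rightarrow_g f'$. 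Since $g \in G$ by construction of $G$, we conclude $f \rightarrow_G f'$, and the inductive hypothesis yields the rest of the chain.

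There is no real obstacle here: the $\s$-reduction definition was built on top of ordinary polynomial reduction by adding the compatibility condition on signatures and the bookkeeping of module elements, so forgetting both recovers exactly polynomial reduction. This is why the authors describe the lemma as immediate from the definitions.
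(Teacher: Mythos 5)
Your proof is correct and matches the paper's (implicit) argument: the paper states the lemma follows immediately from the definitions, and your induction on the chain length together with the observation that a single $\s$-reduction step projects to a polynomial reduction step is exactly the expansion of that remark.
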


\tvnote{Added this}
The following lemma appears to be folklore, it is for example used in the proof of \cite[Lemma~9]{roune2012practical}. We include a proof for completeness.
\begin{lemma}\label{lemma top s-reducible sum}
Let $G^{[\Sigma]} \subseteq I^{[\Sigma]}$ and let $f^{[\alpha]}, g^{[\beta]} \in I^{[\Sigma]}$ be both top $\s$-reducible by $G^{[\Sigma]}$ with $\s(\alpha) \preceq \s(\alpha + \beta)$ and $\s(\beta) \preceq \s(\alpha + \beta)$.
Then, $f^{[\alpha]} + g^{[\beta]}$ is also top $\s$-reducible by $G^{[\Sigma]}$ or $\lt(f) + \lt(g) = 0$.
\end{lemma}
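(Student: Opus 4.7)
The plan is to carry out a case analysis on the relationship between $\lm(f)$ and $\lm(g)$, using the fact that a top $\s$-reducer of $f^{[\alpha]}$ or $g^{[\beta]}$ is the natural candidate for top $\s$-reducing the sum. To set up, I would fix witnesses $h_1^{[\gamma_1]}, h_2^{[\gamma_2]} \in G^{[\Sigma]}$ and $a_1, b_1, a_2, b_2 \in \<X>$ with $a_1 \lm(h_1) b_1 = \lm(f)$, $a_2 \lm(h_2) b_2 = \lm(g)$, $\s(a_1 \gamma_1 b_1) \preceq \s(\alpha)$ and $\s(a_2 \gamma_2 b_2) \preceq \s(\beta)$. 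The goal is to show that one of these two witnesses still top $\s$-reduces $(f+g)^{[\alpha+\beta]}$ whenever $\lt(f)+\lt(g) \neq 0$.

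First I would handle the case $\lm(f) \neq \lm(g)$, where without loss of generality $\lm(f) \succ \lm(g)$. Then $\lm(f+g) = \lm(f)$ lies in $\supp(f+g)$ with coefficient $\lc(f) \neq 0$, so the rewrite induced by $h_1^{[\gamma_1]}$ with prefactors $a_1, b_1$ is a polynomial top reduction of $f+g$. The required signature side-condition $\s(a_1 \gamma_1 b_1) \preceq \s(\alpha+\beta)$ follows by chaining $\s(a_1 \gamma_1 b_1) \preceq \s(\alpha) \preceq \s(\alpha+\beta)$ using the hypothesis of the lemma.

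For the remaining case $\lm(f) = \lm(g)$, either $\lc(f)+\lc(g) = 0$, in which case $\lt(f)+\lt(g) = 0$ and the alternative conclusion holds directly, or $\lc(f)+\lc(g) \neq 0$, in which case $\lm(f+g) = \lm(f)$ with nonzero coefficient, and the same argument as in the previous paragraph, applied to $h_1^{[\gamma_1]}$, produces a top $\s$-reduction of $(f+g)^{[\alpha+\beta]}$. The proof is thus elementary bookkeeping on leading monomials; the only point where the hypothesis of the lemma is genuinely used is to lift each witness' signature inequality from $\s(\alpha)$ (or $\s(\beta)$) up to $\s(\alpha+\beta)$, and no further machinery is required.
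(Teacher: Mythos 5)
Your proof is correct and is essentially the paper's own argument: set $\lm(f+g) = \max\{\lm(f),\lm(g)\}$ when $\lt(f)+\lt(g)\neq 0$, observe that the witness top $\s$-reducer of whichever of $f^{[\alpha]}, g^{[\beta]}$ has the larger leading monomial still matches the top term of $f+g$, and then lift the signature bound from $\s(\alpha)$ (or $\s(\beta)$) to $\s(\alpha+\beta)$ using the hypotheses. The paper phrases it in one step with ``w.l.o.g.\ $\lm(f)\succeq\lm(g)$'' rather than splitting the equal-leading-monomial case out explicitly, but the content is identical.
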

\begin{proof}
  Assume that $\lt(f) + \lt(g) \neq 0$.
  Then, $\lm( f + g) = \max \{ \lm(f), \lm(g)\}$.
  W.l.o.g.\ assume that $\lm(f) \succeq \lm(g)$, so that $\lm( f + g) = \lm(f)$.
  Since $f^{[\alpha]}$ is top $\s$-reducible by $G^{[\Sigma]}$ and $\s(\alpha) \preceq \s(\alpha + \beta)$, the labelled polynomial $f^{[\alpha]} + g^{[\beta]}$ is also top $\s$-reducible by $G^{[\Sigma]}$.
  In fact, the same element from $G^{[\Sigma]}$ can be used to top $\s$-reduce both $f^{[\alpha]}$ and $f^{[\alpha]} + g^{[\beta]}$.
\end{proof}

The outcome of classical polynomial reduction depends on more than just the leading term of the polynomial that is reduced.
Polynomials which share the same leading term can still reduce to different elements.
In case of regular $\s$-reductions, certain assumptions on the set of reducers $G^{[\Sigma]}$ imply that 
all labelled polynomials with the same signature term yield the same regular $\s$\nobreakdash-reduced result.
This fact is captured in the following lemma which is a noncommutative analogue of~\cite[Lemma~2]{roune2012practical}.

\begin{lemma}\label{lemma same signature}
Let $f^{[\alpha]}, g^{[\beta]} \in I^{[\Sigma]}$ be such that $\st(\alpha) = \st(\beta)$.
Furthermore, assume that all $u^{[\mu]} \in  I^{[\Sigma]}$ with $\s(\mu) \prec \s(\alpha)$ $\s$-reduce to zero by $G^{[\Sigma]}$.
Then, the following hold.
\begin{itemize}
  \item If $f^{[\alpha]}$ and $g^{[\beta]}$ are regular $\s$-reduced, then $f=g$.
  \item If $f^{[\alpha]}$ and $g^{[\beta]}$ are regular top $\s$-reduced, then $\lt(f)=\lt(g)$.
\end{itemize}
\end{lemma}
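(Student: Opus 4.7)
The plan is to compare $f^{[\alpha]}$ and $g^{[\beta]}$ by examining their difference $h^{[\gamma]} := f^{[\alpha]} - g^{[\beta]}$, where $h = f - g$ and $\gamma = \alpha - \beta$. Since $\st(\alpha) = \st(\beta)$, the signature-leading terms cancel, so either $\gamma = 0$, in which case $\alpha = \beta$ and therefore $f = g$, or $\s(\gamma) \prec \s(\alpha)$. In the latter case, the hypothesis applies and yields $h^{[\gamma]} \overset{*}{\to}_{G^{[\Sigma]}} 0$, and by Lemma~\ref{lemma s-reduction} we get $h \overset{*}{\to}_G 0$.

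Suppose now, for contradiction, that $f \neq g$, so $h \neq 0$. The crucial observation is that signatures are weakly decreasing along $\s$-reduction chains: each step replaces the current element $\gamma_{i}$ by $\gamma_{i} - c \cdot a \mu b$ with $\s(a \mu b) \preceq \s(\gamma_i)$, so by induction every intermediate element has signature $\preceq \s(\gamma) \prec \s(\alpha)$. Next, I would track $\lm(h)$ along the reduction. Reducing a term strictly below the current leading monomial can only introduce still smaller terms, so the leading monomial can strictly decrease only when it is itself being reduced. Since the chain ends at $0$, at some step the current leading monomial equals $\lm(h)$ and is reduced directly by some $u^{[\mu]} \in G^{[\Sigma]}$ with $a \lm(u) b = \lm(h)$ and $\s(a \mu b) \prec \s(\alpha)$.

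Now $\lm(h) \in \supp(f) \cup \supp(g)$, so the same $u^{[\mu]}$, with the same $a,b$, would effect a regular $\s$-reduction on either $f^{[\alpha]}$ or $g^{[\beta]}$, contradicting the assumption that both are regular $\s$-reduced. This establishes the first bullet. For the second bullet, I would run the same argument under the assumption $\lt(f) \neq \lt(g)$ and verify that this forces $\lm(h)$ to coincide with $\lm(f)$ or $\lm(g)$: if $\lm(f) \neq \lm(g)$, then $\lm(h) = \max(\lm(f),\lm(g))$; if $\lm(f) = \lm(g)$ but with distinct coefficients, the leading terms do not cancel and again $\lm(h) = \lm(f) = \lm(g)$. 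In either situation, the reducer found above now provides a regular \emph{top} $\s$-reduction of $f^{[\alpha]}$ or $g^{[\beta]}$, contradicting the top-regular assumption.

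The main obstacle I expect is the careful signature bookkeeping that justifies the step ``$\lm(h)$ is eventually reduced by some $u^{[\mu]}$ with $\s(a\mu b) \prec \s(\alpha)$'', since it requires combining two monotonicity properties: that $\s$-reduction can only weakly decrease the current signature, and that standard polynomial reduction can only weakly decrease the current leading monomial (and strictly only when the leading monomial is itself reduced). Once these are in place, translating the offending reducer back to $f^{[\alpha]}$ or $g^{[\beta]}$ is immediate, and the top-reduction case follows by checking which one of $f,g$ contributes $\lm(h)$ as its own leading monomial.
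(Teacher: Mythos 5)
Your proof is correct, and it spells out the argument that the paper itself leaves implicit: the paper's proof is just the one-line remark that the commutative argument of Roune--Stillman (Lemma~2) carries over to the noncommutative setting, and your route --- form the difference $h^{[\gamma]}$, observe $\st(\alpha)=\st(\beta)$ forces $\gamma=0$ or $\s(\gamma)\prec\s(\alpha)$, invoke the hypothesis to $\s$-reduce $h^{[\gamma]}$ to zero, and locate the first step of that chain that touches $\lm(h)$ to extract a reducer of $f^{[\alpha]}$ or $g^{[\beta]}$ with strictly smaller signature --- is precisely that argument. All the bookkeeping you flag as delicate (signatures weakly decrease along an $\s$-reduction chain since $\s(a\mu b)\preceq\s(\gamma_i)$; the leading monomial of the running polynomial changes only when it is itself reduced; $\lm(h)\in\supp(f)\cup\supp(g)$, and under the assumption $\lt(f)\neq\lt(g)$ one has $\lm(h)=\lm(f)$ or $\lm(h)=\lm(g)$) does go through in the noncommutative setting, so there is no gap.
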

\begin{proof}
  The proof of the commutative version of this lemma~\cite[Lemma~2]{roune2012practical} carries over to the noncommutative setting.
\end{proof}

%

Using the notion of $\s$-reduction, we now define a \emph{labelled Gr\"obner basis} of the module $I^{[\Sigma]}$.

\begin{definition}\label{def s-GB}
A set $G^{[\Sigma]} \subseteq I^{[\Sigma]}$ is a \emph{labelled Gr\"obner basis of $I^{[\Sigma]}$ up to signature $\sigma \in \MFr$} if all $f^{[\alpha]} \in I^{[\Sigma]}$ with $\s(\alpha) \prec \sigma$ $\s$-reduce to zero by $G^{[\Sigma]}$. 
Furthermore, $G^{[\Sigma]}$ is a \emph{labelled Gr\"obner basis of $I^{[\Sigma]}$} if all $f^{[\alpha]} \in I^{[\Sigma]}$ $\s$-reduce to zero by $G^{[\Sigma]}$. 
\end{definition}

\begin{remark}
Since different families of generators of the ideal $I$ lead to different modules $I^{[\Sigma]}$, they also lead to different labelled Gr\"obner bases (see also Example~\ref{ex infinite s-GB}).
\end{remark}

\chnote{Moved definition of sig GB up here}
We recall that we present all the relevant theory for our signature-based algorithm in terms of labelled polynomials keeping in mind that in an actual implementation one would work only with signature polynomials. 
Consequently, such an implementation would not compute a labelled Gr\"obner basis but instead a signature Gr\"obner basis as defined below.

\begin{definition}\label{def sig-labelled GB}
A set $G^{(\Sigma)} \subseteq I^{(\Sigma)}$ is a \emph{signature Gr\"obner basis of $I^{[\Sigma]}$ (up to signature $\sigma \in \MFr$)}
if there exists a labelled Gr\"obner basis $G^{[\Sigma]} \subseteq I^{[\Sigma]}$ (up to signature $\sigma$) such that 
\begin{align}\label{eq def sig labelled}
	G^{(\Sigma)} = \{g^{(\s(\gamma))} \mid g^{[\gamma]} \in G^{[\Sigma]}\}.
\end{align}
\end{definition}

\chnote{Added this paragraph.}
In the theoretical sections, which are this section and the next one,
we focus on the notion of labelled Gr\"obner bases and present all theoretical results in terms of this concept.
However, one should always keep in mind that all relevant results also transfer over to signature Gr\"obner bases.
In Section~\ref{sec:reconstruction}, we then shift our focus to a more application-oriented point of view and present our signature-based algorithm in terms of signature polynomials.
Additionally, we show how to reconstruct a labelled Gr\"obner basis from a signature Gr\"obner basis.

It follows directly from the definition that a labelled Gr\"obner basis of $I^{[\Sigma]}$ is by no means unique. 
In fact, if $G^{[\Sigma]} \subseteq I^{[\Sigma]}$ is a labelled Gr\"obner basis of $I^{[\Sigma]}$, then so is $G^{[\Sigma]} \cup \{ f^{[\alpha]} \}$ for every $f^{[\alpha]} \in I^{[\Sigma]}$.
Furthermore, the set $I^{[\Sigma]}$ is always a labelled Gr\"obner basis of $I^{[\Sigma]}$.
Thus, we can immediately deduce the following corollary.

\begin{corollary}
For every finite family of generators of an ideal $I \subseteq K\<X>$, the module $I^{[\Sigma]}$ has a (possibly infinite) labelled Gr\"obner basis.
\end{corollary}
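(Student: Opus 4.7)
The corollary is essentially a direct consequence of the observation highlighted immediately before its statement: $I^{[\Sigma]}$ itself is already a labelled Gr\"obner basis of $I^{[\Sigma]}$. So my plan would simply be to make that observation precise.

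First I would fix an arbitrary element $f^{[\alpha]} \in I^{[\Sigma]}$ and argue that it $\s$-reduces to zero by $I^{[\Sigma]}$. The argument splits into two cases. If $f = 0$, then $f^{[\alpha]}$ already has vanishing polynomial part, so the empty reduction chain witnesses $f^{[\alpha]} \overset{*}{\rightarrow}_{I^{[\Sigma]}} 0^{[\alpha]}$ (using the reflexive closure), and $\alpha$ being a syzygy means this counts as $\s$-reducing to zero in the sense defined after Lemma~\ref{lemma s-reduction}. If $f \neq 0$, I would use $f^{[\alpha]}$ itself as the reducer with $a = b = 1$: the condition $a\lm(f)b = \lm(f) \in \supp(f)$ is trivial, the signature condition $\s(a \alpha b) = \s(\alpha) \preceq \s(\alpha)$ holds with equality, and the reduction
\[
  f^{[\alpha]} - \frac{\coeff(f,\lm(f))}{\lc(f)}\,1 \cdot f^{[\alpha]}\cdot 1 = f^{[\alpha]} - f^{[\alpha]} = 0^{[0]}
\]
yields an element with vanishing polynomial part in a single step.

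Since every $f^{[\alpha]} \in I^{[\Sigma]}$ $\s$-reduces to zero by $I^{[\Sigma]}$, Definition~\ref{def s-GB} is satisfied and $I^{[\Sigma]}$ is a labelled Gr\"obner basis of itself, proving existence. There is no genuine obstacle here; the only thing to be slightly careful about is making sure the edge case $f = 0$ (where $\alpha$ could be a nonzero syzygy and so $f^{[\alpha]}$ is not $\s$-reducible in the strict sense) is handled by the reflexive closure in the definition of $\overset{*}{\rightarrow}_{G^{[\Sigma]}}$.
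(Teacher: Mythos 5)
Your proof is correct and takes essentially the same approach as the paper, which simply asserts (without spelling out the details) that $I^{[\Sigma]}$ is always a labelled Gr\"obner basis of itself and lets the corollary follow immediately. Your elaboration---splitting into the case $f = 0$ handled by the reflexive closure and the case $f \neq 0$ handled by a single (singular top) $\s$-reduction of $f^{[\alpha]}$ by itself with $a=b=1$---is exactly the right way to verify that claim against Definition~\ref{def s-GB}.
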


We also provide the following equivalent characterization of labelled Gr\"obner bases, which will turn out to be useful later.

\begin{lemma}\label{lemma equiv characterisation s-GB} 
A set $G^{[\Sigma]} \subseteq I^{[\Sigma]}$ is a labelled Gr\"obner basis of $I^{[\Sigma]}$ up to signature $\sigma \in \MFr$ if and only if every $f^{[\alpha]}\in I^{[\Sigma]}$ with $\s(\alpha) \prec \sigma$ is top $\s$-reducible by $G^{[\Sigma]}$.
Furthermore, $G^{[\Sigma]}$ is a labelled Gr\"obner basis of $I^{[\Sigma]}$ if and only if every $f^{[\alpha]}\in I^{[\Sigma]}$ is top $\s$-reducible by $G^{[\Sigma]}$.
\end{lemma}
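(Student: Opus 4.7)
The proof naturally splits into two implications, with the forward direction essentially bookkeeping and the backward direction driven by well-founded induction on the leading monomial.

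For the forward direction, assume $G^{[\Sigma]}$ is a labelled Gr\"obner basis of $I^{[\Sigma]}$ up to signature $\sigma$, take $f^{[\alpha]} \in I^{[\Sigma]}$ with $\s(\alpha) \prec \sigma$, and suppose $f \neq 0$ (the case $f = 0$ is degenerate and would need to be handled separately or excluded). By hypothesis there is a chain $f^{[\alpha]} = f_0^{[\alpha_0]} \to_{g_1^{[\gamma_1]}} f_1^{[\alpha_1]} \to \cdots \to_{g_n^{[\gamma_n]}} f_n^{[\alpha_n]}$ with $g_i^{[\gamma_i]} \in G^{[\Sigma]}$ and $f_n = 0$. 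Since tail $\s$-reductions do not touch the leading term, $\lm(f_{i-1})$ remains equal to $\lm(f)$ until the first index $i$ where a top $\s$-reduction occurs, at which point $a_i\lm(g_i)b_i = \lm(f_{i-1}) = \lm(f)$. By the remark preceding Lemma~\ref{lemma s-reduction}, $\s$-reduction is signature non-increasing, so $\s(\alpha_{i-1}) \preceq \s(\alpha)$ and hence $\s(a_i \gamma_i b_i) \preceq \s(\alpha_{i-1}) \preceq \s(\alpha)$. This is exactly the data needed to top $\s$-reduce $f^{[\alpha]}$ itself by $g_i^{[\gamma_i]}$.

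For the backward direction, I use well-founded induction on $\lm(f)$, exploiting that $\preceq$ is a well-ordering on $\<X>$. Let $f^{[\alpha]} \in I^{[\Sigma]}$ with $\s(\alpha) \prec \sigma$. If $f = 0$, then $f^{[\alpha]}$ has already $\s$-reduced to zero in zero steps. Otherwise, the hypothesis provides a top $\s$-reduction $f^{[\alpha]} \to_{g^{[\gamma]}} f'^{[\alpha']}$ with $g^{[\gamma]} \in G^{[\Sigma]}$. Eliminating the leading term of $f$ gives either $f' = 0$ or $\lm(f') \prec \lm(f)$, and in both cases $\s(\alpha') \preceq \s(\alpha) \prec \sigma$. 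By the inductive hypothesis, $f'^{[\alpha']} \overset{*}{\rightarrow}_{G^{[\Sigma]}} 0$, and concatenating yields $f^{[\alpha]} \overset{*}{\rightarrow}_{G^{[\Sigma]}} 0$.

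The unrestricted part of the statement follows from the same argument applied without the signature bound, or equivalently by taking $\sigma$ above the signature of any fixed element; the well-foundedness of the induction does not require any upper bound on signatures. The main technical hurdle is the forward direction: one must combine three observations (tail reductions preserve the leading monomial, $\s$-reduction never raises the signature, and the signature condition of the reducer is preserved under weakening from $\s(\alpha_{i-1})$ to $\s(\alpha)$) to pull back a top reducer appearing arbitrarily deep in the chain onto $f^{[\alpha]}$ itself. Once this telescoping is made explicit, the remainder of the proof is routine.
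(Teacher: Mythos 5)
Your proof is correct and spells out precisely the standard adaptation that the paper only references (it cites the commutative argument of Becker--Weispfenning, Theorem~5.35, and says it ``carries over'' with $\s$-reduction in place of reduction): the backward direction is the usual well-founded induction on $\lm(f)$, and the forward direction pulls back the first top reduction in a chain to $0$, using that tail reductions preserve the leading monomial and that $\s$-reductions never increase the signature. The one caveat you flag about $f=0$ is a genuine (if minor) edge case in the statement as written, but it is present in the paper's version as well and is not a defect introduced by your argument.
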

\begin{proof}
The proof is an adaptation of the proof of the commutative version of this statement without signatures~\cite[Theorem~5.35]{Becker-1993}.
One only has to replace commutative polynomials by labelled (noncommutative) polynomials and polynomial reduction by $\s$\nobreakdash-reduction.
\end{proof}
\tvnote{Reference with signatures?}

The following proposition relates labelled Gr\"obner bases to Gr\"obner bases and is an immediate consequence of Lemma~\ref{lemma s-reduction}.
\begin{proposition}
Let $G^{[\Sigma]} \subseteq I^{[\Sigma]}$ be a labelled Gr\"obner basis of $I^{[\Sigma]}$.
Then, $\{g \mid g^{[\gamma]} \in G^{[\Sigma]}\} \subseteq K\<X>$ is a Gr\"obner basis of $I$.
\end{proposition}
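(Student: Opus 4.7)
The plan is to unpack the two definitions and invoke Lemma~\ref{lemma s-reduction} directly. Set $G = \{g \mid g^{[\gamma]} \in G^{[\Sigma]}\}$ and pick an arbitrary $f \in I$. I need to exhibit a reduction $f \overset{*}{\to}_G 0$.

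First, since $f \in I$, there exists a module element $\alpha \in \Fr$ with $\overline{\alpha} = f$ (for instance, write $f = \sum_i c_i a_i f_{j_i} b_i$ and set $\alpha = \sum_i c_i a_i \varepsilon_{j_i} b_i$). Thus $f^{[\alpha]} \in I^{[\Sigma]}$. Because $G^{[\Sigma]}$ is a labelled Gröbner basis of $I^{[\Sigma]}$, the definition guarantees that $f^{[\alpha]}$ $\s$-reduces to zero by $G^{[\Sigma]}$, meaning there exists $f'^{[\alpha']}$ with $f^{[\alpha]} \overset{*}{\to}_{G^{[\Sigma]}} f'^{[\alpha']}$ and $f' = 0$ (with $\alpha'$ a syzygy).

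Applying Lemma~\ref{lemma s-reduction} to this $\s$-reduction chain yields $f \overset{*}{\to}_G f' = 0$. Since $f \in I$ was arbitrary, this shows $G$ is a Gröbner basis of $I$.

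There is no real obstacle here: the two things to notice are (i) every element of $I$ lifts to a labelled polynomial in $I^{[\Sigma]}$, which is immediate from the surjectivity of $\overline{\cdot}$ onto $I$, and (ii) $\s$-reduction forgets signatures cleanly to ordinary reduction, which is exactly the content of Lemma~\ref{lemma s-reduction}. The only mild subtlety worth flagging is that $\s$-reducing to zero is defined as reducing to a labelled polynomial whose polynomial part is zero (not to the zero element of $I^{[\Sigma]}$), but this is precisely the form needed to conclude $f \overset{*}{\to}_G 0$.
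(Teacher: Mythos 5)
Your proof is correct and follows exactly the route the paper intends: the paper declares the proposition an immediate consequence of Lemma~\ref{lemma s-reduction}, and your argument simply spells out why, by lifting $f \in I$ to some $f^{[\alpha]} \in I^{[\Sigma]}$, $\s$-reducing to zero, and forgetting signatures via that lemma.
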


Although a labelled Gr\"obner basis $G^{[\Sigma]}$ of $I^{[\Sigma]}$ is not unique in general, we can demand certain additional properties from $G^{[\Sigma]}$ in order to at least obtain a labelled Gr\"obner basis which is as small as possible.
We call such a labelled Gr\"obner basis a \emph{minimal} labelled Gr\"obner basis of $I^{[\Sigma]}$.
\begin{definition}
Let $G^{[\Sigma]} \subseteq I^{[\Sigma]}$ be a labelled Gr\"obner basis of $I^{[\Sigma]}$ (up to signature $\sigma \in \MFr$).
Then, $G^{[\Sigma]}$ is called a \emph{minimal} labelled Gr\"obner basis of $I^{[\Sigma]}$ (up to signature $\sigma$) if no $g^{[\gamma]} \in G^{[\Sigma]}$ can be top $\s$-reduced by $G^{[\Sigma]} \setminus \{g^{[\gamma]}\}$.
\end{definition}

We also extend this definition to signature Gr\"obner bases.

\begin{definition}
Let $G^{(\Sigma)} \subseteq I^{(\Sigma)}$ be a signature Gr\"obner basis of $I^{[\Sigma]}$ (up to signature $\sigma \in \MFr$).
Furthermore, let $G^{[\Sigma]} \subseteq I^{[\Sigma]}$ be a labelled Gr\"obner basis (up to signature $\sigma$) such that 
\begin{align}
	G^{(\Sigma)} = \{g^{(\s(\gamma))} \mid g^{[\gamma]} \in G^{[\Sigma]}\}.
\end{align}
Then, $G^{(\Sigma)}$ is called \emph{minimal} if $G^{[\Sigma]}$ is minimal.
\end{definition}

A minimal labelled Gr\"obner basis is minimal in the following sense.

\begin{proposition}\label{prop minimal s-GB}
Let $G^{[\Sigma]}\subseteq I^{[\Sigma]}$ be a minimal labelled Gr\"obner basis and $H^{[\Sigma]}\subseteq I^{[\Sigma]}$ be a labelled Gr\"obner basis of $I^{[\Sigma]}$.
Then, for every $g^{[\gamma]} \in G^{[\Sigma]}$ there exists $h^{[\delta]} \in H^{[\Sigma]}$ such that
\[
	 \lm(g) = \lm(h) \quad \text{ and }\quad  \s(\gamma) = \s(\delta).
\]
\end{proposition}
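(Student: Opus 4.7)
The plan is to use the equivalent characterization of labelled Gröbner bases (Lemma~\ref{lemma equiv characterisation s-GB}) symmetrically for $G^{[\Sigma]}$ and $H^{[\Sigma]}$, then invoke the minimality of $G^{[\Sigma]}$ and the cancellativity of the free monoid $\langle X \rangle$ to pin down the leading monomials and signatures exactly.

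First, I would fix $g^{[\gamma]} \in G^{[\Sigma]}$. Since $g^{[\gamma]} \in I^{[\Sigma]}$ and $H^{[\Sigma]}$ is a labelled Gröbner basis of $I^{[\Sigma]}$, Lemma~\ref{lemma equiv characterisation s-GB} yields some $h^{[\delta]} \in H^{[\Sigma]}$ and $a,b \in \langle X \rangle$ such that $a \lm(h) b = \lm(g)$ and $a \s(\delta) b \preceq \s(\gamma)$. Applying the same lemma in the other direction, $h^{[\delta]} \in I^{[\Sigma]}$ is top $\s$-reducible by $G^{[\Sigma]}$, so there exist $g'^{[\gamma']} \in G^{[\Sigma]}$ and $a',b' \in \langle X \rangle$ with $a' \lm(g') b' = \lm(h)$ and $a' \s(\gamma') b' \preceq \s(\delta)$.

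Next, I would chain these two reductions: we have $(aa') \lm(g') (b'b) = a \lm(h) b = \lm(g)$ and, using the lemma giving $\s(a\alpha b) = a\s(\alpha)b$, $\s(aa' \gamma' b'b) = aa' \s(\gamma') b'b \preceq a \s(\delta) b \preceq \s(\gamma)$. Hence $g'^{[\gamma']}$ top $\s$-reduces $g^{[\gamma]}$. The minimality of $G^{[\Sigma]}$ forbids this unless $g'^{[\gamma']} = g^{[\gamma]}$, and so in fact $g^{[\gamma]}$ top $\s$-reduces $h^{[\delta]}$ with the words $a',b'$.

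Finally, I combine the two containments $\lm(g) = a \lm(h) b = a a' \lm(g) b' b$. Since $\langle X \rangle$ is a free monoid, the equalities $a a' = 1$ and $b' b = 1$ force $a = a' = b = b' = 1$, so $\lm(h) = \lm(g)$. For the signatures, the chain $\s(\gamma) \succeq a \s(\delta) b = \s(\delta) \succeq a' \s(\gamma) b' = \s(\gamma)$ collapses to equality throughout, yielding $\s(\delta) = \s(\gamma)$. The only subtle point, and the main obstacle to guard against, is that the cancellation step genuinely requires the free structure of $\langle X \rangle$ (so that $aa' = 1$ implies $a = a' = 1$), together with the compatibility of $\preceq$ and $\preceq_\M$ that underlies the identity $\s(a\alpha b) = a \s(\alpha) b$ used repeatedly above.
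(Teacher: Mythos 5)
Your proof is correct and follows essentially the same route as the paper's: apply the labelled Gröbner basis property of $H^{[\Sigma]}$ to $g^{[\gamma]}$, then that of $G^{[\Sigma]}$ to the resulting $h^{[\delta]}$, chain the two top $\s$-reductions, and invoke minimality to force $g'^{[\gamma']}=g^{[\gamma]}$ and hence $a=a'=b=b'=1$. The only difference is cosmetic: you spell out the length argument in the free monoid justifying $aa'=1\Rightarrow a=a'=1$, which the paper leaves implicit.
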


\begin{proof}
Let $g^{[\gamma]} \in G^{[\Sigma]}$. 
Since $H^{[\Sigma]}$ is a labelled Gr\"obner basis, $g^{[\gamma]}$ can be $\s$-reduced to zero by $H^{[\Sigma]}$.
In particular, this means that $g^{[\gamma]}$ is top $\s$-reducible by $H^{[\Sigma]}$, that is, there exist $h^{[\delta]} \in H^{[\Sigma]}$ and $a,b \in \<X>$ such that
\[
	\lm(g) = \lm(ahb) \quad \text{ and } \quad \s(\gamma) \succeq \s(a\delta b).
\]
Similarly, since $G^{[\Sigma]}$ is also a labelled Gr\"obner basis, there exist $g'^{[\gamma']} \in G^{[\Sigma]}$ and $a',b' \in \<X>$ such that
\[
	 \lm(h) = \lm(a'g'b') \quad \text{ and }\quad  \s(\delta) \succeq \s(a' \gamma' b').
\]
Combining these two statements yields
\[
	\lm(g) = \lm(ahb) = \lm(aa' g' b' b) \quad \text{ and }\quad \s(\gamma) \succeq \s(a \delta b) \succeq \s(a a' \gamma' b' b).
\]
Now, if $g^{[\gamma]} \neq g'^{[\gamma']}$, then $g'^{[\gamma']}$ could be used to top $\s$-reduce $g^{[\gamma]}$ but this is a contradiction to the fact that $G^{[\Sigma]}$ is minimal.
So, $g^{[\gamma]} = g'^{[\gamma']}$, which implies that $a = a' = b = b' = 1$, and therefore,
\[
	\lm(g) = \lm(h) \quad \text{ and } \quad \s(\gamma) = \s(\delta).\qedhere
\]
\end{proof}

We note that, starting with a finite family of generators $(f_1,\dots,f_r) \in K\<X>^r$ of an ideal $I \subseteq K\<X>$, the module $I^{[\Sigma]}$ always has a minimal labelled Gr\"obner basis, which is finite if and only if $I^{[\Sigma]}$ has a finite labelled Gr\"obner basis w.r.t.\ the family of generators $f_1,\dots,f_r$.
This follows from the following proposition, which tells us that we can obtain a minimal labelled Gr\"obner basis from a labelled Gr\"obner basis by removing all elements that are top $\s$-reducible.

\begin{proposition}\label{prop:min-GB-removing}
Let $G^{[\Sigma]} \subseteq I^{[\Sigma]}$ be a labelled Gr\"obner basis of $I^{[\Sigma]}$ such that there exists $g^{[\gamma]} \in G^{[\Sigma]}$ which is top $\s$-reducible by $G^{[\Sigma]}\setminus\{g^{[\gamma]}\}$.
Then, $G^{[\Sigma]}\setminus\{g^{[\gamma]}\}$ is also a labelled Gr\"obner basis of $I^{[\Sigma]}$.
\end{proposition}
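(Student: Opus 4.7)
The plan is to invoke Lemma~\ref{lemma equiv characterisation s-GB}, which says that being a labelled Gröbner basis is equivalent to every element of $I^{[\Sigma]}$ being top $\s$-reducible. So I only need to show that every $f^{[\alpha]} \in I^{[\Sigma]}$ is top $\s$-reducible by $G'^{[\Sigma]} := G^{[\Sigma]}\setminus\{g^{[\gamma]}\}$.

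I would proceed as follows. Fix $f^{[\alpha]} \in I^{[\Sigma]}$. Since $G^{[\Sigma]}$ is a labelled Gröbner basis, $f^{[\alpha]}$ is top $\s$-reducible by some $g'^{[\gamma']} \in G^{[\Sigma]}$, using cofactors $u, v \in \<X>$. If $g'^{[\gamma']} \neq g^{[\gamma]}$, then $g'^{[\gamma']} \in G'^{[\Sigma]}$ and there is nothing to prove. Otherwise, the reducer is $g^{[\gamma]}$ itself, so $u\lm(g)v = \lm(f)$ and $\s(u\gamma v) \preceq \s(\alpha)$. By hypothesis, $g^{[\gamma]}$ is top $\s$-reducible by some $h^{[\delta]} \in G'^{[\Sigma]}$ with cofactors $a, b \in \<X>$, so $a\lm(h)b = \lm(g)$ and $\s(a\delta b) \preceq \s(\gamma)$. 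Substituting gives $(ua)\lm(h)(bv) = \lm(f)$, and the compatibility of $\preceq$ with scalar multiplication yields
\[
\s(ua \cdot \delta \cdot bv) = ua\,\s(\delta)\,bv \preceq u\,\s(\gamma)\,v = \s(u\gamma v) \preceq \s(\alpha),
\]
so $h^{[\delta]} \in G'^{[\Sigma]}$ top $\s$-reduces $f^{[\alpha]}$ with cofactors $ua$ and $bv$.

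For the statement ``up to signature $\sigma$'', the same argument works verbatim, since all the signatures involved stay below $\s(\alpha) \prec \sigma$. I don't anticipate any real obstacle: the key point is the transitivity of top $\s$-reducibility through the composed cofactors, which is immediate from the compatibility axiom of $\preceq$. The only subtlety is that the reducer $h^{[\delta]}$ used to remove $g^{[\gamma]}$ does not itself lie outside $G'^{[\Sigma]}$, but this is guaranteed by the assumption that $g^{[\gamma]}$ is top $\s$-reducible precisely by $G^{[\Sigma]}\setminus\{g^{[\gamma]}\}$.
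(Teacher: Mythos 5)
Your proof is correct and uses the same strategy as the paper: it reduces the claim to Lemma~\ref{lemma equiv characterisation s-GB} and shows that any top $\s$-reduction by $g^{[\gamma]}$ can be replaced, via composed cofactors, by a top $\s$-reduction by the element of $G^{[\Sigma]}\setminus\{g^{[\gamma]}\}$ that reduces $g^{[\gamma]}$. The paper states this composition step more briefly, but your explicit verification of the cofactor arithmetic and signature inequalities is exactly what the paper's one-line assertion relies on.
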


\begin{proof}
If $g^{[\gamma]}$ is top $\s$-reducible by $G^{[\Sigma]}\setminus\{g^{[\gamma]}\}$, then there exist $g'^{[\gamma']} \in G^{[\Sigma]}\setminus\{g^{[\gamma]}\}$ and $a,b \in \<X>$ such that
$\lm(g) = \lm(ag'b)$ and $\s(\gamma) \succeq \s(a \gamma' b)$.
So, every element $f^{[\alpha]}\in I^{[\Sigma]}$ which is top $\s$\nobreakdash-reducible by $g^{[\gamma]}$ is also top $\s$-reducible by $g'^{[\gamma']}$.
Consequently, it follows from Lemma~\ref{lemma equiv characterisation s-GB} that $G^{[\Sigma]}\setminus\{g^{[\gamma]}\}$ is also a labelled Gr\"obner basis of $I^{[\Sigma]}$.
\end{proof}

\begin{corollary}\label{cor finite s-GB}
  The module $I^{[\Sigma]}$ has a finite labelled Gr\"obner basis if and only if $I^{[\Sigma]}$ has a finite minimal labelled Gr\"obner basis.
  Furthermore, all minimal labelled Gröbner bases of $I^{[\Sigma]}$ have the same cardinality.
\end{corollary}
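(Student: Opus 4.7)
The plan is to separate the two claims and handle them in turn, using the previously established Propositions~\ref{prop minimal s-GB} and~\ref{prop:min-GB-removing}.

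\textbf{Equivalence in the first claim.} The implication ``finite minimal labelled Gröbner basis $\Rightarrow$ finite labelled Gröbner basis'' is immediate, since a minimal labelled Gröbner basis is by definition a labelled Gröbner basis. For the converse, I would start from a finite labelled Gröbner basis $G^{[\Sigma]}$ and apply Proposition~\ref{prop:min-GB-removing} iteratively: so long as some element $g^{[\gamma]} \in G^{[\Sigma]}$ is top $\s$-reducible by $G^{[\Sigma]}\setminus\{g^{[\gamma]}\}$, remove it; the result remains a labelled Gröbner basis. Since $G^{[\Sigma]}$ is finite, this process terminates after at most $|G^{[\Sigma]}|$ steps, and the output is, by construction, a minimal labelled Gröbner basis of $I^{[\Sigma]}$ which is finite.

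\textbf{Common cardinality of minimal labelled Gröbner bases.} Let $G^{[\Sigma]}$ and $H^{[\Sigma]}$ both be minimal labelled Gröbner bases of $I^{[\Sigma]}$. Applying Proposition~\ref{prop minimal s-GB} with $G^{[\Sigma]}$ as the minimal basis and $H^{[\Sigma]}$ as the labelled Gröbner basis gives, for each $g^{[\gamma]} \in G^{[\Sigma]}$, some $h^{[\delta]} \in H^{[\Sigma]}$ with $\lm(g) = \lm(h)$ and $\s(\gamma) = \s(\delta)$. This defines (via choice) a map $\varphi : G^{[\Sigma]} \to H^{[\Sigma]}$. Symmetrically, there is a map $\psi : H^{[\Sigma]} \to G^{[\Sigma]}$.

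\textbf{Injectivity of} $\varphi$ \textbf{(and of} $\psi$\textbf{).} This is the key technical point and the one I expect to require the most care. Suppose $g_1^{[\gamma_1]} \neq g_2^{[\gamma_2]}$ in $G^{[\Sigma]}$ satisfy $\varphi(g_1^{[\gamma_1]}) = \varphi(g_2^{[\gamma_2]}) = h^{[\delta]}$. Then $\lm(g_1) = \lm(g_2) = \lm(h)$ and $\s(\gamma_1) = \s(\gamma_2) = \s(\delta)$. Taking $a = b = 1$ in the definition of top $\s$-reduction, $g_1^{[\gamma_1]}$ is top $\s$-reducible by $g_2^{[\gamma_2]} \in G^{[\Sigma]}\setminus\{g_1^{[\gamma_1]}\}$, contradicting the minimality of $G^{[\Sigma]}$.

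\textbf{Conclusion.} Having injective maps in both directions, $|G^{[\Sigma]}| \leq |H^{[\Sigma]}|$ and $|H^{[\Sigma]}| \leq |G^{[\Sigma]}|$, so by Schröder--Bernstein the two cardinalities coincide (in particular, if one is finite, the other is finite with the same number of elements, which also yields a second proof that the process of removing top $\s$-reducible elements from a finite labelled Gröbner basis always produces a basis of the same size).
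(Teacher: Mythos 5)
Your proof is correct, and for the first claim it matches the paper's argument exactly: iterate Proposition~\ref{prop:min-GB-removing} on a finite labelled Gröbner basis until no element is top $\s$-reducible by the others.

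For the cardinality claim, your route is genuinely a bit different from the paper's, though both hinge on the same two ingredients: Proposition~\ref{prop minimal s-GB}, and the observation that in a minimal labelled Gröbner basis two distinct elements cannot share both their leading monomial and their signature (else one would top $\s$-reduce the other with $a=b=1$). The paper packages this by introducing the set $R_i = \{(\lm(g),\s(\gamma)) \mid g^{[\gamma]} \in G_i^{[\Sigma]}\}$, showing $R_1 = R_2$ via Proposition~\ref{prop minimal s-GB}, and then arguing that the natural surjection $G_i^{[\Sigma]} \to R_i$ is injective by the observation above; it handles the case of two infinite bases separately (they are both countably infinite, since $R_i$ is a subset of the countable set of pairs of monomials). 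You instead construct injections $\varphi : G^{[\Sigma]} \to H^{[\Sigma]}$ and $\psi : H^{[\Sigma]} \to G^{[\Sigma]}$ directly from Proposition~\ref{prop minimal s-GB} and close with Schröder--Bernstein. A small advantage of your version is that it treats the finite and infinite cases uniformly and avoids the paper's separate ``both infinite'' branch; a small cost is that for infinite bases you are implicitly using choice to define $\varphi$ and $\psi$, whereas the paper's map $g^{[\gamma]} \mapsto (\lm(g),\s(\gamma))$ is canonical. Either way the argument goes through.
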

\begin{proof}
  If $I^{[\Sigma]}$ has a finite labelled Gröbner basis $G^{[\Sigma]}$, then, applying Proposition~\ref{prop:min-GB-removing} repeatedly, $G^{[\Sigma]}$ contains a minimal labelled Gröbner basis of $I^{[\Sigma]}$ as a subset.
  The converse is clear.

  For the last statement, assume that $G_{1}^{[\Sigma]}$ and  $G_{2}^{[\Sigma]}$ are minimal labelled Gröbner bases of $I^{[\Sigma]}$.
  If they are both infinite, they have the same (countable) cardinality.
  Otherwise, assume that $G_{1}^{[\Sigma]}$ is finite.
  Define the sets $R_{1}$ and $R_{2}$ by
  \begin{equation*}
    R_{i} = \{ (\lm(g),\s(\gamma)) \mid g^{[\gamma]}\in G_{i}^{[\Sigma]}\}.
  \end{equation*}
  By Proposition~\ref{prop minimal s-GB}, $R_{1} \subseteq R_{2}$ and $R_{2} \subseteq R_{1}$, so $R_{1}=R_{2}$.
  We claim that the cardinality of $G_{1}^{[\Sigma]}$ and $G_{2}^{[\Sigma]}$ is equal to that of $R_{1}$.

  Indeed, assume that it is not the case.
  W.l.o.g.\ we can assume that $G_{1}^{[\Sigma]}$ is larger than $R_{1}$, so by the pigeonhole principle, there exist $g^{[\gamma]}$ and $h^{[\delta]}$ in $G_{1}^{[\Sigma]}$, distinct, such that $\lm(g)=\lm(h)$ and $\s(\gamma)=\s(\delta)$.
  Then by definition, $h^{[\delta]}$ is top $\s$-reducible by $g^{[\gamma]}$, which contradicts the minimality of $G_{1}^{[\Sigma]}$.
\end{proof}

However, we cannot expect $I^{[\Sigma]}$ to have a finite (minimal) labelled Gr\"obner basis for any finitely generated ideal $I \subseteq K\<X>$, as there are finitely generated ideals that simply do not have a finite Gr\"obner basis, and consequently, also no finite labelled Gr\"obner basis. 
Unfortunately, the condition that an ideal $I = (f_1,\dots,f_r)$ has a finite Gr\"obner basis is also not sufficient to ensure that $I^{[\Sigma]}$ has a finite labelled Gr\"obner basis w.r.t.\ the family of generators $f_1,\dots,f_r$.

\begin{example}[label=infinite-sgb]
  \label{ex infinite s-GB}
  We give an example of an ideal with a finite Gröbner basis, but no finite labelled Gröbner basis.
  The construction and the proof of the claims rely on notions introduced in Section~\ref{sec:comp-sign-grobn}, and will be deferred until that point.
  
  Let $K$ be a field and $X = \{x,y\}$.
  We consider the ideal $I = (f_1,f_2,f_3)$ with
  \[
    f_1 = xyx - xy, \qquad f_2 = yxy, \qquad f_3 = xyy - xxy \in K\<X>.
  \]
  We equip $K\<X>$ with $\preceq_{\textup{deglex}}$ where we order the indeterminates as $x \prec_{\textup{lex}} y$ and we use $\preceq_{\textbf{top}}$ as a module ordering.
  Then, $G = \{f_1,f_2,f_3,f_4\}$, where  $f_4 = xxy$, is a Gr\"obner basis of $I$.
  
  A minimal labelled Gröbner basis of $I^{[\Sigma]}$, w.r.t.\ the family of generators $f_1,f_2,f_3$, is given by
  \[
    G^{[\Sigma]} = \{f_1^{[\varepsilon_1]}, f_2^{[\varepsilon_2]}, f_3^{[\varepsilon_3]}, f_4^{[\alpha]} \} \cup \{g_n^{[\gamma_n]} \mid n \geq 0\},
  \]
  with $g_n = yx^{n+2}y$ and certain $\alpha, \gamma_n \in \Fr$ such that $\s(\alpha) = \varepsilon_1 y$ and $\s(\gamma_n) = y\varepsilon_3y^n$.
  
  So, Corollary~\ref{cor finite s-GB} implies that $I^{[\Sigma]}$ does not have a finite labelled Gr\"obner basis.
  The obstruction to having a finite labelled Gröbner basis is that $f_4^{[\alpha]}$ cannot be used to $\s$-reduce any of the elements $f_3^{[\varepsilon_{3}]}$ or $g_{n}^{[\gamma_{n}]}$.
  Indeed, since $\s(\alpha) \succ \varepsilon_{3}$ and $yx^{n}\s(\alpha) \succ \s(\gamma_{n})$, the reductions would cause the signatures to increase, and would not be $\s$-reductions.


If instead we consider the family of generators $f_1^{[\varepsilon_1]},f_2^{[\varepsilon_2]},f_3^{[\varepsilon_3]},f_4^{[\varepsilon_4]}$ generating the module $I^{[\Sigma']}$, where $\Sigma' = (K\<X> \otimes K\<X>)^4$ denotes the free $K\<X>$-bimodule of rank 4, then the finite set
  \[
  	G^{[\Sigma']} = \{f_1^{[\varepsilon_1]}, f_2^{[\varepsilon_2]}, f_3^{[\varepsilon_3]}, f_4^{[\varepsilon_4]} \} \cup \{yxxy^{[\gamma_0]} \}
\]
with $\s(\gamma_0) =  y\varepsilon_3$ is a minimal labelled Gr\"obner basis of $I^{[\Sigma']}$.
The difference is that $f_{4}^{[\varepsilon_{4}]}$ has now signature $\varepsilon_4$ instead of $\varepsilon_1y$, which renders all $g_{n}^{[\gamma_{n}]}$ with $n \geq 1$ top $\s$-reducible by this element.
\end{example}
\begin{remark}
  One can compare this example with Example~\ref{ex infinite gb}, where we considered the principal ideal $(f_1)$.
  Adding $f_2$  and $f_3$ to the ideal allowed it to have a finite Gröbner basis.
  Note that neither $f_2$ nor $f_3$ lie in $(f_1)$.
  Indeed, having a finite Gröbner basis is a property of the ideal, independently of its generators (see e.g.~\cite[Sec.~6]{Mor94} for further information).

  By contrast, the polynomial $f_4$ lies in the ideal $I = (f_1,f_2,f_3)$, but adding $f_4$ as a generator allows the ideal to have a finite labelled Gröbner basis.
  Here, the ideal spanned is the same, but the underlying module is different:
  the module parts of the elements in $I^{[\Sigma]}$ (w.r.t. $f_1,f_2,f_3$) lie in the free bimodule $\Sigma$ of rank $3$, whereas the module parts of the elements in $I^{[\Sigma']}$ (w.r.t. $f_{1},f_2,f_3,f_4$) lie in the free bimodule $\Sigma'$ of rank $4$.
 This causes $I^{[\Sigma]}$ to only contain $f_4^{[\varepsilon_{1}y + \dots]}$ while $I^{[\Sigma']}$ additionally contains $f_4^{[\varepsilon_{4}]}$.
  In general, different choices of generators of $I$ can lead to drastically different module structures for $I^{[\Sigma]}$, and to different (minimal) labelled Gröbner bases.
\end{remark}

\section{Computation of labelled Gröbner bases}
\label{sec:comp-sign-grobn}

\subsection{Regular S-polynomials}\label{sec s-GB computation}

The objective of this section is to state an adaptation of the noncommutative version of Buchberger's algorithm to include signatures.
To this end, we need to adapt the notion of S-polynomials to the case of labelled polynomials.
We first extend the notion of \emph{ambiguities} from~\cite{Ber78} from polynomials to labelled polynomials.
Recall that we fixed an indexed family of generators $(f_1,\dots,f_r) \in K\<X>^r$ of an ideal $I = (f_1,\dots,f_r) \subseteq K\<X>$ as well as a monomial and a module ordering.

\begin{definition}
  \label{def:S-pol}
Let $G^{[\Sigma]} \subseteq I^{[\Sigma]}$ and let $f^{[\alpha]}, g^{[\beta]} \in G^{[\Sigma]}$ be such that $f,g \neq 0$.
If $\lm(f) = AB$ and $\lm(g) = BC$ for some words $A,B,C\in\<X>\setminus\{1\}$, then we call the tuple
\[
	a = (ABC,A,C,f^{[\alpha]}, g^{[\beta]})
\]
an \emph{overlap ambiguity} of $G^{[\Sigma]}$.
We define its \emph{S-polynomial} $\spol(a)$ to be
\[
	\spol(a) \coloneqq \frac{1}{\lc(f)}f^{[\alpha]}C -\frac{1}{\lc(g)}Ag^{[\beta]}.
\]
Similarly, if $f \neq g$, $\lm(f) = ABC$ and $\lm(g) = B$ for some words $A,B,C\in\<X>$, then we call the tuple
\[
	a = (ABC,A,C,f^{[\alpha]}, g^{[\beta]})
\]
an \emph{inclusion ambiguity} of $G^{[\Sigma]}$.
We define its \emph{S-polynomial} $\spol(a)$ to be
\[
	\spol(a) \coloneqq \frac{1}{\lc(f)}f^{[\alpha]} -\frac{1}{\lc(g)}Ag^{[\beta]}C.
\]

\end{definition}

Disregarding the module labelling in the definition recovers the usual constructions of the noncommutative version of Buchberger's algorithm.
We recall a few classical observations on this construction, which might be unfamiliar to a reader more used to the commutative case.
First, two labelled polynomials can have more than one ambiguity with each other. 
Furthermore, an element can also form overlap ambiguities with itself.

In the noncommutative case, following~\cite{Ber78}, we only form S-polynomials in the presence of an ambiguity.
This is an analogue of Buchberger's coprime criterion (sometimes called GCD criterion)~\cite[Cor.~5.8]{Mor94}.
Contrary to the commutative case, the criterion is embedded in the definition of an S-polynomial.
This ensures that two polynomials can only give rise to finitely many S-polynomials.
This is necessary to ensure that Buchberger's algorithm terminates whenever a finite Gröbner basis exists.

An ambiguity $a = (ABC,A,C,f^{[\alpha]}, g^{[\beta]})$ is called \emph{singular} if
\[
	\s(\alpha C) = \s(A\beta)
\]
in case that $a$ is an overlap ambiguity, respectively if
\[
	\s(\alpha) = \s(A\beta C) 
\]
 in case that $a$ is an inclusion ambiguity.
If an ambiguity is not singular, it is called \emph{regular}.
We call an S-polynomial \emph{regular} (resp. \emph{singular}), if the respective ambiguity is regular (resp. singular).
Similarly to the case of reductions, it is possible to compute regular S-polynomials of signature polynomials.

In the following, we collect some useful results about S-polynomials.
We start by relating the signature of a regular S-polynomial to the signatures of the two input elements.
This first proposition is an immediate consequence of the definition of a regular ambiguity.

\begin{proposition}\label{cor signature S-poly}
Let $a = (ABC,A,C,f^{[\alpha]}, g^{[\beta]})$ be a regular ambiguity of a set $G^{[\Sigma]} \subseteq I^{[\Sigma]}$.
Then,
\[
	\s(\spol(a)) \succeq \max \{\s(\alpha), \s(\beta)\}.
      \]
\end{proposition}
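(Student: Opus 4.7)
The plan is to directly compute $\s(\spol(a))$ in each of the two ambiguity types from Definition~\ref{def:S-pol}. Two ingredients suffice. First, the earlier lemma gives $\s(u\alpha v) = u\,\s(\alpha)\,v$ for any $u,v \in \<X>$ and any nonzero $\alpha \in \Fr$. Second, $1$ is the minimum of $\<X>$ under any monomial ordering: indeed, if $w \prec 1$ for some $w$, then compatibility with multiplication gives the infinite descending chain $w \succ w^2 \succ w^3 \succ \cdots$, contradicting well-orderedness. Chaining the compatibility axioms of the monomial and module orderings then yields $u\,\mu\,v \succeq \mu$ for every module monomial $\mu \in \MFr$ and every $u,v \in \<X>$.

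For an overlap ambiguity, the module part of $\spol(a)$ equals $\tfrac{1}{\lc(f)}\,\alpha C - \tfrac{1}{\lc(g)}\,A\beta$. By the first ingredient, the two summands have signatures $\s(\alpha)C$ and $A\,\s(\beta)$ respectively. Regularity of $a$ is precisely the statement $\s(\alpha)C \neq A\,\s(\beta)$, so the leading module monomials of the two summands are distinct and cannot cancel; therefore $\s(\spol(a)) = \max\{\s(\alpha)C,\, A\,\s(\beta)\}$. Since $\s(\alpha)C \succeq \s(\alpha)$ and $A\,\s(\beta) \succeq \s(\beta)$ by the second ingredient, this maximum is at least $\max\{\s(\alpha),\s(\beta)\}$, as claimed.

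The inclusion case is treated identically: the module part of $\spol(a)$ is $\tfrac{1}{\lc(f)}\,\alpha - \tfrac{1}{\lc(g)}\,A\beta C$, whose summands have distinct signatures $\s(\alpha)$ and $A\,\s(\beta)C$ by regularity, giving $\s(\spol(a)) = \max\{\s(\alpha),\,A\,\s(\beta)C\} \succeq \max\{\s(\alpha),\s(\beta)\}$. There is no real obstacle here; the proposition amounts to unpacking the definitions of regular ambiguity and $\spol$, together with the observation that multiplying a module monomial by a word can only weakly increase it under a monomial ordering.
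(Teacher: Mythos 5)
Your proof is correct and matches the intended argument: the paper states this proposition is ``an immediate consequence of the definition of a regular ambiguity'' and gives no further details, and your proof is exactly the natural unpacking of that remark --- regularity makes the two summands' signatures distinct so their max survives, and multiplying a module monomial on either side by a word can only weakly increase it since $1$ is the $\preceq$-minimum of $\<X>$.
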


\begin{lemma}\label{lemma signature s-poly}
Let $G^{[\Sigma]} \subseteq I^{[\Sigma]}$ and $f^{[\alpha]} \in I^{[\Sigma]}$.
Assume that $f^{[\alpha]}$ is not top \mbox{$\s$-reducible} by $G^{[\Sigma]}$.
Then, any regular S-polynomial between $f^{[\alpha]}$ and any element in $G^{[\Sigma]} \cup \{f^{[\alpha]}\}$ has signature strictly larger than $\s(\alpha)$.
\end{lemma}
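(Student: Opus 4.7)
The plan is a straightforward case analysis on the type and orientation of the ambiguity. A preliminary observation, used throughout, is that since $\preceq$ is a well-ordering on $\<X>$ compatible with multiplication, the empty word $1$ must be its minimum (otherwise $w \prec 1$ would give an infinite descending chain $w \succ w^2 \succ \dots$); consequently, every non-empty word is strictly greater than $1$, and left- or right-multiplication by a non-empty word strictly increases any signature.

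First I would dispatch the overlap ambiguities, which do not require the top $\s$-irreducibility hypothesis. By definition the corresponding words $A,B,C$ are all non-empty, so the two summands of $\spol(a)$ have signatures of the shape $\s(\alpha)C$ or $A\s(\alpha)$ on the $f^{[\alpha]}$-side, and $A\s(\beta)$ or $\s(\beta)C$ on the other side; by the preliminary observation, the $f^{[\alpha]}$-summand already has signature $\succ \s(\alpha)$. Regularity ensures that the two summand signatures are distinct, so $\s(\spol(a))$ equals their maximum, which is therefore $\succ \s(\alpha)$. This also covers overlaps of $f^{[\alpha]}$ with itself, where both summands involve $f^{[\alpha]}$.

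For an inclusion ambiguity, the definition forces $f \neq g$, so $g^{[\beta]} \in G^{[\Sigma]}$. If $\lm(g) = B$ is embedded in $\lm(f) = ABC$ (with $A$ or $C$ possibly $1$), then $\spol(a) = \lc(f)^{-1} f^{[\alpha]} - \lc(g)^{-1} A g^{[\beta]} C$ with candidate signatures $\s(\alpha)$ and $A\s(\beta)C$, distinct by regularity. If $\s(\alpha) \succ A\s(\beta)C$, then $(A,C,g^{[\beta]})$ satisfies exactly the three conditions of a top $\s$-reduction of $f^{[\alpha]}$, contradicting the hypothesis; hence $\s(\spol(a)) = A\s(\beta)C \succ \s(\alpha)$. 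In the symmetric sub-case, $\lm(f) = B$ embedded in $\lm(g) = ABC$, the same analysis works except possibly when $A = C = 1$ and $\s(\beta) \prec \s(\alpha)$; but then $\lm(f) = \lm(g)$ and $g^{[\beta]}$ top $\s$-reduces $f^{[\alpha]}$, again a contradiction.

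The main subtlety is that the overlap case relies on the structural constraint $A,B,C \neq 1$ to obtain strictness ``for free'', while the inclusion case relies on the top $\s$-irreducibility hypothesis precisely in the degenerate sub-cases where some multiplier is $1$ and multiplication alone fails to lift the signature strictly. Keeping track of the orientation of the ambiguity and of which side carries the non-trivial multiplier is the main bookkeeping chore; once this is organized, the proof is a short computation in each sub-case.
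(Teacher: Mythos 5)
Your proof is correct and follows essentially the same case analysis as the paper's: overlap ambiguities are handled by the structural constraint $A,C \neq 1$, while inclusion ambiguities use the top $\s$-irreducibility of $f^{[\alpha]}$ to rule out the bad cases, with special attention to the degenerate sub-case $A=C=1$. The only cosmetic difference is that you argue directly whereas the paper first invokes Proposition~\ref{cor signature S-poly} and proceeds by contradiction from the equality $\s(\spol(a)) = \s(\alpha)$.
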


\begin{proof}
Let $a = (ABC,A,C,f_1^{[\alpha_1]}, f_2^{[\alpha_2]})$ be a regular ambiguity between $f^{[\alpha]}$ and some element $g^{[\gamma]} \in G^{[\Sigma]} \cup \{f^{[\alpha]}\}$, i.e.,
\[
	f_1^{[\alpha_1]} = f^{[\alpha]} \text{ and } f_2^{[\alpha_2]} = g^{[\gamma]} \qquad \text{ or } \qquad f_1^{[\alpha_1]} = g^{[\gamma]} \text{ and } f_2^{[\alpha_2]} =  f^{[\alpha]}.
\]
According to Proposition~\ref{cor signature S-poly}, we have $\s(\spol(a)) \succeq \s(\alpha)$.
Assume for contradiction that $\s(\spol(a)) = \s(\alpha)$.
Then, $a$ cannot be an overlap ambiguity as otherwise
\[
	\s(\alpha) = \s(\spol(a)) = \max \{\s(\alpha_1C),\s(A\alpha_2) \} \succ \max \{\s(\alpha_1),\s(\alpha_2) \} \succeq \s(\alpha),
\]
where the strict inequality follows from the fact that $A,C \neq 1$.
So, $a$ must be an inclusion ambiguity. 
Note that then $g^{[\gamma]} \neq f^{[\alpha]}$, as a labelled polynomial cannot form inclusion ambiguities with itself.
Therefore, $g^{[\gamma]} \in G^{[\Sigma]}$.
Furthermore, by definition of a regular S-polynomial we know
\[
	\s(\alpha) = \s(\spol(a)) = \max \{\s(\alpha_1),\s(A\alpha_2C) \} \succeq \s(A\alpha_2 C).
\]
Now, if $f_1^{[\alpha_1]} = f^{[\alpha]}$ (and consequently $f_2^{[\alpha_2]} = g^{[\gamma]}$), then $\s(\alpha) \succeq \s(A\gamma C)$.
Since also $\lm(f) = \lm(AgC)$, this shows that $f^{[\alpha]}$ is top $\s$-reducible by $g^{[\gamma]}$, which is a contradiction.
If $f_2^{[\alpha_2]} = f^{[\alpha]}$ (and consequently $f_1^{[\alpha_1]} = g^{[\gamma]}$), then $\s(\alpha) \succeq \s(A\alpha C)$.
This is only possible if $A =C = 1$.
So, $\lm(f) = \lm(AgC) = \lm(g)$ and $\s(\alpha) \succeq \s(\gamma)$, showing again that $f^{[\alpha]}$ is top $\s$-reducible by $g^{[\gamma]}$, which is a contradiction.
\end{proof}

\subsection{Characterization of labelled Gröbner bases}
\label{sec:char-sign-grobn}

As in the commutative case, the design and the proof of correctness of the algorithm will rely on a signature variant of Buchberger's characterization of Gröbner bases, stating that if all \emph{regular} S-polynomials $\s$-reduce to zero, then one has a labelled Gröbner basis.

A particularity of the noncommutative case is that we will need to handle trivial syzygies separately in the proof process.
This is because the noncommutative definition of S-polynomials (Definition~\ref{def:S-pol}) effectively contains the restrictions granted by Buchberger's coprime criterion, eliminating some trivial syzygies.
Without those restrictions, the algorithms (even without signatures) would rarely terminate, because the module of trivial syzygies is in general not finitely generated.

In order to prove the noncommutative characterization of labelled Gröbner bases, we prove several lemmas.
\chnote{Rephrased this paragraph a bit.}
Leaving aside the provisions for trivial syzygies, we first prove Lemma~\ref{lemma cases}, which states 
that, given two labelled polynomials with the same leading monomial and different signatures, one can find a regular S-polynomial whose signature divides the larger signature.
This lemma is technical and useful for the rest of the proofs.
It ensures that it suffices to consider regular S-polynomials.

Then, we prove Lemma~\ref{lemma regular top reduced S-polynomial}, which states that given one regular S-polynomial $q^{[\rho]}$, one can find another regular S-polynomial $p^{[\pi]}$ such that $\s(\pi)$ and $\s(\rho)$ have a common multiple in $\MFr$ and such that $p^{[\pi]}$ satisfies certain additional conditions concerning its $\s$-reducibility.
This allows us to prove Lemma~\ref{lemma S-polynomial p}, which makes the same statement, but starting from any polynomial.
This last lemma is a noncommutative analogue of Lemma 9 in~\cite[appendix]{roune2012practical}.
Just like in the commutative case, it is the cornerstone of the proof of the final Theorem~\ref{thm spairs}, which is the wanted characterization.

\begin{lemma}\label{lemma cases}
 \tvnote{Center figure}
Let $g_1^{[\gamma_1]}, g_2^{[\gamma_2]} \in I^{[\Sigma]}$ and let $a_1,a_2,b_1,b_2\in \<X>$ such that
\[
	\lm(a_1g_1b_1) = \lm(a_2g_2b_2) \quad \text{ and }\quad  \s(a_1 \gamma_1 b_1) \succ \s(a_2 \gamma_2 b_2).
\]
Then, there exist $p^{[\pi]} \in I^{[\Sigma]}$ and $a,b\in \<X>$ such that $\s(a\pi b) = \s(a_1\gamma_1 b_1)$ and such that one of the following conditions holds:
\begin{enumerate}
	\item $\pi$ is a trivial syzygy between $g_1^{[\gamma_1]}$ and $g_2^{[\gamma_2]}$;
	\item $p^{[\pi]}$ is a regular S-polynomial of $g_1^{[\gamma_1]}$ and $g_2^{[\gamma_2]}$ with $p = 0$ or $\lm(apb) \prec \lm(a_1g_1b_1)$;
\end{enumerate}
\end{lemma}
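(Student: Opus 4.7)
My plan is to do a case analysis on the relative positions of the two factors $\lm(g_1)$ and $\lm(g_2)$ inside the common word $W \coloneqq a_1 \lm(g_1) b_1 = a_2 \lm(g_2) b_2$, reading off the appropriate $\pi$ and multipliers $(a,b)$ from that position pattern.

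If the two factors do not overlap in $W$, say $|a_1| + |\lm(g_1)| \le |a_2|$ (the opposite disjoint case is symmetric), then there is $m \in \<X>$ with $a_2 = a_1 \lm(g_1) m$ and $b_1 = m \lm(g_2) b_2$. I set $\pi \coloneqq \gamma_1 m g_2 - g_1 m \gamma_2$, which is the trivial syzygy between $g_1^{[\gamma_1]}$ and $g_2^{[\gamma_2]}$ associated to $m$ (in particular $p = \overline \pi = 0$), and take $a \coloneqq a_1$, $b \coloneqq b_2$. Extending the identity $\s(a \alpha b) = a \s(\alpha) b$ linearly to right/left multiplication by polynomials gives $\s(a_1 \gamma_1 m g_2 b_2) = a_1 \s(\gamma_1) m \lm(g_2) b_2 = \s(a_1 \gamma_1 b_1)$ and $\s(a_1 g_1 m \gamma_2 b_2) = \s(a_2 \gamma_2 b_2)$. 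Since these signatures are distinct by hypothesis, no cancellation of leading module monomials occurs, so $\s(a \pi b) = \s(a_1 \gamma_1 b_1)$, settling case (1).

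Otherwise, $\lm(g_1)$ and $\lm(g_2)$ overlap in $W$, and the four indices $|a_1|, |a_2|, |a_1|+|\lm(g_1)|, |a_2|+|\lm(g_2)|$ determine words $A,B,C \in \<X>$ witnessing either an overlap ambiguity $\lm(g_1) = AB, \lm(g_2) = BC$ (or its mirror with the roles of $1$ and $2$ swapped), or an inclusion ambiguity in which one leading monomial contains the other, possibly with $A$ or $C$ equal to $1$. I take $\pi \coloneqq \spol(a)$ for the corresponding ambiguity $a$ of $g_1^{[\gamma_1]}$ and $g_2^{[\gamma_2]}$, and I let $(a,b)$ be what remains of $(a_1,b_2)$ or $(a_2,b_1)$ once $A$ and $C$ have been absorbed. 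In the overlap case this produces
\begin{equation*}
a \pi b = \tfrac{1}{\lc(g_1)} a_1 \gamma_1 b_1 - \tfrac{1}{\lc(g_2)} a_2 \gamma_2 b_2,
\end{equation*}
and analogous identities hold in each inclusion subcase. The hypothesis $\s(a_1\gamma_1 b_1) \succ \s(a_2\gamma_2 b_2)$ again precludes cancellation, so $\s(a\pi b) = \s(a_1\gamma_1 b_1)$. The ambiguity is regular, since if it were singular the defining equality (either $\s(\gamma_1 C) = \s(A\gamma_2)$ or $\s(\gamma_i) = \s(A\gamma_{3-i}C)$) would, after multiplying on the left by $a$ and on the right by $b$ and using compatibility of $\preceq$ with multiplication, force $\s(a_1\gamma_1 b_1) = \s(a_2\gamma_2 b_2)$, a contradiction. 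Finally, the S-polynomial construction cancels the two leading polynomial terms equal to $ABC$, so either $p = 0$ or $\lm(p) \prec ABC$, which after multiplying by $a$ and $b$ gives $\lm(apb) \prec \lm(a_1 g_1 b_1)$ and establishes case (2).

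The main obstacle will be keeping the bookkeeping straight across the overlap subcases: one has to check in each position pattern that the chosen ambiguity is actually of the form allowed by Definition~\ref{def:S-pol} (in particular, inclusion ambiguities require $f \neq g$) and that the chosen multipliers recover exactly $\tfrac{1}{\lc(g_1)} a_1 \gamma_1 b_1 - \tfrac{1}{\lc(g_2)} a_2 \gamma_2 b_2$ up to signs. The degenerate configuration in which $\lm(g_1) = \lm(g_2)$ sits at the same position in $W$ and the two polynomials coincide must be handled separately via a trivial-syzygy construction, since no inclusion ambiguity between a polynomial and itself is available.
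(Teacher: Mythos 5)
Your proof matches the paper's essentially step for step: the same case split on the relative position of $\lm(g_1)$ and $\lm(g_2)$ inside $W$, a trivial syzygy for the disjoint case, and a regular S-polynomial for the nested and overlapping cases, with regularity deduced by multiplying the would-be singularity equation by $a$ and $b$ to contradict $\s(a_1\gamma_1 b_1) \succ \s(a_2\gamma_2 b_2)$. The paper simply writes out the inclusion cases ($\lm(g_1)$ inside $\lm(g_2)$ and vice versa) and the overlap cases (left overlap and right overlap) as four separate cases, whereas you treat them uniformly; the underlying calculations are identical.

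One caution about your closing remark. In the configuration you single out — $g_1 = g_2$ as polynomials, $\gamma_1 \neq \gamma_2$, $a_1 = a_2$, $b_1 = b_2$ — the trivial syzygy $\gamma_1 m g_2 - g_1 m \gamma_2$ does \emph{not} rescue the lemma: its signature is $\max\{\s(\gamma_1)\,m\,\lm(g_2),\,\lm(g_1)\,m\,\s(\gamma_2)\}$, which in general cannot be made to divide $a_1\s(\gamma_1)b_1$ for any choice of $m$ (for instance if $b_1$ does not contain $\lm(g_1)$ as a factor). The natural module element $\gamma_1 - \gamma_2$, which would do the job, is a syzygy but not a \emph{trivial} one, so it falls outside both conclusions of the lemma. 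The paper's proof silently skips this configuration as well, and it is harmless in practice because everywhere the lemma is invoked the two labelled polynomials come from a set $G^{[\Sigma]}$ that Algorithm~\ref{algo LabelledGB} keeps free of repeated leading monomials, so two distinct $g_i^{[\gamma_i]}$ with $g_1 = g_2$ cannot both appear. Still, as written it is a genuine gap in the lemma as stated, and a trivial-syzygy construction is not the correct way to close it; one would instead either restrict the lemma to $\lm(g_1) \neq \lm(g_2)$ (which suffices for all its uses), or relax the conclusion to allow $\pi$ to be an arbitrary syzygy of the form $c_1\gamma_1 - c_2\gamma_2$ in that subcase.
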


\begin{figure}
  \centering
    \begin{tikzpicture}[
  node distance = 0mm,
    narrowbox/.style = {draw, minimum size=1cm, outer sep = 0mm, on chain, text width=4cm, draw, thin, text centered, minimum height=\ht\strutbox+\dp\strutbox,
      inner ysep=0pt},
    mybrace/.style={decorate,decoration={brace,raise=2pt,amplitude=5pt}},
    mymirrorbrace/.style={decorate,decoration={brace,mirror,amplitude=5pt}}
    ]

    \begin{scope}[yshift=2.5cm,start chain=going right]
      \node[narrowbox, anchor=west, text width = 1.5cm] at (0,0) (A) {$A$};
      \node[anchor=west] at (-2,0) (leg1) {Case 1:}; 
      \node[narrowbox, text width = 1.5cm] (m1) {$\lm(g_1)$}; 
      \node[narrowbox, text width = 1.5cm] (B) {$B$}; 
      \node[narrowbox, text width = 1.5cm] (m2) {$\lm(g_2)$};
      \node[narrowbox, text width = 1.5cm] (C) {$C$};
      
      \draw [mybrace] ([yshift=3mm]A.west) --node[above=2mm]{$a_1$} ([yshift=3mm]A.east);
      \draw [mybrace] ([yshift=3mm]B.west) --node[above=2mm]{$b_1$} ([yshift=3mm]C.east);
      \draw [mymirrorbrace] ([yshift=-4mm]A.west) --node[below=2mm]{$a_2$} ([yshift=-4mm]B.east);
      \draw [mymirrorbrace] ([yshift=-4mm]C.west) --node[below=2mm]{$b_2$} ([yshift=-4mm]C.east);
    \end{scope}

    \begin{scope}[yshift=0.3cm,start chain=going right]
      \node[narrowbox, anchor=west, text width = 1.5cm] at (0,0) (A) {$A$};
      \node[narrowbox, text width = 1.5cm] (m1) {$\lm(g_2)$}; 
      \node[narrowbox, text width = 1.5cm] (B) {$B$}; 
      \node[narrowbox, text width = 1.5cm] (m2) {$\lm(g_1)$};
      \node[narrowbox, text width = 1.5cm] (C) {$C$};
      
      \draw [mymirrorbrace] ([yshift=-4mm]A.west) --node[below=2mm]{$a_2$} ([yshift=-4mm]A.east);
      \draw [mymirrorbrace] ([yshift=-4mm]B.west) --node[below=2mm]{$b_2$} ([yshift=-4mm]C.east);
      \draw [mybrace] ([yshift=3mm]A.west) --node[above=2mm]{$a_1$} ([yshift=3mm]B.east);
      \draw [mybrace] ([yshift=3mm]C.west) --node[above=2mm]{$b_1$} ([yshift=3mm]C.east);
    \end{scope}

    \begin{scope}[yshift=-2.1cm,start chain=going right]
      \node[narrowbox, anchor=west, text width = 2cm] at (0,-.55) (a2) {$a_2$};
      \path (leg1.west) |- (a2) node[pos=0.5,anchor=west, yshift=5.5mm] {Case 2:}; 
      \node[narrowbox, text width = 4cm] (g2) {$\lm(g_2)$}; 
      \node[narrowbox, text width = 2cm] (b2) {$b_2$}; 
      \node[narrowbox, text width = 1cm, above = 5.5mm of a2.east, anchor =  west] (A) {$A$}; 
      \node[narrowbox, text width = 1.53cm, right =  of A.east, anchor =  west] (g1) {$\lm(g_1)$}; 
      \node[narrowbox, text width = 1cm, above =  5.5mm of b2.west, anchor =  east] (C) {$C$}; 
      \node[text width = 1cm, above = 5.5mm of a2.west, anchor = west] (a1) {};
      \node[text width = 1cm, above = 5.5mm of b2.east, anchor = east] (b1) {};

      \draw [mybrace] ([yshift=3mm]a1.west) --node[above=2mm]{$a_1$} ([yshift=3mm]A.east);
      \draw [mybrace] ([yshift=3mm]C.west) --node[above=2mm]{$b_1$} ([yshift=3mm]b1.east);
     \end{scope}

    \begin{scope}[yshift=-4.1cm,start chain=going right]
      \node[narrowbox, anchor=west, text width = 2cm] at (0,0) (a1) {$a_1$};
      \path (leg1.west) |- (a1) node[pos=0.5,anchor=west] {Case 3:}; 
      
      \node[narrowbox, text width = 4cm] (g1) {$\lm(g_1)$}; 
      \node[narrowbox, text width = 2cm] (b1) {$b_1$}; 
      \node[narrowbox, text width = 1cm, below = 5.5mm of a1.east, anchor =  west] (A) {$A$}; 
      \node[narrowbox, text width = 1.53cm, right =  of A.east, anchor =  west] (g2) {$\lm(g_2)$}; 
      \node[narrowbox, text width = 1cm, below =  5.5mm of b1.west, anchor =  east] (C) {$C$}; 
      \node[text width = 1cm, below = 5.5mm of a1.west, anchor = west] (a2) {};
      \node[text width = 1cm, below = 5.5mm of b1.east, anchor = east] (b2) {};
      
      \draw [mymirrorbrace] ([yshift=-4mm]a2.west) --node[below=2mm]{$a_2$} ([yshift=-4mm]A.east);
      \draw [mymirrorbrace] ([yshift=-4mm]C.west) --node[below=2mm]{$b_2$} ([yshift=-4mm]b2.east);
    \end{scope}

    \begin{scope}[yshift=-7cm,start chain=going right]
      \node[narrowbox, anchor=west, text width = 2cm] (a1) at (0,0) {$a_1$};
      \path (leg1.west) |- (a1) node[pos=0.5,anchor=west] {Case 4:}; 
      
      \node[narrowbox, text width = 2.5cm] (f1) {$\lm(g_1)$}; 
      \node[narrowbox, text width = 1.5cm] (C) {$C$}; 
      \node[text width = 2cm, right = of C.east, anchor = west] (right) {};
      \node[narrowbox, text width = 1.5cm, below =  5.5mm of f1.west, anchor =  west] (A) {$A$}; 
      \node[narrowbox, text width = 2.5cm, right =  of A.east, anchor =  west] (g2) {$\lm(g_2)$}; 
      \node[narrowbox, text width = 2cm, right = of g2.east, anchor =  west] (b2) {$b_2$}; 
      
      \node[text width = 2cm, left = of A.west, anchor = east] (left) {};
      
      \draw [mymirrorbrace] ([yshift=-4mm]left.west) --node[below=2mm]{$a_2$} ([yshift=-4mm]A.east);
      \draw [mybrace] ([yshift=3mm]C.west) --node[above=2mm]{$b_1$} ([yshift=3mm]right.east);
    \end{scope}

    \begin{scope}[yshift=-9.8cm,start chain=going right]
      \node[text width = 2cm, anchor=west] at (0,0) (left) {};
      \path (leg1.west) |- (left) node[pos=0.5,anchor=west] {Case 5:}; 
      
      \node[narrowbox, text width = 1.5cm, right = of left.east] (A) {$A$};
      \node[narrowbox, text width = 2.5cm] (g1) {$\lm(g_1)$}; 
      \node[narrowbox, text width = 2cm] (b1) {$b_1$}; 
      \node[narrowbox, text width = 2.5cm, below =  5.5mm of A.west, anchor =  west] (g2) {$\lm(g_2)$}; 
      \node[narrowbox, text width = 2cm, left = of g2.west, anchor =  east] (a2) {$a_2$}; 
      \node[narrowbox, text width = 1.5cm, right =  of g2.east, anchor =  west] (C) {$C$}; 
      \node[text width = 2cm, right = of C.east, anchor = west] (right) {};
      
      \draw [mybrace] ([yshift=3mm]left.west) --node[above=2mm]{$a_1$} ([yshift=3mm]A.east);
      \draw [mymirrorbrace] ([yshift=-4mm]C.west) --node[below=2mm]{$b_2$} ([yshift=-4mm]right.east);
    \end{scope}

  \end{tikzpicture}

  \caption{Relative position of $\lm(g_{1})$ and $\lm(g_{2})$  in the proof of Lemma \ref{lemma cases}}
  \label{fig:case-div}
\end{figure}
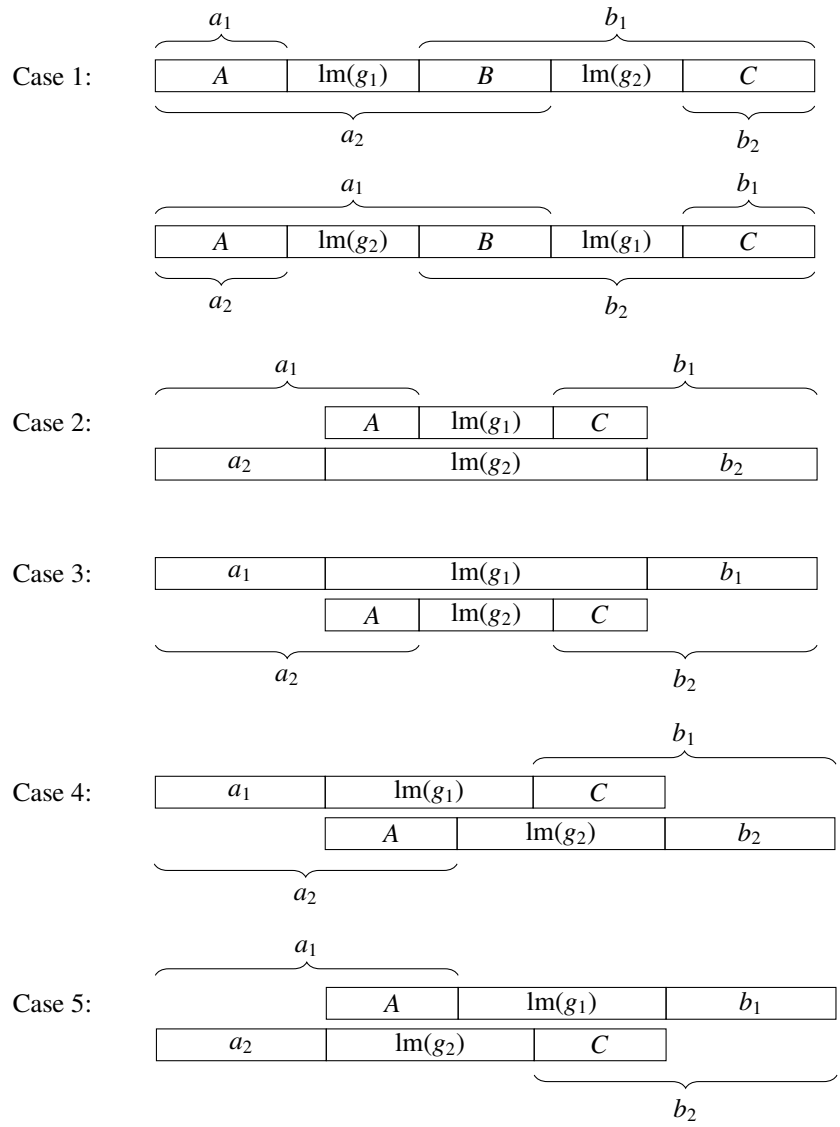

\begin{proof}
We distinguish between different cases (see Figure~\ref{fig:case-div}), depending on the position of $\lm(g_1)$ and $\lm(g_2)$ relative to each other in $W = \lm(a_1 g_1 b_1) = \lm(a_2 g_2 b_2)$.

\paragraph{Case 1: $\lm(g_1)$ is fully contained in $a_2$ or $b_2$}
In this case, $\lm(g_1)$ and $\lm(g_2)$ do not overlap in $W$.
We first consider the case where $\lm(g_1)$ is contained in $a_2$, i.e., where $W = A\lm(g_1)B \lm(g_2)C$ for some $A,B,C\in\<X>$.
We let
\[
	0^{[\pi]} = g_1^{[\gamma_1]}B g_2 - g_1B g_2^{[\gamma_2]}.
\]
Then $\pi$ is a trivial syzygy between $g_1^{[\gamma_1]}$ and $g_2^{[\gamma_2]}$.
To prove the assertion regarding the signatures, we note that $\s(\pi) = \s(\gamma_1 B g_2) \succ \s(g_1 B \gamma_2)$ as
$A\s(g_1B \gamma_2)C = \s(A \lm(g_1) B \gamma_2 C) =\s(a_2 \gamma_2 b_2) \prec \s(a_1\gamma_1b_1) = \s(A \gamma_1 B \lm(g_2) C) = A\s(\gamma_1 B g_2)C$.
Hence, by setting $a = A$ and $b = C$, we obtain
$\s(a\pi b) = \s(A\pi C) = \s(A \gamma_1 B g_2 C) = \s( a_1 \gamma_1b_1)$.

The other case, where $\lm(g_1)$ is contained in $b_2$, works along the same lines using the trivial syzygy $0^{[\pi]} = g_2^{[\gamma_2]}B g_1 - g_2B g_1^{[\gamma_1]}$.

\paragraph{Case 2: $\lm(g_1)$ is fully contained in $\lm(g_2)$}
In this case, there exists an inclusion ambiguity with S-polynomial 
\[
	p^{[\pi]} =  \frac{1}{\lc(g_2)}g_2^{[\gamma_2]} - \frac{1}{\lc(g_1)}Ag_1^{[\gamma_1]}C.
\]
with $\s(\pi) = \s(A\gamma_1C) \succ \s(\gamma_2)$ as
$a_2 \s(\gamma_2) b_2 = \s(a_2 \gamma_2 b_2) \prec \s(a_1 \gamma_1 b_1) = a_2 \s(A\gamma_1C) b_2$.
Hence, the S-polynomial is regular, and with $a = a_2$ and $b = b_2$ we have 
$\s(a\pi b) = \s(a_2A\gamma_1Cb_2) = \s(a_1\gamma_1b_1)$
and 
$\lm(apb) \prec \lm(ag_2 b) = \lm(a_2g_2b_2) = \lm(a_1g_1b_1)$,
in case $p \neq 0$.

\paragraph{Case 3: $\lm(g_2)$ is fully contained in $\lm(g_1)$}
In this case, there exists an inclusion ambiguity with S-polynomial 
\[
	p^{[\pi]} = \frac{1}{\lc(g_1)}g_1^{[\gamma_1]} - \frac{1}{\lc(g_2)}Ag_2^{[\gamma_2]}C
\]
with $\s(\pi) = \s(\gamma_1) \succ \s(A\gamma_2C)$ as 
$a_1 \s(A\gamma_2C) b_1 = \s(a_2 \gamma_2 b_2) \prec \s(a_1 \gamma_1 b_1) = a_1 \s(\gamma_1) b_1$.
Hence, the S-polynomial is regular, and with $a = a_1$ and $b = b_1$ we have
$\s(a \pi b) = \s(a_1 \gamma_1 b_1)$
and
$\lm(a p b) \prec \lm(a g_1 b) = \lm (a_1 g_1 b_1)$,
in case $p \neq 0$.

\paragraph{Case 4: $\lm(g_1)$ and $\lm(g_2)$ overlap but are not fully contained in one another and $\lm(g_1)$ begins before $\lm(g_2)$}
In this case, there exists an overlap ambiguity with S-polynomial 
\[
	p^{[\pi]} =  \frac{1}{\lc(g_1)}g_1^{[\gamma_1]}C - \frac{1}{\lc(g_2)}Ag_2^{[\gamma_2]}
\]
with $\s(\pi) = \s(\gamma_1C) \succ \s(A\gamma_2)$ as 
$a_1 \s(A\gamma_2) b_2 = \s(a_2\gamma_2 b_2) \prec \s(a_1 \gamma_1 b_1) = a_1 \s(\gamma_1C)b_2$.
Hence, the S-polynomial is regular, and with $a = a_1$ and $b = b_2$ we have
$\s(a \pi b) = \s(a_1 \gamma_1C b_2) = \s(a_1 \gamma_1 b_1)$
and
$\lm(apb) \prec \lm(ag_1Cb) = \lm(a_1g_1Cb_2) = \lm(a_1 g_1 b_1)$,
in case $p \neq 0$.

\paragraph{Case 5: $\lm(g_1)$ and $\lm(g_2)$ overlap but are not fully contained in one another and $\lm(g_1)$ begins after $\lm(g_2)$}
In this case, there exists an overlap ambiguity with S-polynomial 
\[
	p^{[\pi]} = \frac{1}{\lc(g_2)}g_2^{[\gamma_2]}C - \frac{1}{\lc(g_1)}Ag_1^{[\gamma_1]}
\]
with $\s(\pi) = \s(A\gamma_1) \succ \s(\gamma_2C)$ as 
$a_2 \s(\gamma_2C)b_1 = \s(a_2 \gamma_2 b_2) \prec \s(a_1 \gamma_1 b_1) = a_2\s(A\gamma_1)b_1$.
Hence, the S-polynomial is regular and with $a = a_2$ and $b = b_1$ we have
$\s(a \pi b) = \s(a_2 A\gamma_1 b_1) = \s(a_1\gamma_1b_1)$
and
$\lm(apb) \prec \lm(aAg_1b) = \lm(a_2Ag_1b_1) = \lm(a_1 g_1 b_1)$,
in case $p \neq 0$.
\end{proof}

\begin{lemma}\label{lemma regular top reduced S-polynomial}
Let $q^{[\rho]} \in I^{[\Sigma]}$ be a regular S-polynomial of $G^{[\Sigma]}$.
Furthermore, assume that there exist $a',b' \in \<X>$ such that all $u^{[\mu]} \in I^{[\Sigma]}$ with $\s(\mu) \prec \s(a'\rho b')$ $\s$-reduce to zero by $G^{[\Sigma]}$.
Then, there exist $p^{[\pi]}\in I^{[\Sigma]}$ and $a,b \in \<X>$ such that $\s(a\pi b) = \s(a'\rho b')$ and such that one of the following conditions holds:
\begin{enumerate}
	\item $\pi$ is a trivial syzygy between two elements in $G^{[\Sigma]}$;
	\item $p^{[\pi]}$ is a regular S-polynomial of $G^{[\Sigma]}$ and $ap'^{[\pi']}b$ is not regular top $\s$-reducible where $p'^{[\pi']}$ is the result of regular $\s$-reducing $p^{[\pi]}$;
\end{enumerate}
\end{lemma}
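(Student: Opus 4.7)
The plan is to argue by minimization on the leading monomial $\lm(a p' b)$ over all suitable regular S-polynomials of $G^{[\Sigma]}$, combined with a single application of Lemma~\ref{lemma cases} to get a strictly smaller object whenever the minimizer fails to witness condition~2.

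Concretely, let $\mathcal{P}$ denote the collection of tuples $(p^{[\pi]}, (a,b))$ such that $\s(a\pi b) = \s(a'\rho b')$ and $p^{[\pi]}$ is a regular S-polynomial of $G^{[\Sigma]}$. Under the standing hypothesis that every $u^{[\mu]} \in I^{[\Sigma]}$ with $\s(\mu) \prec \s(a'\rho b')$ $\s$-reduces to zero by $G^{[\Sigma]}$, Lemma~\ref{lemma same signature} implies that the polynomial part $p'$ of any regular $\s$-reduction of $p^{[\pi]}$ is uniquely determined, so the quantity $\lm(ap'b)$ is a well-defined element of $\<X> \cup \{0\}$, with the convention that $\lm(0)$ is smaller than every monomial. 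The set $\mathcal{P}$ is non-empty since it contains $(q^{[\rho]}, (a',b'))$. Because $\preceq$ is a well-ordering, we pick $(p^{[\pi]}, (a,b)) \in \mathcal{P}$ minimizing $\lm(a p' b)$.

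If $ap'^{[\pi']}b$ is not regular top $\s$-reducible then condition~2 is satisfied, so we may assume otherwise: there exist $g^{[\gamma]} \in G^{[\Sigma]}$ and $c,d \in \<X>$ with $\lm(ap'b) = \lm(cgd)$ and $\s(c\gamma d) \prec \s(ap'b) = \s(a'\rho b')$. Applying Lemma~\ref{lemma cases} to $(p'^{[\pi']}, (a,b))$ and $(g^{[\gamma]}, (c,d))$ produces $\tilde p^{[\tilde\pi]}$ and $\tilde a, \tilde b$ with $\s(\tilde a \tilde \pi \tilde b) = \s(a'\rho b')$ such that either $\tilde \pi$ is a trivial syzygy between $p'^{[\pi']}$ and $g^{[\gamma]}$ or $\tilde p^{[\tilde\pi]}$ is a regular S-polynomial of those two elements with $\tilde p = 0$ or $\lm(\tilde a \tilde p \tilde b) \prec \lm(a p' b)$.

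The hard part, and the main obstacle of the proof, is that the output of Lemma~\ref{lemma cases} lives between $p'^{[\pi']}$ and $g^{[\gamma]}$, and $p'^{[\pi']}$ need not itself lie in $G^{[\Sigma]}$. To lift $\tilde p^{[\tilde \pi]}$ to an object built from elements of $G^{[\Sigma]}$, we unfold $p'^{[\pi']}$ as $p^{[\pi]} - \sum_i c_i A_i h_i^{[\eta_i]} B_i$ along the regular $\s$-reduction that produced it, where each $h_i^{[\eta_i]} \in G^{[\Sigma]}$ with $\s(A_i \eta_i B_i) \prec \s(\pi)$, and we further unfold $p^{[\pi]}$ itself as an overlap or inclusion S-polynomial between two elements of $G^{[\Sigma]}$. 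Substituting these expansions into $\tilde p^{[\tilde \pi]}$ expresses it as a $K$-linear combination of trivial syzygies and regular S-polynomials, each formed between $g^{[\gamma]}$ and one of the $h_i^{[\eta_i]}$ or one of the generators underlying $p^{[\pi]}$, plus a remainder of signature strictly below $\s(a'\rho b')$ which $\s$-reduces to zero by the standing hypothesis. Analyzing the signatures and leading monomials of the surviving terms, one of these genuine syzygies or S-polynomials of $G^{[\Sigma]}$ must carry signature $\s(a'\rho b')$ and have leading monomial at most $\lm(\tilde a \tilde p \tilde b) \prec \lm(a p' b)$. If it is a trivial syzygy, condition~1 is immediately satisfied; otherwise, it produces an element of $\mathcal{P}$ strictly below the minimizer, contradicting our choice and finishing the proof.
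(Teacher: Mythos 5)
Your proof identifies the right obstacle — Lemma~\ref{lemma cases} applied to $p'^{[\pi']}$ and a reducer $g^{[\gamma]}$ would produce a trivial syzygy or S-polynomial between a pair that is not contained in $G^{[\Sigma]}$, because $p'^{[\pi']}$ is only a regular-reduced form, not a basis element — but the ``unfolding'' you propose to repair this is a genuine gap. Expanding $p'^{[\pi']}$ back through its reduction chain and $p^{[\pi]}$ into its defining S-polynomial does give a $K$-linear combination involving elements of $G^{[\Sigma]}$, but there is no reason a \emph{single} term of that combination must carry the signature $\s(a'\rho b')$, be of the form ``trivial syzygy or regular S-polynomial of $G^{[\Sigma]}$'', and simultaneously have leading monomial $\preceq \lm(\tilde a\tilde p\tilde b)$. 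Signatures and leading monomials of individual summands can cancel against each other; you would need a cancellation/rewriting argument substantially harder than the lemma you are trying to prove.

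The idea that makes the paper's proof go through, and that your proposal is missing, is to avoid applying Lemma~\ref{lemma cases} to $p'^{[\pi']}$ at all. Instead, observe that once $a'q'^{[\rho']}b'$ is regular top $\s$-reducible with $a'b'\neq 1$, one has $\s(\rho')=\s(\rho)\prec\s(a'\rho b')$, so the standing hypothesis forces $q'^{[\rho']}$ to $\s$-reduce to zero; but it is already regular $\s$-reduced and $q'\neq 0$, so it must be \emph{singular} top $\s$-reducible. This produces an element $g_1^{[\gamma_1]}\in G^{[\Sigma]}$ and $a_1,b_1\in\<X>$ with $\lm(a_1g_1b_1)=\lm(q')$ \emph{and} $\s(a_1\gamma_1b_1)=\s(\rho')$, i.e.\ a genuine basis element that matches $q'^{[\rho']}$ in both leading monomial and signature. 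Lemma~\ref{lemma cases} can then be applied to $a'a_1g_1^{[\gamma_1]}b_1b'$ and $a_2g_2^{[\gamma_2]}b_2$, where $g_2^{[\gamma_2]}$ is the regular top reducer of $a'q'^{[\rho']}b'$ — now \emph{both} inputs come from $G^{[\Sigma]}$. This yields a new trivial syzygy or regular S-polynomial of $G^{[\Sigma]}$ with the same signature and strictly smaller $\lm(\cdot)$, and the recursion terminates by well-ordering. Your minimization framework could be kept, but the decisive step of replacing $q'^{[\rho']}$ by a singular top reducer from $G^{[\Sigma]}$ is what needs to appear in place of the unfolding argument.
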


\begin{proof}
Let $q'^{[\rho']}$ be the result of regular $\s$-reducing $q^{[\rho]}$ by $G^{[\Sigma]}$.
If $a' q'^{[\rho']}b'$ is not regular top $\s$-reducible, then $q'^{[\rho']},a',b'$ are a suitable choice for $p^{[\pi]},a,b$.

Now assume that $a' q'^{[\rho']}b'$ is regular top $\s$-reducible.
This means that $a'b' \neq 1$, and therefore, that $\s(\rho') = \s(\rho) \prec a'\s(\rho)b' = \s(a' \rho b')$.
Hence, $q^{[\rho]}$ as well as $q'^{[\rho']}$ $\s$-reduce to zero by $G^{[\Sigma]}$.

We are now going to construct $p^{[\pi]} \in I^{[\Sigma]}$ such that there exist $a,b \in \<X>$ with $\s( a \pi b) = \s(a' \rho b')$ and 
\begin{enumerate}
	\item $\pi$ is a trivial syzygy between two elements in $G^{[\Sigma]}$, or
	\item $p^{[\pi]}$ is a regular S-polynomial with $p = 0$ or $\lm(apb) \prec \lm(a'qb')$.
\end{enumerate}

We are done if $\pi$ is a trivial syzygy or if $ap'^{[\pi']}b$ is not regular top $\s$-reducible where $p'^{[\pi']}$ is the result of regular $\s$-reducing $p^{[\pi]}$. 
Otherwise we can repeat this process to construct a third labelled polynomial with the same properties.
This process must terminate at some point since $\prec$ is a well-ordering.

We note that $q' \neq 0$ because otherwise $a' q'^{[\rho']}b'$ would not be regular top $\s$-reducible.
Then, since $q'^{[\rho']}$ is regular $\s$-reduced but $\s$-reduces to zero, $q'^{[\rho']}$ must be singular top $\s$-reducible.
Hence, there exist $g_1^{[\gamma_1]} \in G^{[\Sigma]}$ and $a_1,b_1 \in \<X>$ such that
\[
	\lm(a_1 g_1 b_1) = \lm(q') \quad \text{ and } \quad \s(a_1 \gamma_1 b_1) = \s(\rho') = \s(\rho).
\]
Furthermore, as $a'q'^{[\rho']}b'$ is regular top $\s$-reducible, there exist $g_2^{[\gamma_2]} \in G^{[\Sigma]}$ and $a_2,b_2 \in \<X>$ such that
\[
	\lm(a_2 g_2 b_2) = \lm(a'q'b') \quad \text{ and } \quad \s(a_2 \gamma_2 b_2) \prec \s(a' \rho' b') = \s(a'\rho b').
\]

To summarize, we have
\begin{align*}
	\lm(a'a_1 g_1 b_1b') &= \lm(a'q'b') = \lm(a_2 g_2 b_2),\\
	\s(a' a_1 \gamma_1 b_1 b') &= \s(a' \rho b') \succ \s(a_2 \gamma_2 b_2).
\end{align*}
Now, we apply Lemma~\ref{lemma cases} to $a'a_1g_1^{[\gamma_1]}b_1b'$ and $a_2g_2^{[\gamma_2]}b_2$,
which yields $a,b \in \<X>$ and $p^{[\pi]} \in I^{[\Sigma]}$ such that $\s(a \pi b) = \s(a' a_1 \gamma_1 b_1 b') = \s(a' \rho b')$ and such that one of the following conditions holds:
\begin{enumerate}
	\item $\pi$ is a trivial syzygy between $g_1^{[\gamma_1]}$ and $g_2^{[\gamma_2]}$;
	\item $p^{[\pi]}$ is a regular S-polynomial with $p = 0$ or $\lm(apb) \prec \lm(a' a_1 g_1 b_1 b') = \lm(a'q'b') \preceq \lm(a'qb')$;
\end{enumerate}
These are the desired $a,b$ and $p^{[\pi]}$.
\end{proof}

\begin{lemma}\label{lemma S-polynomial p}
Let $h^{[\delta]} \in I^{[\Sigma]}$ be top $\s$-reduced by $G^{[\Sigma]}$.
Assume that for all $i = 1,\dots,r$ with $\varepsilon_i \preceq \s(\delta)$ there exists $g_i^{[\gamma_i]} \in G^{[\Sigma]}$ with $\s(\gamma_i) = \varepsilon_i$.
Furthermore, assume that all $u^{[\mu]} \in  I^{[\Sigma]}$ with $\s(\mu) \prec \s(\delta)$ $\s$-reduce to zero by $G^{[\Sigma]}$.
Then, there exist $p^{[\pi]} \in I^{[\Sigma]}$ and $a,b\in \<X>$ such that $\s(\delta) = \s(a \pi b)$ and such that one of the following conditions holds:
\begin{enumerate}
	\item $\pi$ is a trivial syzygy between two elements in $G^{[\Sigma]}$;
	\item $p^{[\pi]}$ is a regular S-polynomial of $G^{[\Sigma]}$ and $ap'^{[\pi']}b$ is not regular top $\s$-reducible where $p'^{[\pi']}$ is the result of regular $\s$-reducing $p^{[\pi]}$;
\end{enumerate}
\end{lemma}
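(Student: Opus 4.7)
The plan is to reduce Lemma~\ref{lemma S-polynomial p} to Lemma~\ref{lemma regular top reduced S-polynomial} by exhibiting a regular S-polynomial or a trivial syzygy whose signature, after multiplication by suitable $a,b \in \<X>$, equals $\s(\delta)$. Since $h^{[\delta]}$ is not itself an S-polynomial, I first manufacture a reference element of signature $\s(\delta)$ and then play it off against $h^{[\delta]}$. Writing $\s(\delta) = a'\varepsilon_j b'$ for $a',b' \in \<X>$ and $j \in \{1,\dots,r\}$, compatibility of the orderings gives $\varepsilon_j \preceq \s(\delta)$, so the hypothesis provides $g_j^{[\gamma_j]} \in G^{[\Sigma]}$ with $\s(\gamma_j) = \varepsilon_j$. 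Set $q^{[\rho]} := a' g_j^{[\gamma_j]} b'$, so $\s(\rho) = \s(\delta)$, and consider $e^{[\epsilon]} := h^{[\delta]} - (\sigc(\delta)/\sigc(\gamma_j))\,q^{[\rho]}$; its leading module term cancels by design, giving $\s(\epsilon) \prec \s(\delta)$. By the hypothesis of the lemma, $e^{[\epsilon]}$ then $\s$-reduces to zero by $G^{[\Sigma]}$, and every reducer used along the way has signature strictly below $\s(\delta)$.

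The heart of the argument is a case analysis on $\lm(h)$ versus $\lm(q) = a'\lm(g_j)b'$. The case $\lm(h) = \lm(q)$ cannot occur, as $g_j^{[\gamma_j]}$ would then top $\s$-reduce $h^{[\delta]}$ with multipliers $a',b'$. The subcase $\lm(h) \succ \lm(q)$ is also ruled out: there $\lm(e) = \lm(h)$, and the chain of $\s$-reductions from $e^{[\epsilon]}$ to zero must strictly decrease its leading monomial at some first step via a top reduction at $\lm(h)$; the reducer involved satisfies $\s(a\gamma b) \prec \s(\delta)$ by non-increase of signatures, and would thus top $\s$-reduce $h^{[\delta]}$ itself, contradicting the hypothesis. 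Only $h = 0$ or $\lm(h) \prec \lm(q)$ remains; here $\lm(e) = a'\lm(g_j)b'$, and the same chain analysis produces $g_1^{[\gamma_1]} \in G^{[\Sigma]}$ and $a_1,b_1 \in \<X>$ with $a_1\lm(g_1)b_1 = a'\lm(g_j)b'$ and $\s(a_1\gamma_1 b_1) \prec \s(a'\gamma_j b') = \s(\delta)$.

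Lemma~\ref{lemma cases}, applied to $a' g_j^{[\gamma_j]} b'$ and $a_1 g_1^{[\gamma_1]} b_1$, then returns multipliers $a,b$ and $p^{[\pi]}$ with $\s(a\pi b) = \s(\delta)$, which is either a trivial syzygy between two elements of $G^{[\Sigma]}$ (yielding conclusion~1) or a regular S-polynomial. In the latter subcase, Lemma~\ref{lemma regular top reduced S-polynomial} applies to $p^{[\pi]}$ with multipliers $a,b$ --- its hypothesis is exactly the one at hand --- and converts it into the data required by conclusion~2. The main obstacle I anticipate is the exclusion of the subcase $\lm(h) \succ \lm(q)$: it relies on carefully tracking $\lm$ and $\s$ along the $\s$-reduction chain of $e^{[\epsilon]}$, using that signatures are non-increasing and that the first strict decrease of $\lm$ must occur at a top reduction step.
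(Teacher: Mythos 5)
Your proposal is correct and follows essentially the same route as the paper's proof: construct a multiple of a generator-indexed element $g_j^{[\gamma_j]}$ matching $\st(\delta)$, subtract it to drop the signature, extract a second top $\s$-reducer at the same leading monomial with strictly smaller signature, and then chain Lemma~\ref{lemma cases} into Lemma~\ref{lemma regular top reduced S-polynomial}. The only difference is cosmetic: where the paper invokes Lemma~\ref{lemma top s-reducible sum} to deduce $\lt(h-L)+\lt(L)=0$, you reach the equivalent conclusion ($h=0$ or $\lm(h)\prec\lm(q)$, hence $\lm(e)=\lm(q)$) by a direct case analysis; and note for precision that your final invocation of Lemma~\ref{lemma regular top reduced S-polynomial} may just as well hand back a trivial syzygy (conclusion~1) rather than only the data of conclusion~2, which is harmless since the lemma's disjunction is exactly what you want.
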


\begin{proof}
  The proof follows the same structure as that of~\cite[Lemma 9, appendix]{roune2012practical}:
  first, we combine leading terms to show that there exists $L^{[\lambda]} \in I^{[\Sigma]}$ with signature dividing that of $\delta$, and then, starting from that element, we construct $p^{[\pi]}$ as wanted.

\paragraph{Considering leading terms}
Let $\s(\delta) = a_1 \varepsilon_i b_1$ for some $1\leq i \leq r$ and $a_1, b_1 \in \<X>$.
By our assumption on $G^{[\Sigma]}$ there exists $g_1^{[\gamma_1]} \in G^{[\Sigma]}$ such that $\s(\gamma_1) = \varepsilon_i$.
Let $\lambda = c a_1 \gamma_1 b_1$ with $c =  \frac{\sigc(\delta)}{\sigc(\gamma_1)}$ and observe that $\st(\lambda) = \st(\delta)$.
Then, with $L = \overline{\lambda}$, the labelled polynomial $h^{[\delta]} - L^{[\lambda]}$ has a strictly smaller signature than $h^{[\delta]}$, so it $\s$-reduces to zero by $G^{[\Sigma]}$ and in particular it is top $\s$-reducible by $G^{[\Sigma]}$.
Clearly, so is $L^{[\lambda]} = c a_1 g_1^{[\gamma_1]} b_1$, as it is top $\s$-reducible by $g_1^{[\gamma_1]}$.
However, the sum $h^{[\delta]} = (h^{[\delta]} - L^{[\lambda]}) + L^{[\lambda]}$ is not top $\s$-reducible by assumption.
So, Lemma~\ref{lemma top s-reducible sum} yields that
\[
	\lt(h - L) + \lt(L) = 0.
\]
\paragraph{Constructing $p^{[\pi]}$}

Since $h^{[\delta]} - L^{[\lambda]}$ $\s$-reduces to zero by $G^{[\Sigma]}$, there exists a $g_2^{[\gamma_2]} \in G^{[\Sigma]}$ and $a_2, b_2 \in \<X>$ such that
\[
  \lm(a_2 g_2 b_2) = \lm(h - L) \quad \text{ and } \quad \s(a_2 \gamma_2 b_2) \preceq \s(\delta - \lambda) \prec \s(\delta).
\]

Hence, with $a_1,b_1,g_1^{[\gamma_1]}$ from above, we have
\begin{align*}
	\lm(a_1 g_1 b_1) &= \lm(L) = \lm(h - L) = \lm(a_2 g_2 b_2), \\
	\s(a_1 \gamma_1 b_1) &= \s(\delta) \succ \s(\delta - \lambda) \succeq \s(a_2 \gamma_2 b_2).
\end{align*}

Now, we apply Lemma~\ref{lemma cases} to $a_1g_1^{[\gamma_1]}b_1$ and $a_2g_2^{[\gamma_2]}b_2$,
which yields $a',b' \in \<X>$ and $q^{[\rho]} \in I^{[\Sigma]}$ such that $\s(a' \rho b') = \s(a_1 \gamma_1 b_1) = \s(\delta)$ and such that one of the following conditions holds:
\begin{enumerate}
	\item $\rho$ is a trivial syzygy between $g_1^{[\gamma_1]}$ and $g_2^{[\gamma_2]}$;
	\item $q^{[\rho]}$ is a regular S-polynomial with $q = 0$ or $\lm(a'qb') \prec \lm(a_1 g_1 b_1)$;
\end{enumerate}

If $\rho$ is a trivial syzygy, we can set $p^{[\pi]} = 0^{[\rho]}$, $a = a', b = b'$ and are done.
Otherwise, we note that by assumption all $u^{[\mu]} \in I^{[\Sigma]}$ with $\s(\mu) \prec \s(\delta) = \s(a' \rho b')$ $\s$-reduce to zero by $G^{[\Sigma]}$.
Hence, we can apply Lemma~\ref{lemma regular top reduced S-polynomial} to the regular S-polynomial $q^{[\rho]}$ which gives $p^{[\pi]} \in I^{[\Sigma]}$ and $a,b\in\<X>$ such that
$\s(a \pi b) = \s(a' \rho b') = \s(\delta)$ and such that one of the following conditions holds:
\begin{enumerate}
	\item $\pi$ is a trivial syzygy between two elements in $G^{[\Sigma]}$;
	\item $p^{[\pi]}$ is a regular S-polynomial of $G^{[\Sigma]}$ and $ap'^{[\pi']}b$ is not regular top $\s$-reducible where $p'^{[\pi']}$ is the result of regular $\s$-reducing $p^{[\pi]}$;
\end{enumerate}
These are the desired $a,b$ and $p^{[\pi]}$.
\end{proof}

We can now finally state and prove the following theorem.

\begin{theorem}\label{thm spairs}
Let $\sigma \in \MFr$ be a module monomial and let $G^{[\Sigma]} \subseteq I^{[\Sigma]}$ be such that for all $\varepsilon_i \prec \sigma$ there exists $g_i^{[\gamma_i]} \in G^{[\Sigma]}$ with $\s(\gamma_i) = \varepsilon_i$.
Assume that all regular S-polynomials $p^{[\pi]}$ of $G^{[\Sigma]}$ with $\s(\pi) \prec \sigma$ regular $\s$-reduce to some $p'^{[\pi']}$ by $G^{[\Sigma]}$ such that
$\pi'$ is a syzygy or $p'^{[\pi']}$ is singular top $\s$-reducible.
Then, $G^{[\Sigma]}$ is a labelled Gr\"obner basis of $I^{[\Sigma]}$ up to signature $\sigma$.
\end{theorem}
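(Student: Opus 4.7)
The plan is to prove the statement by strong induction on the signature, exploiting the well-ordering property of $\prec$. By Lemma~\ref{lemma equiv characterisation s-GB}, it is enough to show that every $f^{[\alpha]} \in I^{[\Sigma]}$ with $\s(\alpha) \prec \sigma$ is top $\s$-reducible by $G^{[\Sigma]}$, which in turn will follow from showing that every such $f^{[\alpha]}$ $\s$-reduces to zero. So fix $f^{[\alpha]}$ and assume inductively that every $u^{[\mu]} \in I^{[\Sigma]}$ with $\s(\mu) \prec \s(\alpha)$ already $\s$-reduces to zero by $G^{[\Sigma]}$.

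Apply $\s$-reductions to $f^{[\alpha]}$ as long as top $\s$-reductions are possible. Since each $\s$-reduction either leaves the signature unchanged or decreases it, and since $\lm(\cdot)$ strictly decreases in the polynomial part at every top $\s$-reduction, we eventually reach a labelled polynomial $h^{[\delta]}$ that is top $\s$-reduced, with $\s(\delta) \preceq \s(\alpha) \prec \sigma$. If $h = 0$, we are done, so assume $h \neq 0$ and derive a contradiction. Note that any tail $\s$-reduction preserves $\lm(h)$, so from the assumption $h \neq 0$ and top $\s$-reducedness it follows that $h^{[\delta]}$ does \emph{not} $\s$-reduce to zero; the whole strategy is to show that it does.

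Now invoke Lemma~\ref{lemma S-polynomial p} on $h^{[\delta]}$: the hypothesis on the $\varepsilon_i$ comes from the assumption of the theorem, and the reduction-to-zero hypothesis comes from the induction hypothesis (since $\s(\delta) \preceq \s(\alpha)$). This produces $p^{[\pi]} \in I^{[\Sigma]}$ and $a,b \in \<X>$ with $\s(a\pi b) = \s(\delta)$, of one of two types. In case~(1), $\pi$ is a trivial syzygy, so $\overline{a\pi b} = 0$; combining this with the theorem hypothesis applied to case~(2) whenever $\pi'$ is a syzygy, we obtain in both instances a syzygy element of the form $a\pi b$ (resp.\ $a\pi'b$) with signature exactly $\s(\delta)$. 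Choosing $c \in K$ so that $\st(c\cdot a\pi b) = \st(\delta)$ (resp.\ $\st(c \cdot a\pi'b) = \st(\delta)$), the element $h^{[\delta - c\cdot a\pi b]}$ has the same polynomial part $h$ but strictly smaller signature, so by induction it $\s$-reduces to zero; the same reduction sequence can be transferred back to $h^{[\delta]}$ (the reducer signatures remain bounded by $\s(\delta)$ since they are bounded by the smaller signature), forcing $h^{[\delta]}$ to $\s$-reduce to zero, which contradicts $h \neq 0$.

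The remaining case is~(2) where, by the theorem's hypothesis, $p'^{[\pi']}$ is singular top $\s$-reducible. The key step here is to match leading terms. Both $h^{[\delta]}$ (top $\s$-reduced by construction) and $c \cdot ap'^{[\pi']}b$ (regular top $\s$-reduced by the last conclusion of Lemma~\ref{lemma S-polynomial p}) are regular top $\s$-reduced and, for suitable $c \in K^{\times}$, share the signature term $\st(\delta)$; the induction hypothesis supplies the required hypothesis on smaller signatures. Lemma~\ref{lemma same signature} therefore yields $\lt(h) = \lt(c\cdot ap'b)$, and in particular $\lm(h) = \lm(ap'b)$. Transporting the singular top $\s$-reducer $g^{[\gamma]}$ of $p'^{[\pi']}$ from the hypothesis, shifted by $a$ and $b$, produces $a_1,b_1 \in \<X>$ with $\lm(aa_1 g b_1 b) = \lm(ap'b) = \lm(h)$ and $\s(aa_1 \gamma b_1 b) = \s(a\pi'b) = \s(\delta)$, i.e., a (singular) top $\s$-reduction of $h^{[\delta]}$. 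This contradicts $h^{[\delta]}$ being top $\s$-reduced, so $h = 0$ and the induction closes. The principal obstacle is the last case: one has to recognise that Lemma~\ref{lemma S-polynomial p} delivers a \emph{regular} top $\s$-reduced witness whose shifted version can be compared to $h^{[\delta]}$ via Lemma~\ref{lemma same signature}, and then that the singular top $\s$-reducibility coming from the theorem hypothesis lifts through the multipliers $a,b$ without obstruction.
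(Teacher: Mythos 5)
Your proof is correct and follows essentially the same route as the paper's: you reach a top $\s$-reduced $h^{[\delta]}$ (via strong induction on the signature rather than the paper's minimal-counterexample formulation, but these are equivalent), invoke Lemma~\ref{lemma S-polynomial p}, and split on whether $\pi$ (or $\pi'$) is a syzygy, using Lemma~\ref{lemma same signature} to compare leading terms in the singular-reducibility case. The only small deviation is in the syzygy case, where the paper applies Lemma~\ref{lemma same signature} directly to $h^{[\delta]}$ and the zero labelled polynomial $0^{[\alpha]}$ to conclude $\lt(h)=0$ outright, whereas you subtract the syzygy to drop the signature, invoke the induction hypothesis, and transfer the reduction sequence back; both arguments are sound, but the paper's is a bit more direct.
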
 

\begin{remark}
\chnote{Remark for sig-redundant}
The notion of being singular top $\s$-reducible is equivalent to what is in the (commutative)
literature also called \emph{sig-redundant} (see~\cite{EP11}) and included in the concept of \emph{super top reductions} in~\cite{Gao-2015-new-framework-for}.
Additionally, a regular $\s$-reduced element being singular top $\s$-reducible corresponds to the notion of not being \emph{primitive $\s$-irreducible} in~\cite{AP11}.
\end{remark}

\begin{proof}[Proof of Theorem~\ref{thm spairs}]
Assume, for contradiction, that $G^{[\Sigma]}$ is not a labelled Gr\"obner basis of $I^{[\Sigma]}$ up to signature $\sigma$.
Then, there exists a labelled polynomial $h^{[\delta]} \in I^{[\Sigma]}$ with $\s(\delta) \prec \sigma$ which does not $\s$-reduce to zero by $G^{[\Sigma]}$.
W.l.o.g.\ we let $h^{[\delta]}$ be such that $\s(\delta)$ is minimal.
Furthermore, we can also assume that $h^{[\delta]}$ is top $\s$-reduced.
Then, according to Lemma~\ref{lemma S-polynomial p}, there exist $p^{[\pi]} \in I^{[\Sigma]}$ and $a,b \in \<X>$ such that $\s(a\pi b) = \s(\delta)$ and such that
\begin{enumerate}
	\item $\pi$ is a (trivial) syzygy (between two elements in $G^{[\Sigma]}$), or
	\item $p^{[\pi]}$ is a regular S-polynomial of $G^{[\Sigma]}$ and $ap'^{[\pi']}b$ is not regular top $\s$-reducible where $p'^{[\pi']}$ is the result of regular $\s$-reducing $p^{[\pi]}$.
\end{enumerate}
We distinguish between the two possible cases.

\paragraph{Case 1: $\pi$ is a syzygy}
We denote $0^{[\alpha]} = \frac{\sigc(\delta)}{\sigc(\pi)}a 0^{[\pi]} b$ and note that this element is (regular) top $\s$-reduced. 
Hence, since also $h^{[\delta]}$ is regular top \mbox{$\s$-reduced} and $\st(\delta) = \st(\alpha)$, Lemma~\ref{lemma same signature} yields that $\lt(h) = \lt(0) = 0$.
This implies that $h = 0$, which contradicts the assumption that $h^{[\delta]}$ does not $\s$-reduce to zero.

\paragraph{Case 2: $p^{[\pi]}$ is a regular S-polynomial of $G^{[\Sigma]}$}
Then, by construction we know that $ap'^{[\pi']}b$ is not regular top $\s$-reducible where $p'^{[\pi']}$ is the result of regular $\s$-reducing $p^{[\pi]}$.
By assumption, $\pi'$ is a syzygy or $p'^{[\pi']}$ is singular top $\s$-reducible.
In the first case, we can reuse the arguments from above to reach the same contradiction.
Hence, we can assume that $p' \neq 0$ and that $p'^{[\pi']}$ is singular top \mbox{$\s$-reducible}.
We denote $f^{[\alpha]} = \frac{\sigc(\delta)}{\sigc(\pi')}ap'^{[\pi']} b$ and note that this element is regular top $\s$-reduced since $ap'^{[\pi']}b$ is regular top $\s$-reduced.
Since also $h^{[\delta]}$ is regular top $\s$-reduced and $\st(\delta) = \st(\alpha)$, Lemma~\ref{lemma same signature} yields that $\lt(h) = \lt(f)$.
So, anything that top $\s$-reduces $ap'^{[\pi']}b$ also top $\s$\nobreakdash-reduces $h^{[\delta]}$.
We note that $ap'^{[\pi']}b$ is top $\s$-reducible as $p'^{[\pi']}$ is top $\s$-reducible.
Thus,  $h^{[\delta]}$ is top $\s$-reducible, which is a contradiction.
\end{proof}

\begin{example}[continues=infinite-sgb,label=infinite-calculations]
  Recall that in Example~\ref{ex infinite s-GB} we considered the ideal $I = (f_1,f_2,f_3) \subseteq K\<X>$ with
  \[
    f_1 = xyx - xy, \qquad f_2 = yxy, \qquad f_3 = xyy - xxy,
  \]
  over a field $K$ in the variables $X = \{x,y\}$.
  We also defined $f_4= x x y$ and used $\preceq_{\textup{deglex}}$, where $x \prec_{\textup{lex}} y$, as a monomial ordering and $\preceq_{\textbf{top}}$ as a module ordering.

  We claimed that a minimal labelled Gröbner basis of $I^{[\Sigma]}$, w.r.t.\ the family of generators $f_1,f_2,f_3$, is given by
  \[
    G^{[\Sigma]} = \{f_1^{[\varepsilon_1]}, f_2^{[\varepsilon_2]}, f_3^{[\varepsilon_3]}, f_4^{[\alpha]} \} \cup \{g_n^{[\gamma_n]} \mid n \geq 0\},
  \]
  with $g_n = yx^{n+2}y$ and certain $\alpha, \gamma_n \in \Fr$ such that $\s(\alpha) = \varepsilon_1 y$ and $\s(\gamma_n) = y\varepsilon_3y^n$.

  We now prove that $G^{[\Sigma]}$ is indeed a labelled Gröbner basis of $I^{[\Sigma]}$, using Theorem~\ref{thm spairs}.
  First, since $G^{[\Sigma]}$ contains $f_1^{[\varepsilon_1]}, f_2^{[\varepsilon_2]}$ and
  $f_3^{[\varepsilon_3]}$, the first hypothesis of the theorem is satisfied.
  Then, we verify that all regular S-polynomials top $\s$-reduce to $0$ or to a singular top $\s$-reducible element.
  It is a straightforward, if tedious, calculation, which is detailed in~\ref{appendix}.

  Then, we prove that $G^{[\Sigma]}$ is minimal.
  Looking at the leading terms, the only possible reductions would be using $f_4^{[\alpha]}$ to reduce $f_3^{[\varepsilon_3]}$ or $g_n^{[\gamma_n]}$.
  But as $\s(\alpha) \succ \varepsilon_3$ and $yx^n \s(\alpha) \succ \s(\gamma_n)$, those reductions would not be $\s$-reductions.
  So none of the elements of $G^{[\Sigma]}$ is $\s$-reducible modulo the others, and $G^{[\Sigma]}$ is a minimal labelled Gröbner basis.
  
  The claim that $I^{[\Sigma']}$ with the family of generators $f_1,f_2,f_3,f_4$ has minimal labelled Gröbner basis $G^{[\Sigma']} = \{f_1^{[\varepsilon_1]}, f_2^{[\varepsilon_2]}, f_3^{[\varepsilon_3]}, f_4^{[\varepsilon_4]} \} \cup \{yxxy^{[\gamma_0]} \}$ is proved along the same lines in~\ref{appendix}.
\end{example}

\subsection{Effective description of the module of syzygies}
\label{sec:algorithm}

Similarly to Buchberger's classical characterization of Gröbner bases, Theorem~\ref{thm spairs} allows us to state a first, non-optimized version of a signature-based algorithm for noncommutative polynomials, by ensuring that regular S-polynomials which are not trivial syzygies regular $\s$\nobreakdash-reduce to zero or to a singular top $\s$-reducible normal form.

The fact that we need to handle at least some trivial syzygies separately is a crucial difference to the commutative case: in the commutative case, Buchberger's coprime criterion and the F5 criterion allow to eliminate some (resp.\ all) trivial syzygies, but signature-based algorithms terminate even without the criteria.

By contrast, in the noncommutative case, the module of trivial syzygies is in general not finitely generated, which requires handling trivial syzygies separately.
In doing so, we are able to obtain an effective description of the module of syzygies.
More precisely, we state the following fact about syzygies.

\begin{lemma}\label{lemma syzygy module}
Let $\mu \in \Fr$ be a syzygy and let $G^{[\Sigma]} \subseteq I^{[\Sigma]}$ be a labelled Gr\"obner basis of $I^{[\Sigma]}$ up to signature $\s(\mu)$.
Then, there exist $p^{[\pi]} \in I^{[\Sigma]}$ and $a,b \in \<X>$ such that $\s(a \pi b) = \s(\mu)$ and such that one of the following conditions holds:
\begin{enumerate}
	\item $\pi$ is a trivial syzygy between two elements in $G^{[\Sigma]}$;
	\item $p^{[\pi]}$ is a regular S-polynomial of $G^{[\Sigma]}$ which regular $\s$-reduces to zero by $G^{[\Sigma]}$;
\end{enumerate}
\end{lemma}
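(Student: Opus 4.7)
The plan is to apply Lemma~\ref{lemma S-polynomial p} to the labelled polynomial $h^{[\delta]} = 0^{[\mu]}$ and, in the case where it produces a regular S-polynomial, to use Lemma~\ref{lemma same signature} to force the regular $\s$-reduced form of that S-polynomial to be zero.

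First I would verify the hypotheses of Lemma~\ref{lemma S-polynomial p} for $h^{[\delta]} = 0^{[\mu]}$. The element $0^{[\mu]}$ is vacuously top $\s$-reduced because $0$ has no leading monomial. The requirement that all $u^{[\mu']} \in I^{[\Sigma]}$ with $\s(\mu') \prec \s(\mu)$ $\s$-reduce to zero by $G^{[\Sigma]}$ is precisely the assumption that $G^{[\Sigma]}$ is a labelled Gr\"obner basis up to signature $\s(\mu)$. The only hypothesis needing extra attention is the existence, for every $i$ with $\varepsilon_i \preceq \s(\mu)$, of some $g_i^{[\gamma_i]} \in G^{[\Sigma]}$ with $\s(\gamma_i) = \varepsilon_i$; this can be secured without loss of generality by adjoining the elements $f_i^{[\varepsilon_i]}$ to $G^{[\Sigma]}$, since adding reducers only facilitates $\s$-reductions and therefore preserves the labelled Gr\"obner basis property.

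Lemma~\ref{lemma S-polynomial p} then produces $p^{[\pi]} \in I^{[\Sigma]}$ and $a,b \in \<X>$ with $\s(a\pi b) = \s(\mu)$ satisfying one of the following: (i) $\pi$ is a trivial syzygy between two elements of $G^{[\Sigma]}$, which is already the first alternative of the lemma; or (ii) $p^{[\pi]}$ is a regular S-polynomial of $G^{[\Sigma]}$ whose regular $\s$-reduced form $p'^{[\pi']}$ is such that $ap'^{[\pi']}b$ is not regular top $\s$-reducible.

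To conclude in case (ii), I would show that $p' = 0$, which is exactly what is meant by $p^{[\pi]}$ regular $\s$-reducing to zero. If $\pi' = 0$ there is nothing to prove; otherwise $\s(\pi') = \s(\pi)$ and $a\pi'b \neq 0$ with $\s(a\pi'b) = \s(\mu)$, so I can set $c = \sigc(\mu)/\sigc(a\pi'b)$ and $L^{[\lambda]} = c \cdot a p'^{[\pi']} b$ so that $\st(\lambda) = \st(\mu)$. The labelled polynomial $0^{[\mu]}$ is vacuously regular top $\s$-reduced, and $L^{[\lambda]}$ is regular top $\s$-reduced because $ap'^{[\pi']}b$ is (scaling by a nonzero constant does not affect reducibility). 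Applying Lemma~\ref{lemma same signature} to $0^{[\mu]}$ and $L^{[\lambda]}$, whose hypothesis on smaller signatures is again the labelled Gr\"obner basis property up to signature $\s(\mu)$, yields $\lt(L) = \lt(0) = 0$, forcing $L = 0$ and hence $p' = 0$. The main obstacle is precisely this last step: Lemma~\ref{lemma S-polynomial p} only guarantees the non-regular-top-reducibility of $ap'^{[\pi']}b$ rather than the desired equality $p' = 0$, and the trick of comparing against the syzygy $0^{[\mu]}$ via Lemma~\ref{lemma same signature} is what bridges that gap.
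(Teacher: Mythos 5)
Your proposal follows the same route as the paper: apply Lemma~\ref{lemma S-polynomial p} to $0^{[\mu]}$ (vacuously top $\s$-reduced), keep case~1 as is, and in case~2 invoke Lemma~\ref{lemma same signature} against $0^{[\mu]}$ to force $\lt$ of the (scaled) normal form to vanish, hence $p'=0$. The scalar you choose ($\sigc(\mu)/\sigc(a\pi'b)$) coincides with the paper's $\sigc(\mu)/\sigc(\pi')$, since taking a monomial multiple does not change the signature coefficient.

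One small difference worth pointing out: you explicitly verify the hypothesis of Lemma~\ref{lemma S-polynomial p} requiring that, for every $i$ with $\varepsilon_i \preceq \s(\mu)$, some $g_i^{[\gamma_i]}\in G^{[\Sigma]}$ has $\s(\gamma_i)=\varepsilon_i$; the paper applies the lemma without comment. It is good that you noticed this, but your ``without loss of generality'' patch of adjoining $f_1^{[\varepsilon_1]},\dots,f_r^{[\varepsilon_r]}$ to $G^{[\Sigma]}$ is not quite a WLOG: it preserves the labelled Gr\"obner basis property, but it changes the reference set in the conclusion --- the trivial syzygy or regular S-polynomial you obtain from Lemma~\ref{lemma S-polynomial p} is then between elements of the enlarged set $G^{[\Sigma]}\cup\{f_i^{[\varepsilon_i]}\}$, while the statement of the present lemma asks for it to be between elements of $G^{[\Sigma]}$ itself. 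To recover the original conclusion one would still have to argue how the possibly-adjoined $f_i^{[\varepsilon_i]}$ can be eliminated (for instance by rewriting $\varepsilon_i$ via a reduction of $f_i^{[\varepsilon_i]}$ modulo $G^{[\Sigma]}$ and decomposing the resulting trivial syzygy), which your proposal leaves implicit.
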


\begin{proof}
Since $0^{[\mu]}$ is top $\s$-reduced by $G^{[\Sigma]}$, Lemma~\ref{lemma S-polynomial p} yields the existence of $p^{[\pi]} \in I^{[\Sigma]}$ and $a,b\in \<X>$ such that $\s(\mu) = \s(a\pi b)$ and such that one of the following conditions holds:
\begin{enumerate}
	\item $\pi$ is a trivial syzygy between two elements in $G^{[\Sigma]}$;
	\item $p^{[\pi]}$ is a regular S-polynomial of $G^{[\Sigma]}$ and $ap'^{[\pi']}b$ is not regular top $\s$-reducible where $p'^{[\pi']}$ is the result of regular $\s$-reducing $p^{[\pi]}$;
\end{enumerate}
If $\pi$ is a trivial syzygy, we are done.
Otherwise, since neither $0^{[\mu]}$ nor $cap'^{[\pi']}b$, with $c = \frac{\sigc(\mu)}{\sigc(\pi')}$, are regular top $\s$-reducible and $\st(\mu) = \st(ca \pi' b)$, Lemma~\ref{lemma same signature} yields that $\lt(cap'b) = \lt(0) = 0$, and consequently, also $p' = 0$.
\end{proof}

Lemma~\ref{lemma syzygy module} allows us to describe more precisely the syzygy module  $S=\Syz(f_1,\dots,f_r)$.
Consider the set $\Htriv$ of trivial trivial syzygies of $G^{[\Sigma]}$
\begin{equation*}
  \Htriv = \left\{\gamma_1 m g_2 - g_1 m \gamma_2 \mid g_1^{[\gamma_1]}, g_2^{[\gamma_2]} \in G^{[\Sigma]}, m \in \<X>\right\}.
\end{equation*}
Note that the set of signatures of $\Htriv$ contains all the elements of the form
\begin{equation}
  \label{eq:1}
  \max \{\s(\gamma_{1})m\lm(g_{2}),\lm(g_{1})m\s(\gamma_{2})\},
\end{equation}
for $g_{1}^{[\gamma_{1}]}, g_{2}^{[\gamma_{2}]} \in G^{[\Sigma]}$.
It may happen that this set contains infinitely many module monomials which do not divide each other, and indeed this will be the case for all sufficiently non-trivial ideals.
It implies that for such ideals, $S$ does not admit a finite Gröbner basis.

However, Lemma~\ref{lemma syzygy module} shows that a Gröbner basis of $S$ is given by adding to $\Htriv$ all the syzygies found by regular $\s$-reducing to zero all regular S-polynomials of $G^{[\Sigma]}$.
Furthermore, if $G^{[\Sigma]}$ is finite, the set of signatures of syzygies in $\Htriv$ can be enumerated using the description \eqref{eq:1}, and altogether, we obtain an effective description of the syzygy module of $f_{1},\dots,f_{r}$.

\subsection{Algorithm}
\label{sec:algorithm-1}

The algorithm, incorporating both the computation of the labelled Gröbner basis and of the aforementioned description of the syzygy module of $f_{1},\dots,f_{r}$, is given in Algorithm~\ref{algo LabelledGB}.
\chnote{added a note here}
We note that we state this algorithm only for theoretical consideration.
In an actual implementation, one would replace all computations with labelled polynomials in Algorithm~\ref{algo LabelledGB} by computations with signature polynomials.
In Section~\ref{sec:reconstruction}, we state with Algorithm~\ref{algo SigGB} an optimized version of Algorithm~\ref{algo LabelledGB} incorporating this.


\begin{algorithm}[h]
\renewcommand{\algorithmicrequire}{\textbf{Input:}}
\renewcommand{\algorithmicensure}{\textbf{Output (if the algorithm terminates):}}
\caption{LabelledGB}\label{algo LabelledGB}
\begin{algorithmic}[1]
\Require{$(f_1,\dots,f_r) \in K\<X>^{r}$ generating an ideal $I$}
\Ensure{${}$\newline 
  \vspace{-0.45cm}
    \begin{itemize}
      \setlength{\itemindent}{-15pt}
      \item $G^{[\Sigma]}$ a labelled Gr\"obner basis of $I^{[\Sigma]}$
      \item $H \subseteq \Fr$ s.t.\ 
      $H \cup  \{ \gamma m g' - g m \gamma' \mid g^{[\gamma]}, g'^{[\gamma']} \in G^{[\Sigma]}, \,m \in \<X>\}$ is a Gr\"obner basis of $\Syz(f_1,\dots,f_r)$
    \end{itemize}
}
\State{$G^{[\Sigma]} \leftarrow \emptyset$}
\State{$H \leftarrow \emptyset$}
\State{$P \leftarrow \{f_1^{[\varepsilon_1]},\dots,f_r^{[\varepsilon_r]}\}$}
\While{$P \neq \emptyset$}
	\State{choose $p^{[\pi]} \in P$ s.t.\ $\s(\pi) = \min \{\s(\pi') \mid p'^{[\pi']} \in P\}$}\label{line choice p 1}
	\State{$P \leftarrow P\setminus \{p^{[\pi]}\}$}
	\State{$p'^{[\pi']} \leftarrow$ result of regular $\s$-reducing $p^{[\pi]}$ by $G^{[\Sigma]}$}\label{line reduction sigGB 1}
	\If{$p' = 0$}
		\State{$H \leftarrow H \cup \{\pi'\}$}
	\ElsIf{$p'^{[\pi']}$ is not singular top $\s$-reducible by $G^{[\Sigma]}$} \label{line singular check 1}
		\State{$G^{[\Sigma]} \leftarrow G^{[\Sigma]} \cup \{p'^{[\pi']}\}$} \label{line add to G 1}
		\State{$P \leftarrow P \cup \{$all regular S-polynomials between $p'^{[\pi']}$ and all $g^{[\gamma]} \in G^{[\Sigma]}\}$ }\label{line form s-polies 1}
	\EndIf
\EndWhile
\State{\textbf{return }$G^{[\Sigma]},H$}
\end{algorithmic}
\end{algorithm}

We note that we cannot expect Algorithm~\ref{algo LabelledGB} to always terminate since, as already mentioned, there are polynomials in $K\<X>$ generating a module $I^{[\Sigma]}$ which does not have a finite labelled Gr\"obner basis.
However, the following theorem ensures that the algorithm always correctly enumerates a labelled Gr\"obner basis of the module $I^{[\Sigma]}$ defined by the input $(f_1,\dots,f_r) \in K\<X>^r$, and a Gr\"obner basis of the syzygy module $\Syz(f_1,\dots,f_r)$.

\begin{theorem}\label{thm algo LabelledGB correct}
Let $(f_1,\dots,f_r) \in K\<X>^r$, denote $G_0^{[\Sigma]} = H_0 = \emptyset$.
Furthermore, let $G_n^{[\Sigma]}$ and $H_n$ be the value of $G^{[\Sigma]}$ and $H$ in Algorithm~\ref{algo LabelledGB} after $n$ iterations of the \textwhile loop given $f_1,\dots,f_r$ as input.
Then, the following holds:
\begin{enumerate}
	\item $G^{[\Sigma]} = \bigcup_{n \geq 0} G_n^{[\Sigma]}$ is a labelled Gr\"obner basis of $I^{[\Sigma]}$ w.r.t.\ the family of generators $f_1,\dots,f_r$;
	\item Let $\Htriv =  \{ \gamma m g' - g m \gamma' \mid g^{[\gamma]}, g'^{[\gamma']} \in G^{[\Sigma]}\}$. Then,
	\[
	H \cup \Htriv =\bigcup_{n\geq 0} \big( H_n \cup \{ \gamma m g' - g m \gamma' \mid g^{[\gamma]}, g'^{[\gamma']} \in G_n^{[\Sigma]}, m \in \<X>\}\big)
	\]
	is a Gr\"obner basis of $\Syz(f_1,\dots,f_r)$.
\end{enumerate}
In this sense, Algorithm~\ref{algo LabelledGB} enumerates a labelled Gr\"obner basis of $I^{[\Sigma]}$ and a Gr\"obner basis of the syzygy module $\Syz(f_1,\dots,f_r)$.
\end{theorem}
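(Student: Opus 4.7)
The plan is to fix an arbitrary module monomial $\sigma \in \MFr$, prove that after finitely many iterations of the \textwhile loop both assertions hold up to signature $\sigma$, and then take the union over all $\sigma$ to obtain the global statements. The key tools are Theorem~\ref{thm spairs} for Claim~1, Lemma~\ref{lemma syzygy module} for Claim~2, and the fairness of $\preceq_\M$ together with the selection rule on line~\ref{line choice p 1} to ensure \emph{termination up to signature $\sigma$}.

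The first step is to prove this termination property: there exists $N \in \N$ such that, from iteration $N$ onwards, the set $P$ contains no element of signature $\preceq \sigma$. Fairness of $\preceq_\M$ guarantees that $\{\mu \in \MFr \mid \mu \preceq \sigma\}$ is finite. Whenever a labelled polynomial $p'^{[\pi']}$ is added to $G^{[\Sigma]}$ on line~\ref{line add to G 1}, it is both regular-top and singular-top irreducible by the current $G^{[\Sigma]}$ (the former because line~\ref{line reduction sigGB 1} exhausted regular $\s$-reductions, the latter by the guard on line~\ref{line singular check 1}), so Lemma~\ref{lemma signature s-poly} shows that all S-polynomials created on line~\ref{line form s-polies 1} have signature strictly larger than $\s(\pi') = \s(\pi)$. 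Combined with the minimum-signature selection rule on line~\ref{line choice p 1}, this means the count of elements in $P$ with signature $\preceq \sigma$ is monotonically non-increasing and strictly decreases whenever such an element is selected. Being bounded by the finite cardinality of $\{\mu \preceq \sigma\}$, it must reach zero in finitely many steps.

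With $N$ fixed, I would invoke Theorem~\ref{thm spairs} to derive Claim~1. By the design of the conditional branches, every regular S-polynomial of $G_N^{[\Sigma]}$ with signature $\prec \sigma$ has been regular $\s$-reduced to a labelled polynomial $p'^{[\pi']}$ that either satisfies $p' = 0$ (so $\pi'$ is a syzygy, recorded in $H$) or is singular top $\s$-reducible. The hypothesis of Theorem~\ref{thm spairs} requiring representatives with signatures $\varepsilon_i \prec \sigma$ is secured because each initial element $f_i^{[\varepsilon_i]}$ is placed in $P$ at initialization and, upon being processed, either contributes a representative of signature $\varepsilon_i$ to $G^{[\Sigma]}$ or $\s$-reduces to zero, in which case the recorded syzygy lets one rewrite any higher-signature labelled polynomial in terms of strictly smaller signatures. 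Applying Theorem~\ref{thm spairs} thus yields that $G_N^{[\Sigma]} \subseteq G^{[\Sigma]}$ is a labelled Gröbner basis up to signature $\sigma$; since $\sigma$ is arbitrary, $G^{[\Sigma]}$ is a labelled Gröbner basis of $I^{[\Sigma]}$.

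For Claim~2, I would argue analogously using Lemma~\ref{lemma syzygy module}. Any syzygy $\mu \in \Syz(f_1,\dots,f_r)$ has a well-defined signature $\s(\mu)$; choosing $N$ large enough so that $G_N^{[\Sigma]}$ is a labelled Gröbner basis up to $\s(\mu)$ and every regular S-polynomial of signature $\preceq \s(\mu)$ has been processed, the lemma provides $a, b \in \<X>$ and $p^{[\pi]} \in I^{[\Sigma]}$ with $\s(a\pi b) = \s(\mu)$ such that $\pi$ is either a trivial syzygy between two elements of $G_N^{[\Sigma]}$ or the result of regular $\s$-reducing a regular S-polynomial of $G_N^{[\Sigma]}$ to zero. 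In the former case $\pi$ belongs to $\Htriv$; in the latter, $\pi$ was added to $H_N \subseteq H$ at that iteration. Hence $\s(\mu)$ is divisible by some signature in $H \cup \Htriv$, which is the required Gröbner basis property for $\Syz(f_1,\dots,f_r)$. The main obstacle throughout is reconciling the static hypotheses of Theorem~\ref{thm spairs} and Lemma~\ref{lemma syzygy module} with the dynamic growth of $P$ and $G^{[\Sigma]}$ — the fair module ordering together with the strict-increase bound from Lemma~\ref{lemma signature s-poly} are precisely the ingredients that bridge this gap.
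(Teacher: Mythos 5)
Your overall architecture — proving that after finitely many iterations the algorithm has ``passed'' any fixed signature $\sigma$, then invoking Theorem~\ref{thm spairs} for the labelled Gröbner basis and Lemma~\ref{lemma syzygy module} for the syzygy module, and taking unions over $\sigma$ — is precisely the paper's strategy, with the stabilization claim playing the role of Lemma~\ref{lemma ascending signatures}. However, your proof of that stabilization claim contains a concrete error.

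You assert that ``the count of elements in $P$ with signature $\preceq \sigma$ is monotonically non-increasing and strictly decreases whenever such an element is selected,'' and that this count is ``bounded by the finite cardinality of $\{\mu \preceq \sigma\}$.'' Neither claim is correct. Lemma~\ref{lemma signature s-poly} only ensures that the S-polynomials added on line~\ref{line form s-polies 1} have signature strictly larger than $\s(\pi)$, the signature of the \emph{currently selected} element; it does not make them larger than $\sigma$. Thus when the selected element has signature $\s(\pi) \prec \sigma$, the new S-polynomials may well have signatures in the interval $(\s(\pi),\sigma]$, and the count of elements of signature $\preceq \sigma$ can \emph{increase}. Moreover, the count of such elements is not bounded by $|\{\mu \preceq \sigma\}|$: several labelled polynomials in $P$ can share the same signature, so a cardinality bound on the set of signatures does not bound the number of elements. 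As written, your argument does not establish that the algorithm reaches a point where $P$ contains no element of signature $\preceq \sigma$.

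The conclusion is true, but it requires a more careful argument along the lines of the paper's Lemma~\ref{lemma ascending signatures}. One clean way: note that the minimum signature of $P$ is non-decreasing between iterations (the removed element is a minimum, and anything added is strictly larger). If it stayed $\preceq \sigma$ forever, fairness forces it to be eventually constant, say $\mu_0 \preceq \sigma$; but then each iteration strictly decreases the (finite) number of elements of $P$ with signature $\mu_0$ while adding none, a contradiction. Once the minimum exceeds $\sigma$, nothing of signature $\preceq \sigma$ ever reappears. The paper instead argues that the sorted tuple of signatures of $P$ strictly increases lexicographically, combined with fairness. Either route is fine, but the monotone-count argument you gave does not close the gap. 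The rest of your proof sketch (the invocations of Theorem~\ref{thm spairs} and Lemma~\ref{lemma syzygy module}) is sound once this stabilization step is repaired, and matches the paper.
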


In order to prove this theorem, we first state the following useful lemma which ensures that Algorithm~\ref{algo LabelledGB} cannot ``get stuck'' at a certain signature indefinitely.

\begin{lemma}\label{lemma ascending signatures}
During the execution of Algorithm~\ref{algo LabelledGB}, elements from $P$ are processed in ascending order w.r.t.\ their signatures and every possible signature is eventually processed.
\end{lemma}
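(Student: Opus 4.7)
The plan is to prove two claims in tandem: (a) the sequence $(\sigma_n)_{n \geq 1}$ of signatures of elements processed at successive iterations of the \textwhile loop is non-decreasing, and (b) for every module monomial $\sigma \in \MFr$ only finitely many iterations process an element with signature exactly $\sigma$. Together with the fairness of the module ordering, these two facts give the lemma: every signature that ever appears in $P$ is the signature of some processed element, and they are processed in ascending order.

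For (a), the element selected at line~\ref{line choice p 1} has minimum signature in $P$, so I only need to show that every element added to $P$ at line~\ref{line form s-polies 1} has signature strictly greater than $\sigma_n = \s(\pi)$. The regular $\s$-reduction at line~\ref{line reduction sigGB 1} produces $p'^{[\pi']}$ with $\s(\pi') = \s(\pi)$. When $p'^{[\pi']}$ is actually added to $G^{[\Sigma]}$, it is by construction neither regular top $\s$-reducible (being regular $\s$-reduced) nor singular top $\s$-reducible (by the check at line~\ref{line singular check 1}), so it fails to be top $\s$-reducible at all. Lemma~\ref{lemma signature s-poly} then yields that every regular S-polynomial formed between $p'^{[\pi']}$ and an element of $G^{[\Sigma]} \cup \{p'^{[\pi']}\}$ has signature strictly greater than $\s(\pi') = \sigma_n$.

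For (b), I would proceed by well-founded induction on $\sigma$ using that $\preceq$ is a well-ordering on $\MFr$. Assume the claim for every $\mu \prec \sigma$. By fairness the set $\{\mu \in \MFr \mid \mu \prec \sigma\}$ is finite, so the inductive hypothesis gives a finite bound $N$ on the number of iterations ever processing a signature $\prec \sigma$. By (a), every iteration $n > N$ satisfies $\sigma_n \succeq \sigma$. After $N$ iterations the set $P$ is finite, since it is built from the $r$ initial labelled polynomials plus finitely many S-polynomials added per preceding iteration, hence only finitely many elements with signature exactly $\sigma$ lie in $P$ at that point. Moreover, no new element with signature exactly $\sigma$ is ever added to $P$ afterwards: for $n > N$ the newly formed S-polynomials have signatures $\succ \sigma_n \succeq \sigma$, i.e., strictly greater than $\sigma$. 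Since each iteration with $\sigma_n = \sigma$ removes exactly one such element, only finitely many iterations can process signature $\sigma$, closing the induction.

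The main obstacle is the strict inequality used in (a), which in turn is the cornerstone of the non-proliferation of signature-$\sigma$ elements in (b). It is essential that $p'^{[\pi']}$ fails to be top $\s$-reducible in both the regular and singular sense before we invoke Lemma~\ref{lemma signature s-poly}; without the singular-top check at line~\ref{line singular check 1}, a newly formed S-polynomial could inherit the signature $\sigma_n$ and break the monotonicity needed for the induction. Combined with fairness, which is what allows us to convert the per-signature finiteness granted by the inductive hypothesis into the total finiteness of iterations below $\sigma$, these ingredients suffice.
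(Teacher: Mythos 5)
Your proof is correct. For the first half (ascending order), you use the same key ingredient as the paper: Lemma~\ref{lemma signature s-poly}, applicable because $p'^{[\pi']}$ is regular $\s$-reduced and passes the singular-top check at line~\ref{line singular check 1}, hence is not top $\s$-reducible by the current $G^{[\Sigma]}$; this forces all newly formed S-polynomials to have signatures strictly above $\s(\pi)$, and since the minimum is always chosen, $(\sigma_n)$ is non-decreasing.

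For the second half (every signature eventually processed) you take a somewhat different route from the paper. The paper attaches to $P$ the sorted tuple $P'$ of its signatures, argues that $P'$ strictly increases lexicographically between iterations, and then invokes fairness. Your argument instead performs a well-founded induction on $\sigma$ with respect to $\preceq$: fairness makes $\{\mu \mid \mu \prec \sigma\}$ finite, the inductive hypothesis bounds the iterations spent below $\sigma$ by some $N$, monotonicity then pins those iterations to an initial segment $1,\dots,N$, and since $P$ is always finite and no element with signature exactly $\sigma$ can be created after that point, only finitely many iterations can process $\sigma$. Your formulation is arguably the more careful of the two: the paper's lexicographic claim is delicate when the tuple length decreases (a strict prefix is not lexicographically larger under the usual convention), whereas the well-founded induction sidesteps this issue entirely. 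Both arguments ultimately rest on the same two pillars, the strict signature increase of new S-polynomials (via Lemma~\ref{lemma signature s-poly}) and fairness.
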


\begin{proof}
\chnote{Simplified proof.}
Note that the set $P$ is finite at all times.
Hence, we can associate to $P$ the tuple $P'$ of all signatures of $P$ sorted in increasing order, i.e., if $P = \{p_1^{[\pi_1]},\dots,p_n^{[\pi_n]}\}$ with $\s(\pi_1) \preceq \dots \preceq \s(\pi_n)$, then $P' = (\s(\pi_1), \dots, \s(\pi_n))$.
We claim that $P'$ strictly increases lexicographically between each run of line~\ref{line choice p 1}.
If the algorithm does not reach line~\ref{line form s-polies 1} after choosing $p^{[\pi]}\in P$ in line~\ref{line choice p 1}, this statement follows since $p^{[\pi]}$ is removed but nothing is added to $P$.
Otherwise, the statement follows from Lemma~\ref{lemma signature s-poly}, which implies that the signatures of the elements added to $P$ in line~\ref{line form s-polies 1} are strictly larger than $\s(\pi)$.
Note that Lemma~\ref{lemma signature s-poly} is applicable here since the algorithm only reaches line~\ref{line form s-polies 1} if the normal form $p'^{[\pi']}$ computed in line~\ref{line reduction sigGB 1} is not top $\s$-reducible by $G^{[\Sigma]}$ (before $p'^{[\pi']}$ is added to $G^{[\Sigma]}$).
Then, $P'$ strictly increasing between each run of line~\ref{line choice p 1} and the fact that $p^{[\pi]}$ is always chosen to have minimal signature among all elements in $P$ show that elements from $P$ are processed in ascending order w.r.t.\ their signature.
Furthermore, this, together with the fairness of the module ordering, also implies that every element in $P$ will be removed eventually.

\end{proof}

Using this lemma, we can now proceed to prove Theorem~\ref{thm algo LabelledGB correct}.

\begin{proof}[Proof of Theorem~\ref{thm algo LabelledGB correct}]
We prove each claim separately.

\paragraph{$G^{[\Sigma]}$ is a labelled Gr\"obner basis of $I^{[\Sigma]}$}
For $n \geq 0$, we denote by $p_n^{[\pi_n]}$ the labelled polynomial which is chosen from the set $P$ in the $(n+1)$-th iteration of the \textwhile loop of Algorithm~\ref{algo LabelledGB}.
We claim that for every $n \geq 0$, the set $G_n^{[\Sigma]}$ is a labelled Gr\"obner basis up to signature $\s(\pi_n)$.
Indeed, it follows from Lemma~\ref{lemma ascending signatures}, that, when $p_n^{[\pi_n]}$ is chosen in line~\ref{line choice p 1}, all S-polynomials as well as all $f_i^{[\varepsilon_i]}$ with signature strictly smaller than $\s(\pi_n)$ have already been processed. 
Therefore, Theorem~\ref{thm spairs} yields that $G_n^{[\Sigma]}$ is a labelled Gr\"obner basis up to signature $\s(\pi_n)$.
Since every signature is eventually processed, the set $G^{[\Sigma]} = \bigcup_{n \geq 0} G_n^{[\Sigma]}$ is a labelled Gr\"obner basis of $I^{[\Sigma]}$.

\paragraph{$H \cup \Htriv$ is a Gr\"obner basis of $\Syz(f_1,\dots,f_r)$}
Let $\mu \in \Syz(f_1,\dots,f_r) \setminus\{0\}$.
We have to show that there exist $n \geq 0$ and $a,b \in \<X>$ such that $\s(\mu) = a \s(\alpha) b$ for some $\alpha \in H_n \cup \{\gamma m g' - g m \gamma' \mid g^{[\gamma]}, g'^{[\gamma']} \in G_n^{[\Sigma]}, m \in \<X>\}$.
To this end, let $n$ be such that $G_n^{[\Sigma]}$ is a labelled Gr\"obner basis up to a signature $\s(\mu)$.
Such an $n$ must exist due to the previous discussion. 
Then, Lemma~\ref{lemma syzygy module} yields the existence of $p^{[\pi]} \in I^{[\Sigma]}$ and $a,b \in \<X>$ such that $\s(a \pi b) = \s(\mu)$ and such that one of the following conditions holds:
\begin{enumerate}
	\item $\pi$ is a trivial syzygy between two elements in $G_n^{[\Sigma]}$;
	\item $p^{[\pi]}$ is an S-polynomial of $G_n^{[\Sigma]}$ which regular $\s$-reduces to zero by $G_n^{[\Sigma]}$;
\end{enumerate}
Hence, either a module element with signature $\s(\pi)$ has been added to $H_n$ or there exist $g^{[\gamma]}$, $g'^{[\gamma']} \in G_n^{[\Sigma]}$ and $m \in \<X>$ such that $\s(\pi) = \s(\gamma m g' - g m \gamma')$.
\end{proof}

We make a few observations about Algorithm~\ref{algo LabelledGB}.
\begin{enumerate}
	\item As already mentioned in Lemma~\ref{lemma ascending signatures}, Algorithm~\ref{algo LabelledGB} processes S-polynomials in ascending order w.r.t.\ their signature.
	Furthermore, the requirement that $\preceq$ is a fair module ordering ensures that no S-polynomial is postponed indefinitely and consequently enforces a fair selection strategy.
	Both of these properties are crucial to ensure the correctness of the algorithm.
	
        \item It is possible that at some point in the algorithm the set $\{g \mid g^{[\gamma]}\in G^{[\Sigma]}\}$ is a Gröbner basis, even if $G^{[\Sigma]}$ is not a labelled Gröbner basis, or even if $I^{[\Sigma]}$ does not admit a finite labelled Gröbner basis. For instance,
        using the family of generators $f_1,f_2,f_3$ from Example~\ref{infinite-sgb} as input to Algorithm~\ref{algo LabelledGB}, after processing the signature $\varepsilon_{1}y$, the set $G^{[\Sigma]}$ becomes $G^{[\Sigma]} = \{f_1^{[\varepsilon_{1}]},f_2^{[\varepsilon_{2}]},f_3^{[\varepsilon_{3}]},f_4^{[\alpha]}\}$, and the set of corresponding polynomials is a Gröbner basis of $I = (f_1,f_2,f_3)$, although $G^{[\Sigma]}$ is not a labelled Gröbner basis of the module $I^{[\Sigma]}$.

        We do not know whether this happens whenever an ideal $I$ admits a finite Gröbner basis.

        \item After every iteration of the \textwhile loop, the set $G^{[\Sigma]}$ is a labelled Gr\"obner basis up to signature $\sigma$, 
	where $\sigma \in \MFr$ is the minimal signature of all elements left in $P$.
	
	\item Whenever an element $p'^{[\pi']}$ is added to $G^{[\Sigma]}$ in line~\ref{line add to G 1}, it is not top $\s$-reducible by $G^{[\Sigma]}$. 
	Also, no element $q'^{[\rho']}$, which is added to $G^{[\Sigma]}$ after $p'^{[\pi']}$, can be used to top $\s$-reduce $p'^{[\pi']}$.
	To see this, we note that it follows from the first point above that $\s(\pi') \preceq \s(\rho')$.
	Now, if $\s(\pi') \prec \s(\rho')$, then $q'^{[\rho']}$ can obviously not be used to $\s$-reduce $p'^{[\pi']}$.
	If $\s(\pi') = \s(\rho')$, then $q'^{[\rho']}$ can only be used to top $\s$-reduce $p'^{[\pi']}$ if $\lm(q') = \lm(p')$. 
	But this would imply that $q'^{[\rho']}$ is singular top $\s$-reducible by $p'^{[\pi']}$, which would contradict the check in line~\ref{line singular check 1}.
\end{enumerate}

The following corollary is an immediate consequence of the last observation.
\begin{corollary}\label{cor:enum-min-sGB}
Algorithm~\ref{algo LabelledGB} enumerates a minimal labelled Gr\"obner basis.
\end{corollary}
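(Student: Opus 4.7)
The plan is to combine Theorem~\ref{thm algo LabelledGB correct} (which already tells us that the set $G^{[\Sigma]}$ enumerated by the algorithm is a labelled Gröbner basis of $I^{[\Sigma]}$) with the fourth observation immediately preceding the corollary, which is precisely the pairwise minimality condition. So the only genuine task is to justify that observation carefully, i.e., to show that for any two elements $p'^{[\pi']}, q'^{[\rho']} \in G^{[\Sigma]}$ with $p'^{[\pi']}$ added to $G^{[\Sigma]}$ strictly before $q'^{[\rho']}$, neither can be used to top $\s$-reduce the other.

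For the direction ``$p'^{[\pi']}$ is not top $\s$-reducible by $q'^{[\rho']}$'', I would simply note that in line~\ref{line reduction sigGB 1}, the element $p'^{[\pi']}$ was computed as a regular $\s$-reduced normal form by a set already containing $q'^{[\rho']}$, so it cannot be regular top $\s$-reducible by $q'^{[\rho']}$; singular top $\s$-reducibility is also excluded by the guard in line~\ref{line singular check 1} that was checked at the moment $p'^{[\pi']}$ was added.

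For the other direction ``$q'^{[\rho']}$ cannot be used to top $\s$-reduce $p'^{[\pi']}$'', I would invoke Lemma~\ref{lemma ascending signatures} to get $\s(\pi')\preceq\s(\rho')$ and split on whether this inequality is strict. If $\s(\pi')\prec\s(\rho')$, then for any $a,b \in \<X>$ we have $\s(a\rho'b)=a\s(\rho')b \succeq \s(\rho') \succ \s(\pi')$, so the signature condition in the definition of $\s$-reduction fails. If instead $\s(\pi')=\s(\rho')$, then the signature condition $\s(a\rho'b)\preceq\s(\pi')$ forces $a=b=1$, and then top $\s$-reducibility would additionally require $\lm(q')=\lm(p')$; but in that situation $q'^{[\rho']}$ would itself have been singular top $\s$-reducible by $p'^{[\pi']} \in G^{[\Sigma]}$ at the time it was being added, contradicting the guard in line~\ref{line singular check 1}.

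The main obstacle, such as it is, is the case $\s(\pi')=\s(\rho')$, which is the only situation where multipliers alone do not immediately rule out the reduction and where one must explicitly invoke the singular reducibility check. Once that case is handled, the two directions combine to show that no element of $G^{[\Sigma]}$ is top $\s$-reducible by $G^{[\Sigma]}\setminus\{g^{[\gamma]}\}$, which together with the labelled Gröbner basis property from Theorem~\ref{thm algo LabelledGB correct} yields minimality.
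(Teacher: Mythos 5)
Your overall plan — combining Theorem~\ref{thm algo LabelledGB correct} with the fourth observation, and proving the latter by a pairwise argument split on the order of addition — is exactly the paper's reasoning. However, the first of your two ``directions'' is stated and argued with the roles of $p'^{[\pi']}$ and $q'^{[\rho']}$ interchanged, and as written it is false. You set up $p'^{[\pi']}$ as the element added \emph{first} and then claim that ``$p'^{[\pi']}$ was computed as a regular $\s$-reduced normal form by a set already containing $q'^{[\rho']}$.'' But when $p'^{[\pi']}$ was being reduced in line~\ref{line reduction sigGB 1}, the set $G^{[\Sigma]}$ did \emph{not} yet contain $q'^{[\rho']}$, so no conclusion about reducibility of $p'^{[\pi']}$ by $q'^{[\rho']}$ can be drawn this way. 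The normal-form argument (together with the guard at line~\ref{line singular check 1}) instead shows that the \emph{later} element $q'^{[\rho']}$ is not top $\s$-reducible by $p'^{[\pi']}$, since $q'^{[\rho']}$ was regular $\s$-reduced against a set that already contained $p'^{[\pi']}$ and then passed the singular-reducibility check.

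Relatedly, your two headings assert the same thing twice: ``$p'^{[\pi']}$ is not top $\s$-reducible by $q'^{[\rho']}$'' and ``$q'^{[\rho']}$ cannot be used to top $\s$-reduce $p'^{[\pi']}$'' are synonymous, so as written you never state the other direction ``$q'^{[\rho']}$ is not top $\s$-reducible by $p'^{[\pi']}$.'' Once the labels are repaired — normal form plus the line-\ref{line singular check 1} guard for ``later not reducible by earlier,'' and ascending signatures with the $\s(\pi')\prec\s(\rho')$ versus $\s(\pi')=\s(\rho')$ case split (which you do carry out correctly, matching the paper's observation) for ``earlier not reducible by later'' — the argument is complete and coincides with the paper's proof.
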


Combining this corollary with Corollary~\ref{cor finite s-GB}, we see that Algorithm~\ref{algo LabelledGB} terminates whenever $I^{[\Sigma]}$ admits a finite labelled Gr\"obner basis w.r.t.\ the family of generators $f_1,\dots,f_r$.



\begin{corollary}\label{cor termination algo 1}
Let $(f_1,\dots,f_r) \in K\<X>^r$ be such that the corresponding module $I^{[\Sigma]}$ has a finite labelled Gröbner basis.
Then, Algorithm~\ref{algo LabelledGB} terminates when given $f_1,\dots,f_r$ as input. 
\end{corollary}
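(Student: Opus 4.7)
The plan is to combine Corollary~\ref{cor:enum-min-sGB} (stating that the algorithm enumerates a minimal labelled Gröbner basis) with Corollary~\ref{cor finite s-GB} (stating that all minimal labelled Gröbner bases have the same cardinality), and then argue that the auxiliary set $P$ can only ever receive finitely many elements in total, so it is eventually emptied.

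More precisely, I would first note that, by Corollary~\ref{cor:enum-min-sGB}, the union $G^{[\Sigma]}_\infty = \bigcup_{n \geq 0} G_n^{[\Sigma]}$ produced by the algorithm is a minimal labelled Gröbner basis of $I^{[\Sigma]}$ w.r.t.\ the family of generators $f_1,\dots,f_r$. Under the hypothesis that some finite labelled Gröbner basis exists, Corollary~\ref{cor finite s-GB} guarantees that every minimal labelled Gröbner basis is finite. In particular, $G^{[\Sigma]}_\infty$ is finite.

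Next, I would inspect how elements are added to $P$ throughout the execution of Algorithm~\ref{algo LabelledGB}. Initially, $P$ contains the $r$ labelled polynomials $f_1^{[\varepsilon_1]},\dots,f_r^{[\varepsilon_r]}$. Thereafter, new elements are added to $P$ only on line~\ref{line form s-polies 1}, and each such addition consists of finitely many regular S-polynomials involving a freshly added element $p'^{[\pi']}$ of $G^{[\Sigma]}$ with elements already in $G^{[\Sigma]}$ (and $p'^{[\pi']}$ itself). Since $G^{[\Sigma]}_\infty$ is finite, line~\ref{line form s-polies 1} is reached only finitely many times, and each time contributes only finitely many new elements to $P$. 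Hence the total number of elements ever inserted into $P$ is finite.

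Finally, the body of the \textwhile loop removes exactly one element of $P$ per iteration (on line~6). Combined with the observation above, this shows that after finitely many iterations $P$ becomes empty, at which point the loop condition fails and the algorithm terminates and returns $G^{[\Sigma]}$ and $H$. The argument is essentially bookkeeping once the two corollaries are in hand; the only step requiring mild care is making sure that each addition on line~\ref{line form s-polies 1} really contributes only finitely many S-polynomials, which is immediate from Definition~\ref{def:S-pol} and the finiteness of $G^{[\Sigma]}$ at the moment of insertion.
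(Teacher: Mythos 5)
Your proof is correct and relies on the same two key corollaries as the paper (Corollaries~\ref{cor:enum-min-sGB} and~\ref{cor finite s-GB}), but it concludes with a genuinely different termination argument. The paper's proof, after establishing that the algorithm eventually computes a finite minimal labelled Gröbner basis $G^{[\Sigma]}$, argues semantically: once $G^{[\Sigma]}$ is a full labelled Gröbner basis, all remaining elements of $P$ must regular $\s$-reduce to zero or to a singular top $\s$-reducible normal form, so no further additions occur and the finite set $P$ drains. You instead give a purely combinatorial counting argument: since line~\ref{line form s-polies 1} can only be reached when a genuinely new element is added to $G^{[\Sigma]}$ (the singular-top-$\s$-reducibility check on line~\ref{line singular check 1} rules out re-insertion, as any $p'^{[\pi']}$ already in $G^{[\Sigma]}$ singular top $\s$-reduces itself), and $G^{[\Sigma]}_\infty$ is finite, this line fires only finitely often, each time contributing finitely many S-polynomials; so the total number of insertions into $P$ is finite and the loop, which removes one element per pass, terminates. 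Your route has the mild advantage of not needing the (true but unstated) converse direction of Theorem~\ref{thm spairs} — that a labelled Gröbner basis forces all regular S-polynomials to regular $\s$-reduce to zero or to a singular top $\s$-reducible form — which the paper's phrasing implicitly leans on. The only point you could be slightly more explicit about is why line~\ref{line form s-polies 1} cannot re-trigger without strictly growing $G^{[\Sigma]}$, which is exactly the observation about the singular check above; your phrase ``freshly added'' gestures at this but does not justify it.
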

\begin{proof}
  Since $I^{[\Sigma]}$ has a finite labelled Gröbner basis, by Corollary~\ref{cor:enum-min-sGB}, the algorithm will eventually compute a minimal labelled Gröbner basis $G^{[\Sigma]}$ of $I^{[\Sigma]}$, which must be finite as well by Corollary~\ref{cor finite s-GB}.
  At each run of the loop, only finitely many S-polynomials are added to $P$, so $P$ has finite cardinality.
  Since $G^{[\Sigma]}$ is a labelled Gröbner basis, all the remaining elements in $P$ will regular $\s$-reduce to $0$ or to a singular top $\s$-reducible normal form, so no new polynomials will be added to $P$ and the algorithm will terminate.
\end{proof}

Note that if the algorithm terminates, or equivalently if $I^{[\Sigma]}$ admits a finite labelled Gröbner basis, then it has finite output $G^{[\Sigma]}$ and $H$.
This output is such that the polynomial part of elements of $G^{[\Sigma]}$ forms a (finite) Gröbner basis of the ideal $I$, and that $H \cup \Htriv$ is a (usually infinite, but with a finite data representation) Gröbner basis of the module $\Syz(f_1,\dots,f_r)$.

We conjecture that also the converse holds.
\newcommand{\Gt}{\tilde{G}}
\newcommand{\Ht}{\tilde{H}}
\begin{conjecture}
  \label{sec:conj-termination}
  Let $(f_1,\dots,f_r) \in K\<X>^r$ and $I^{[\Sigma]}$ be the corresponding module.
  Assume that there exists a finite set $\Gt^{[\Sigma]} \subseteq I^{[\Sigma]}$ of labelled polynomials, with $f_{1}^{[\varepsilon_{1}]},\dots,f_{r}^{[\varepsilon_{r}]} \in \Gt^{[\Sigma]}$, and a finite subset $\Ht \subseteq \Syz(f_1,\dots,f_r)$ such that:
    \begin{itemize}
      \item $\{g \mid g^{[\gamma]} \in \Gt^{[\Sigma]}\}$ is a Gröbner basis of $I$;
      \item $\Ht \cup \{\gamma m g' - g m \gamma' \mid g^{[\gamma]}, g'^{[\gamma']} \in \Gt^{[\Sigma]}\}$ is a Gröbner basis of $\Syz(f_{1},\dots,f_{r})$;
    \end{itemize}
    Let $G_n^{[\Sigma]} \subseteq I^{[\Sigma]}$ and $H_n \subseteq \Syz(f_1,\dots,f_r)$ be intermediate values of $G^{[\Sigma]}$ and $H$, respectively, in Algorithm~\ref{algo LabelledGB} such that
    \begin{itemize}
      \item all elements of $\Gt^{[\Sigma]}$ $\s$-reduce to $0$ modulo $G_n^{[\Sigma]}$;
      \item for all $\tilde{\sigma} \in \Ht$, there exists $\sigma \in H_n$ such that $\s(\sigma) = \s(\tilde{\sigma})$;
    \end{itemize}
    Then $G_n^{[\Sigma]}$ is a labelled Gröbner basis of $I^{[\Sigma]}$, and in particular, $I^{[\Sigma]}$ has a finite labelled Gröbner basis.
\end{conjecture}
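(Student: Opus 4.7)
The plan is to prove the conjecture via Theorem~\ref{thm spairs}, by verifying that under the stated hypotheses, both hypotheses of that theorem hold for $G_{n}^{[\Sigma]}$ at every signature $\sigma \in \MFr$, so that $G_{n}^{[\Sigma]}$ is a labelled Gröbner basis of $I^{[\Sigma]}$.

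The first hypothesis of Theorem~\ref{thm spairs}---that for each $\varepsilon_{i} \prec \sigma$ there exists $g_{i}^{[\gamma_{i}]} \in G_{n}^{[\Sigma]}$ with $\s(\gamma_{i}) = \varepsilon_{i}$---is straightforward. Assuming without loss of generality that $f_{i} \neq 0$ for all $i$, the assumption that every element of $\tilde{G}^{[\Sigma]}$ $\s$-reduces to $0$ modulo $G_{n}^{[\Sigma]}$ forces in particular $f_{i}^{[\varepsilon_{i}]}$ to be top $\s$-reducible by some $g^{[\gamma]} \in G_{n}^{[\Sigma]}$. Since the only way to have $a\,\s(\gamma)\,b = \varepsilon_{i}$ for $a, b \in \<X>$ is $a = b = 1$ and $\s(\gamma) = \varepsilon_{i}$, this produces the required element.

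For the second hypothesis, I would argue by contradiction, taking $h^{[\delta]} \in I^{[\Sigma]}$ of minimal signature that fails to $\s$-reduce to zero by $G_{n}^{[\Sigma]}$; without loss of generality, $h^{[\delta]}$ is top $\s$-reduced. Lemma~\ref{lemma S-polynomial p} then produces $p^{[\pi]} \in I^{[\Sigma]}$ and $a, b \in \<X>$ with $\s(a\pi b) = \s(\delta)$, falling into either a trivial-syzygy case, where $\pi$ is a trivial syzygy between elements of $G_{n}^{[\Sigma]}$, or a regular S-polynomial case, where $p^{[\pi]}$ is a regular S-polynomial of $G_{n}^{[\Sigma]}$ whose regular $\s$-reduction $p'^{[\pi']}$ satisfies that $ap'^{[\pi']}b$ is not regular top $\s$-reducible. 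When $\pi$ is a trivial syzygy, or when $p' = 0$ so that $\pi'$ itself is a syzygy with $\s(a\pi'b) = \s(\delta)$, one takes $\sigma$ to be the relevant syzygy and forms $\delta - c a \sigma b$ with $c = \sigc(\delta)/\sigc(a\sigma b)$, obtaining a label of $h$ of strictly smaller signature. By minimality, this new element $\s$-reduces to zero using reductions all of signature strictly less than $\s(\delta)$, and the very same sequence of reductions drives $h^{[\delta]}$ to the zero polynomial, contradicting the choice of $h^{[\delta]}$.

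The main obstacle is the remaining subcase, where $p' \neq 0$ and $ap'^{[\pi']}b$ is not regular top $\s$-reducible. The goal is to show that $ap'^{[\pi']}b$ is nevertheless singular top $\s$-reducible, so that the same minimality-and-lifting argument applies. Since $ap'b \in I \setminus \{0\}$ and the polynomial parts of $\tilde{G}^{[\Sigma]}$ form a Gröbner basis of $I$, there is $g^{[\gamma]} \in \tilde{G}^{[\Sigma]}$ with $\lm(g) \mid \lm(ap'b)$, and chaining with a top $\s$-reducer of $g^{[\gamma]}$ by $G_{n}^{[\Sigma]}$ exhibits an element of $G_{n}^{[\Sigma]}$ whose leading monomial divides $\lm(ap'b)$ at the polynomial level. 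The hard step is to produce such a reducer whose scaled signature equals $\s(\delta)$. For this, one would build a representation $\tilde{\pi}$ of $ap'b$ from $\tilde{G}^{[\Sigma]}$, note that $a\pi'b - \tilde{\pi}$ is a syzygy whose signature is, by the second assumption of the conjecture, divisible by that of an element of $\tilde{H}$ or of a trivial syzygy of $\tilde{G}^{[\Sigma]}$, and exchange that element for a syzygy with matching signature coming from $H_{n}$ or a trivial syzygy of $G_{n}^{[\Sigma]}$, via the first assumption. Iteratively subtracting such syzygies should lower the signature of the representation to at most $\s(\delta)$ and yield a representation of $ap'b$ in terms of $G_{n}^{[\Sigma]}$ from which a top $\s$-reducer at signature $\s(\delta)$ can be extracted. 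The delicate point, and presumably the reason this statement is only a conjecture, is to guarantee that after all such adjustments a term with leading monomial equal to $\lm(ap'b)$ still survives in the representation at a signature bounded by $\s(\delta)$, rather than being cancelled away at the leading-monomial level during the signature-lowering process.
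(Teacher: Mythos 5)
This statement appears in the paper as Conjecture~\ref{sec:conj-termination}; the authors explicitly present it without proof and say they do not know whether it holds. So there is no proof in the paper to compare your attempt against, and any complete proof would in fact be a new contribution.

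Your plan of attack is the natural one: verify the hypotheses of Theorem~\ref{thm spairs} for $G_n^{[\Sigma]}$ at every signature, by invoking Lemma~\ref{lemma S-polynomial p} on a hypothetical counterexample $h^{[\delta]}$ of minimal signature and handling the cases. The trivial-syzygy case and the syzygy case $p'=0$ are fine: subtracting a scaled copy of the relevant syzygy from $\delta$ lowers the signature strictly while keeping the polynomial part $h$, and by minimality the new labelled polynomial $\s$-reduces to zero; since the signature constraint $\s(a\gamma b) \preceq \s(\cdot)$ only becomes looser when we restore the original label $\delta$, the same reduction chain shows $h^{[\delta]}$ regular $\s$-reduces to zero, a contradiction. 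That part is sound, and is essentially Case~1 of the proof of Theorem~\ref{thm spairs}, which uses Lemma~\ref{lemma same signature} rather than the explicit reduction chain.

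But you have, quite honestly, put your finger on exactly the place where the argument is not known to close, namely the subcase $p' \neq 0$ with $ap'^{[\pi']}b$ not regular top $\s$-reducible. To conclude you need $ap'^{[\pi']}b$ to be \emph{singular} top $\s$-reducible by $G_n^{[\Sigma]}$. Your idea of building a representation $\tilde\pi$ of $ap'b$ from $\tilde G^{[\Sigma]}$, observing that $a\pi'b - \tilde\pi$ is a syzygy, using the assumed Gröbner basis of $\Syz(f_1,\dots,f_r)$ to rewrite it, and then swapping matching syzygies from $H_n$ and the trivial syzygies of $G_n^{[\Sigma]}$ is a plausible strategy. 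The missing ingredient is control on leading monomials during this signature-lowering: a priori the rewriting can cancel the top monomial $\lm(ap'b)$ from the intermediate representations, so that the process does not automatically exhibit a labelled element of $G_n^{[\Sigma]}$ which is a \emph{top} $\s$-reducer of $ap'^{[\pi']}b$ with scaled signature exactly $\s(\delta)$. Nothing in the hypotheses of the conjecture forbids such cancellation, and that is precisely why the statement is a conjecture rather than a theorem. You flagged this yourself, and I agree with your assessment.

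One smaller imprecision worth noting: in your verification of the first hypothesis of Theorem~\ref{thm spairs}, from the fact that $f_i^{[\varepsilon_i]}$ is top $\s$-reducible by some $g^{[\gamma]}\in G_n^{[\Sigma]}$ (with scalars $a,b$) you infer $a=b=1$ and $\s(\gamma)=\varepsilon_i$. The $\s$-reducibility condition only gives $a\,\s(\gamma)\,b \preceq \varepsilon_i$, not equality, so $\s(\gamma)$ could a priori be some $\varepsilon_j$ with $\varepsilon_j \prec \varepsilon_i$. You would still obtain an element of signature exactly $\varepsilon_i$ from how Algorithm~\ref{algo LabelledGB} processes $f_i^{[\varepsilon_i]}$, provided $f_i^{[\varepsilon_i]}$ does not regular $\s$-reduce to zero; but the argument as written is too quick, and in the degenerate case where $f_i$ lies in the ideal generated by the earlier generators, an extra word is needed to apply the theorem. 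This is a side issue compared with the main gap above, but you should be aware of it.
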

Note that $\Gt^{[\Sigma]}$ need not be a labelled Gröbner basis, but merely a set of labelled polynomials which, without the module representations, forms a Gröbner basis.
Put differently, the statement is equivalent to saying that $I$ admits a finite Gröbner basis $G$, and that the module $\Syz(f_{1},\dots,f_{r})$ has a Gröbner basis given by adding a finite set to the set of trivial syzygies of $G$ (expressed in the module $\Fr$).

In the commutative case where all ideals have a finite signature Gröbner basis, the analogue of this conjecture would give a characterization of signature Gröbner bases in terms of a Gröbner basis of the ideal and of its module of syzygies.
To the best of our knowledge, no such characterization is proved in the commutative case.

\subsection{S-polynomial elimination}\label{sec elimination criteria}

In the commutative case, it is well known that additional criteria can be used to detect \mbox{$\s$-reductions} to zero.
So far, we have already seen that we can immediately discard all singular S-polynomials and remove a regular S-polynomial if it leads to a singular top $\s$-reducible normal form.
In this section, we adapt some other well-known techniques from the commutative case to our setting, namely the \emph{syzygy criterion}, the \emph{F5 criterion} and the \emph{singular criterion}.
In Algorithm~\ref{algo SigGB}, we include these criteria to show how to use them in practice.

\begin{proposition}[\textbf{Syzygy criterion}]\label{prop syzygy criterion}
Let $p^{[\pi]} \in I^{[\Sigma]}$ and let $G^{[\Sigma]} \subseteq I^{[\Sigma]}$ be a labelled Gr\"obner basis up to signature $\s(\pi)$.
If there exists a syzygy $\sigma \in \Fr$ and $a,b \in \<X>$ such that $\s(\pi) = a\s(\sigma)b$, then $p^{[\pi]}$ can be regular $\s$-reduced to zero by $G^{[\Sigma]}$.
\end{proposition}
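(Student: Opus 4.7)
The key observation is that a syzygy $\sigma$ with $\s(\pi) = a \s(\sigma) b$ provides a way to rewrite the module representation of $p$ so as to lower its signature below $\s(\pi)$, at which point the labelled Gröbner basis hypothesis takes over.

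Concretely, I would set $c = \sigc(\pi)/\sigc(\sigma)$ and consider the syzygy $\tau = c \cdot a \sigma b \in \Syz(f_{1},\dots,f_{r})$. By construction, $\st(\tau) = \st(\pi)$, and $\overline{\tau} = 0$. Hence the element $\rho := \pi - \tau \in \Fr$ satisfies $\overline{\rho} = \overline{\pi} = p$ and $\s(\rho) \prec \s(\pi)$. Thus $p^{[\rho]} \in I^{[\Sigma]}$ is a legitimate labelled polynomial representing the same underlying polynomial as $p^{[\pi]}$, but with strictly smaller signature. Since $G^{[\Sigma]}$ is a labelled Gröbner basis up to signature $\s(\pi)$, $p^{[\rho]}$ $\s$-reduces to zero by $G^{[\Sigma]}$.

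The final step is to transfer this reduction chain back to $p^{[\pi]}$. Since the polynomial part of $p^{[\rho]}$ and $p^{[\pi]}$ coincide, and each $\s$-reduction step is determined by picking a reducer $g^{[\gamma]}$, words $a',b'$, and a coefficient extracted from the polynomial part, I can apply the very same sequence of reductions, with the same coefficients, to $p^{[\pi]}$. At each step, both polynomial parts remain equal, so the sequence keeps being applicable and eventually reaches a polynomial part of zero. For the module part, each reducer satisfies $\s(a'\gamma b') \preceq \s(\rho) \prec \s(\pi)$, so when applied to $p^{[\pi]}$ (whose running signature stays equal to $\s(\pi)$, because every subtracted term has signature strictly below $\s(\pi)$) the reductions are all \emph{regular} $\s$-reductions. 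This shows that $p^{[\pi]}$ regular $\s$-reduces to zero by $G^{[\Sigma]}$.

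The only mildly subtle point is the very last verification, namely that the signature of the running element remains $\s(\pi)$ throughout the reduction and does not drop, which is what makes every step regular; this follows inductively from the fact that each subtracted term $\tfrac{\coeff(\cdot)}{\lc(g)}\,a' \gamma b'$ has signature strictly less than $\s(\pi)$. No noncommutative-specific obstacle arises, since the definition of $\s$-reduction and the notion of regularity are local at each reduction step.
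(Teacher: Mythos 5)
Your proposal is correct and takes essentially the same approach as the paper: both subtract the scaled syzygy $\frac{\sigc(\pi)}{\sigc(\sigma)} a\,0^{[\sigma]} b$ from $p^{[\pi]}$ to obtain a representation of $p$ with strictly smaller signature, invoke the labelled Gr\"obner basis hypothesis to $\s$-reduce it to zero, and then transfer the same reduction chain to $p^{[\pi]}$. The paper compresses the last step into the phrase ``using the same reductions,'' whereas you spell out why those steps remain valid and are in fact regular for $p^{[\pi]}$ (the polynomial parts coincide, so the same reducers and coefficients apply, and the running signature stays pinned at $\s(\pi)$ since every subtracted term has strictly smaller signature) — a useful elaboration, but not a different argument.
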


\begin{proof}
Let $\sigma \in \Fr$ be a syzygy and  $a,b \in \<X>$ such that $\s(\pi) = a\s(\sigma)b$.
Now, consider
\[
	p^{[\tau]} = p^{[\pi]} - \frac{\sigc(\pi)}{\sigc(\sigma)} a 0^{[\sigma]} b.
\]
Then, $\s(\tau) \prec \s(\pi)$.
Since $G^{[\Sigma]}$ is a labelled Gr\"obner basis up to signature $\s(\pi)$, the labelled polynomial $p^{[\tau]}$ $\s$-reduces to zero by $G^{[\Sigma]}$.
Thus, using the same reductions, we see that $p^{[\pi]}$ regular $\s$-reduces to zero by $G^{[\Sigma]}$.
\end{proof}

Hence, we can immediately discard an S-polynomial $p^{[\pi]}$ during the computation of a labelled Gr\"obner basis if its signature $\s(\pi)$ is divisible by the signature of a syzygy.
Clearly, we obtain syzygies whenever we $\s$-reduce an S-polynomial to zero but there are also syzygies known prior to any computations. 
Recall that for all labelled polynomials $f^{[\alpha]}, g^{[\beta]} \in I^{[\Sigma]}$ we have the trivial syzygies $\alpha m g -  f m \beta$, for all monomials $m \in \<X>$.
This means that for any family of generators $f_1,\dots,f_r \in K\<X>$, we immediately obtain the trivial syzygies
\[
	\varepsilon_i m f_j - f_i m \varepsilon_j,
\]
for all $1 \leq i, j \leq r$ and all $m \in \<X>$, which we can use to eliminate S-polynomials.
Additionally, whenever we add a new element $g^{[\gamma]}$ to $G^{[\Sigma]}$ during the executing of Algorithm~\ref{algo LabelledGB},
we get the new trivial syzygies
\[
	\gamma m g' - g m \gamma' \quad \text{ and } \quad \gamma' m g - g' m \gamma
\]
for all $g'^{[\gamma']} \in G^{[\Sigma]}$ and all $m \in \<X>$.
Identifying those trivial syzygies leads to the \emph{F5 criterion}.

\begin{corollary}[\textbf{F5 criterion}]
  \label{prop:f5-criterion}
  Let $p^{[\pi]} \in I^{[\Sigma]}$ and let $G^{[\Sigma]} \subseteq I^{[\Sigma]}$ be a labelled Gr\"obner basis up to signature $\s(\pi)$.
  Assume that there exist $g^{[\gamma]}, g'^{[\gamma']} \in G^{[\Sigma]}$ and $a,b,m \in \<X>$ such that 
  one of the following conditions holds:
  \begin{enumerate}
    \item $\s(\pi) = a\s(\gamma) m \lm(g')b\;\;$ and $\;\;\s(\gamma) m \lm(g') \succ \lm(g) m \s(\gamma')$;
    \item $\s(\pi) = a\lm(g) m \s(\gamma')b\;\;$ and $\;\;\lm(g) m \s(\gamma') \succ \s(\gamma) m \lm(g')$;
  \end{enumerate}
  Then $p^{[\pi]}$ can be regular $\s$-reduced to zero by $G^{[\Sigma]}$.
\end{corollary}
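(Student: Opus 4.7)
The plan is to deduce this corollary directly from the syzygy criterion (Proposition~\ref{prop syzygy criterion}) by exhibiting, in each case, an explicit trivial syzygy whose signature divides $\s(\pi)$.

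The natural candidate is the trivial syzygy $\sigma = \gamma m g' - g m \gamma' \in \Fr$ associated to $g^{[\gamma]}, g'^{[\gamma']} \in G^{[\Sigma]}$ and the monomial $m \in \<X>$. First I would check that $\sigma$ is indeed a syzygy: since $\overline{\gamma} = g$ and $\overline{\gamma'} = g'$, we have $\overline{\sigma} = g m g' - g m g' = 0$, so $\sigma \in \Syz(f_1,\dots,f_r)$.

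Next I would compute $\s(\sigma)$. Writing $\gamma$ and $g'$ in the form $\sigc(\gamma) \s(\gamma) + \text{lower terms}$ and $\lc(g') \lm(g') + \text{lower terms}$, respectively, and expanding the right action, the compatibility of the monomial and module orderings guarantees that the largest module monomial appearing in $\gamma m g'$ is $\s(\gamma) m \lm(g')$, so $\s(\gamma m g') = \s(\gamma) m \lm(g')$. Symmetrically, $\s(g m \gamma') = \lm(g) m \s(\gamma')$. Under the hypothesis of Case~1, the strict inequality $\s(\gamma) m \lm(g') \succ \lm(g) m \s(\gamma')$ rules out cancellation of the leading module monomial, so $\s(\sigma) = \s(\gamma) m \lm(g')$; under Case~2 we obtain symmetrically $\s(\sigma) = \lm(g) m \s(\gamma')$. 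In both cases the assumption of the corollary gives $\s(\pi) = a \s(\sigma) b$ for suitable $a, b \in \<X>$.

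At this point the proof is finished by invoking Proposition~\ref{prop syzygy criterion} with the syzygy $\sigma$: since $G^{[\Sigma]}$ is a labelled Gröbner basis up to signature $\s(\pi)$ and $\s(\pi)$ is divisible by the signature of a syzygy, $p^{[\pi]}$ regular $\s$-reduces to zero by $G^{[\Sigma]}$. I do not anticipate a real obstacle here; the only nontrivial step is the signature computation for $\gamma m g'$ and $g m \gamma'$, which rests squarely on the compatibility condition imposed on $\preceq$ and $\preceq_\M$ at the start of Section~\ref{sec:sign-labell-grobn}.
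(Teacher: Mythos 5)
Your proposal is correct and follows exactly the route the paper intends: the F5 criterion is stated as a corollary of the Syzygy criterion (Proposition~\ref{prop syzygy criterion}), and the surrounding text makes clear that the syzygies to be used are precisely the trivial syzygies $\gamma m g' - g m \gamma'$. Your signature computation $\s(\gamma m g') = \s(\gamma) m \lm(g')$, $\s(g m \gamma') = \lm(g) m \s(\gamma')$, and the observation that the strict inequality in each case prevents cancellation and identifies $\s(\sigma)$, is exactly what is needed; the paper simply leaves this routine check implicit.
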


\begin{remark}
  It is not clear whether it is possible to check the F5 criterion in the noncommutative case as efficiently as in the commutative case.
  \emph{A priori}, it requires $\Theta(|G^{[\Sigma]}|^2)$ checks of the conditions of Corollary~\ref{prop:f5-criterion}.
\end{remark}

Lemma~\ref{lemma same signature} provides another way to detect redundant S-polynomials.

\begin{corollary}[\textbf{Singular criterion}]\label{lemma singular criterion}
Let $p^{[\pi]} \in I^{[\Sigma]}$ and let $G^{[\Sigma]} \subseteq I^{[\Sigma]}$ be a labelled Gr\"obner basis up to signature $\s(\pi)$.
If there exists a regular $\s$-reduced element $g^{[\gamma]} \in G^{[\Sigma]}$ such that $\s(\gamma) = \s(\pi)$, then $p^{[\pi]}$ regular $\s$-reduces to a normal form that is singular top $\s$-reducible by $G^{[\Sigma]}$.
\end{corollary}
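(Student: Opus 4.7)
The plan is to reduce $p^{[\pi]}$ as much as possible through regular $\s$-reductions, then use Lemma~\ref{lemma same signature} to show that the result has the same leading term (up to scalar) as $g$, which makes $g^{[\gamma]}$ a singular top $\s$-reducer.

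First, I would rescale $g^{[\gamma]}$ to match signature terms: set $c = \sigc(\pi)/\sigc(\gamma)$ and $h^{[\eta]} = c g^{[\gamma]}$, so that $\st(\eta) = \st(\pi)$. Because $g^{[\gamma]}$ is regular $\s$-reduced, so is $h^{[\eta]}$, and in particular $h^{[\eta]}$ is regular top $\s$-reduced. Next, regular $\s$-reduce $p^{[\pi]}$ by $G^{[\Sigma]}$ as far as possible to an element $p'^{[\pi']}$; since regular $\s$-reduction never increases the signature, $\st(\pi') = \st(\pi)$, and by construction $p'^{[\pi']}$ is regular $\s$-reduced (hence also regular top $\s$-reduced).

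Now I would apply Lemma~\ref{lemma same signature} to $p'^{[\pi']}$ and $h^{[\eta]}$. Its hypothesis requires that every $u^{[\mu]}\in I^{[\Sigma]}$ with $\s(\mu) \prec \s(\pi') = \s(\pi)$ $\s$-reduces to zero modulo $G^{[\Sigma]}$, which is exactly the assumption that $G^{[\Sigma]}$ is a labelled Gröbner basis up to signature $\s(\pi)$. Since both $p'^{[\pi']}$ and $h^{[\eta]}$ are regular top $\s$-reduced and share the same signature term, the second conclusion of the lemma gives $\lt(p') = \lt(h) = c\,\lt(g)$. In particular, since $g \neq 0$ (as $g^{[\gamma]}$ is regular $\s$-reduced and thus has a well-defined leading monomial), we obtain $\lm(p') = \lm(g)$ and $p' \neq 0$.

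Finally, I would observe that $g^{[\gamma]}$ singular top $\s$-reduces $p'^{[\pi']}$: taking $a = b = 1$ in the definition of $\s$-reduction, we have $a\lm(g)b = \lm(g) = \lm(p') \in \supp(p')$ and $\s(a\gamma b) = \s(\gamma) = \s(\pi')$, so the reduction is both a top $\s$-reduction and singular. I do not expect any real obstacle here; the only subtlety is to note that the full hypothesis of Lemma~\ref{lemma same signature} is available precisely because we assume $G^{[\Sigma]}$ is a labelled Gröbner basis up to signature $\s(\pi)$, which is exactly the signature at which the two elements live.
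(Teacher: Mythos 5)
Your proof is correct and essentially identical to the paper's: both rescale $g^{[\gamma]}$ to match signature terms, apply Lemma~\ref{lemma same signature} (whose hypothesis is supplied by the labelled Gröbner basis assumption) to identify the leading term of the regular $\s$-reduced normal form of $p^{[\pi]}$ with that of $cg$, and conclude singular top $\s$-reducibility by $g^{[\gamma]}$ itself. One small remark: your parenthetical justification that $g \neq 0$ ``as $g^{[\gamma]}$ is regular $\s$-reduced'' is not by itself valid (a labelled polynomial with zero polynomial part is vacuously regular $\s$-reduced), but the paper's own proof makes the same implicit assumption, so this does not affect the substance of the argument.
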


\begin{proof}
Assume that there exists a regular $\s$-reduced element $g^{[\gamma]} \in G^{[\Sigma]}$ such that $\s(\gamma) = \s(\pi)$.
It follows from Lemma~\ref{lemma same signature} that $p^{[\pi]}$ regular $\s$-reduces to $cg^{[\pi']}$ for some scalar $c \in K$.
Since $\s(\pi') = \s(\pi) = \s(\gamma)$, this normal form is singular top $\s$-reducible by $g^{[\gamma]} \in G^{[\Sigma]}$.
\end{proof}

Using Algorithm~\ref{algo LabelledGB}, all elements that are added to $G^{[\Sigma]}$ are regular $\s$-reduced. 
Hence, during the execution of Algorithm~\ref{algo LabelledGB}, an S-polynomial $p^{[\pi]}$ can be removed immediately if its signature already appears in $G^{[\Sigma]}$.

\section{Computation of signature Gröbner bases and reconstruction}
\label{sec:reconstruction}

So far, Algorithm~\ref{algo LabelledGB} keeps but does not exploit all the information encoded in the full module representation of the polynomials. 
As indicated earlier, keeping track of the full module representation, however, causes a significant overhead in terms of memory consumption and overall computation time.
Consequently, in an actual implementation of Algorithm~\ref{algo LabelledGB}, one would only keep track of the signatures of each polynomial, and thereby, work with signature polynomials.
In doing so, instead of computing a labelled Gr\"obner basis and a Gr\"obner basis of $\Syz(f_1,\dots,f_r)$, the algorithm only computes a signature Gr\"obner  basis (Definition~\ref{def sig-labelled GB}) and a Gr\"obner basis of the module generated by $\s(\Syz(f_1,\dots,f_r))$.
Additionally, to obtain an efficient implementation, one would also exploit the elimination criteria discussed in the previous section.
Note that these criteria only depend on information encoded in signature polynomials.
Incorporating these changes leads to Algorithm~\ref{algo SigGB}, which is an optimized version of Algorithm~\ref{algo LabelledGB}.

\begin{algorithm}[h]
\renewcommand{\algorithmicrequire}{\textbf{Input:}}
\renewcommand{\algorithmicensure}{\textbf{Output (if the algorithm terminates):}}
\caption{SigGB}\label{algo SigGB}
\begin{algorithmic}[1]
\Require{$(f_1,\dots,f_r) \in K\<X>^{r}$ generating an ideal $I$}
\Ensure{${}$\newline 
  \vspace{-0.45cm}
    \begin{itemize}
      \setlength{\itemindent}{-15pt}
      \item $G^{(\Sigma)}$ a minimal signature Gr\"obner basis of $I^{[\Sigma]}$
      \item $H \subseteq \MFr$ s.t.\ 
      $H \cup  \{ \max \{\sigma m \lm(g'), \lm(g) m \sigma'\} \mid g^{(\sigma)}, g'^{(\sigma')} \in G^{(\Sigma)}, \,m \in \<X>\}$ is a Gr\"obner basis of the module generated by $\s(\Syz(f_1,\dots,f_r))$
    \end{itemize}
}

\State{$G^{(\Sigma)} \leftarrow \emptyset$}
\State{$H \leftarrow \emptyset$}
\State{$P \leftarrow \{f_1^{(\varepsilon_1)},\dots,f_r^{(\varepsilon_r)}\}$}

\While{$P \neq \emptyset$}\label{line start while}
	\State{choose $p^{(\sigma)} \in P$ s.t.\ $\sigma = \min \{\sigma' \mid p'^{(\sigma')} \in P\}$}\label{line choice p}
	\State{$P \leftarrow P\setminus \{p^{(\sigma)}\}$}
	\If{$p^{(\sigma)}$ satisfies the hypotheses of the Syzygy criterion (Prop.~\ref{prop syzygy criterion}),\\
          \hspace{1.6em}\hphantom{\textbf{if}} the F5 criterion (Cor.~\ref{prop:f5-criterion}), or the Singular criterion (Cor.~\ref{lemma singular criterion})}
        \State{\textbf{goto} line~\ref{line start while}}
        \Else
	\State{$p'^{(\sigma)} \leftarrow$ result of regular $\s$-reducing $p^{(\sigma)}$ by $G^{(\Sigma)}$}\label{line reduction sigGB}
	\If{$p' = 0$}
		\State{$H \leftarrow H \cup \{\sigma\}$}
	\ElsIf{$p'^{(\sigma)}$ is not singular top $\s$-reducible by $G^{(\Sigma)}$} \label{line singular check}
		\State{$G^{(\Sigma)} \leftarrow G^{(\Sigma)} \cup \{p'^{(\sigma)}\}$} \label{line add to G}
		\State{$P \leftarrow P \cup \{$all regular S-polynomials between $p'^{(\sigma)}$ and all $g^{(\gamma)} \in G^{(\Sigma)}\}$ }\label{line form s-polies}
	\EndIf
	\EndIf
\EndWhile
\State{\textbf{return }$G^{(\Sigma)},H$}
\end{algorithmic}
\end{algorithm}

\begin{theorem}
Algorithm~\ref{algo SigGB} is correct.
Furthermore, if $(f_1,\dots,f_r) \in K\<X>^r$ is such that the corresponding module $I^{[\Sigma]}$ has a finite labelled Gröbner basis,
then Algorithm~\ref{algo SigGB} terminates when given $f_1,\dots,f_r$ as input. 
\end{theorem}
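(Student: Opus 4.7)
The plan is to prove this theorem by carefully comparing Algorithm~\ref{algo SigGB} to Algorithm~\ref{algo LabelledGB}, showing that the former is essentially an optimization of the latter that only performs computations on the signature-polynomial level, and that the additional elimination criteria never discard necessary work. Concretely, I would introduce a ``shadow'' execution of Algorithm~\ref{algo LabelledGB} that tracks the full module representations of each signature polynomial processed by Algorithm~\ref{algo SigGB}; the remark after the definition of $\s$-reduction and the discussion of regular S-polynomials guarantee that every operation performed on signature polynomials (regular $\s$-reduction, formation of regular S-polynomials, singular top $\s$-reducibility testing) has a well-defined counterpart on any labelled lift. Thus Algorithm~\ref{algo SigGB} can be viewed as Algorithm~\ref{algo LabelledGB} with the extra optimization of skipping elements satisfying the syzygy, F5, or singular criterion.

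For correctness, I would then argue as follows. By Lemma~\ref{lemma ascending signatures} (whose proof transfers unchanged, since it only uses Lemma~\ref{lemma signature s-poly} and the fairness of $\preceq$), the signatures are still processed in increasing order. At the point when a signature polynomial $p^{(\sigma)}$ with signature $\sigma$ is chosen, the current set $G^{(\Sigma)}$ corresponds to a labelled Gröbner basis $G^{[\Sigma]}$ up to signature $\sigma$ in the shadow execution. If $p^{(\sigma)}$ is eliminated by one of the three criteria, Propositions~\ref{prop syzygy criterion}, \ref{prop:f5-criterion} and \ref{lemma singular criterion} together guarantee that the associated labelled polynomial either regular $\s$-reduces to zero or reduces to a singular top $\s$-reducible normal form, i.e.\ it contributes nothing new to $G^{[\Sigma]}$ and (in the syzygy and F5 cases) its signature is already a multiple of the signature of a known syzygy, so it is also not needed in $H$. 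Consequently, the output of Algorithm~\ref{algo SigGB} still satisfies the hypotheses of Theorem~\ref{thm spairs} at every signature, and the two claims of Theorem~\ref{thm algo LabelledGB correct} transfer to signature-polynomial level: $G^{(\Sigma)}$ enumerates a signature Gröbner basis of $I^{[\Sigma]}$, and $H$ together with the signatures of trivial syzygies of $G^{(\Sigma)}$ generate the leading-term module of $\s(\Syz(f_1,\dots,f_r))$. Minimality follows from the same argument as for Corollary~\ref{cor:enum-min-sGB}: the check in line~\ref{line singular check} prevents redundant additions.

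For termination, I would invoke Corollary~\ref{cor termination algo 1}: under the hypothesis that $I^{[\Sigma]}$ admits a finite labelled Gröbner basis, Algorithm~\ref{algo LabelledGB} terminates. Since Algorithm~\ref{algo SigGB} processes a subset of the S-polynomials processed by Algorithm~\ref{algo LabelledGB} (the extra elimination criteria can only remove elements, never insert new ones, and do not alter the selection strategy in line~\ref{line choice p}), the set $P$ in Algorithm~\ref{algo SigGB} is bounded above in cardinality by the corresponding set in the shadow execution of Algorithm~\ref{algo LabelledGB}. Hence if the latter empties, so does the former.

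The main obstacle, which is really a bookkeeping issue rather than a mathematical one, is formalizing the correspondence between signature polynomials and their labelled lifts in a way that makes the ``shadow execution'' rigorous: one has to verify that every decision the algorithm makes (reducer selection, singularity/regularity testing, criterion checks) depends only on data already present in the signature-polynomial form, so that any consistent choice of labelled lifts produces a valid run of Algorithm~\ref{algo LabelledGB}. Once this correspondence is set up cleanly, the two parts of the theorem reduce to the already-established Theorem~\ref{thm algo LabelledGB correct} and Corollary~\ref{cor termination algo 1} combined with Propositions~\ref{prop syzygy criterion}, \ref{prop:f5-criterion} and \ref{lemma singular criterion}.
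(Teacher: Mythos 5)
Your proposal is correct and follows essentially the same route as the paper, which simply cites Theorem~\ref{thm algo LabelledGB correct}, Corollary~\ref{cor termination algo 1}, and the elimination criteria of Section~\ref{sec elimination criteria}. The ``shadow execution'' framework you introduce is a clean way of making explicit the correspondence between the signature-polynomial run of Algorithm~\ref{algo SigGB} and a labelled run of Algorithm~\ref{algo LabelledGB} that the paper leaves implicit.
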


\begin{proof}
Follows from the correctness of Algorithm~\ref{algo LabelledGB}, Corollary~\ref{cor termination algo 1} and Section~\ref{sec elimination criteria}.
\end{proof}

In the following, we discuss how to recover the information that is lost when Algorithm~\ref{algo SigGB} is used instead of Algorithm~\ref{algo LabelledGB}.
In particular, this means reconstructing a labelled Gr\"obner basis from a signature Gr\"obner basis and reconstructing a Gr\"obner basis of  $\Syz(f_1,\dots,f_r)$ from one of the module generated by $\s(\Syz(f_1,\dots,f_r))$. 
To this end, we adapt the reconstruction methods described in~\cite{Gao-2015-new-framework-for} to recover module representations of elements of the ideal and of syzygies from signatures to our noncommutative setting.

We let $G^{(\Sigma)} \subseteq I^{(\Sigma)}$ and $H \subseteq \MFr$ be the output of Algorithm~\ref{algo SigGB} when given the family of generators $f_1,\dots,f_r \in K\<X>$ as input.
Recall that the algorithm does not necessarily terminate.
As such, $G^{(\Sigma)}$ will either be the full output of Algorithm~\ref{algo SigGB} assuming termination, or the partial output after interrupting the computation.
In the latter case, the set $G^{(\Sigma)}$ is only a signature Gr\"obner basis up to a certain signature $\sigma \in \MFr$ and $H$ together with the signatures of the trivial syzygies does not necessarily form a Gr\"obner basis of the module generated by $\s(\Syz(f_1,\dots,f_r))$.

In this general setting, the goal of this section is twofold.
First of all, starting from $G^{(\Sigma)}$ we want to reconstruct a labelled Gr\"obner basis $G^{[\Sigma]}$ (up to signature $\sigma$).
Secondly, for each element $\beta \in H$, we want to find a module element $\alpha \in \Syz(f_1,\dots,f_r)$ such that $\s(\alpha) = \beta$.

In situations where Algorithm~\ref{algo SigGB} terminates, i.e., when $G^{(\Sigma)}$ is a signature Gr\"obner basis and $H$ together with the signatures of the trivial syzygies forms a Gr\"obner basis of the module generated by $\s(\Syz(f_1,\dots,f_r))$, achieving both of these goals allows us to also recover a Gr\"obner basis of $\Syz(f_1,\dots,f_r)$.
The algorithms which we describe in this section are a direct adaptation of the procedure outlined in~\cite{Gao-2015-new-framework-for}.

Our first goal can be achieved by the following algorithm.
We note that no matter whether Algorithm~\ref{algo SigGB} terminates by itself or whether we interrupt the computation, the sets $G^{(\Sigma)}$ and $H$ are always finite.

\begin{algorithm}[H]
\renewcommand{\algorithmicrequire}{\textbf{Input:}}
\renewcommand{\algorithmicensure}{\textbf{Output:}}
\caption{Sig2LabelledGB}\label{algo 2}
\begin{algorithmic}[1]
\Require{$G^{(\Sigma)}$ a finite minimal signature Gr\"obner basis (up to some signature $\mu \in \MFr$)}
\Ensure{$G^{[\Sigma]}$ a finite minimal labelled Gr\"obner basis (up to signature $\mu$)}

\State{$G^{[\Sigma]} \leftarrow \emptyset$}
\State{$H^{(\Sigma)} \leftarrow G^{(\Sigma)}$}
\Comment{make a copy so that we do not alter $G^{(\Sigma)}$}

\While{$H^{(\Sigma)} \neq \emptyset$}\label{line while rec}
	\State{choose $f^{(\sigma)} \in H^{(\Sigma)}$ s.t.\ $\sigma = \min \{\sigma' \mid f'^{(\sigma')} \in H^{(\Sigma)}\}$}\label{line f}
	\State{$H^{(\Sigma)} \leftarrow H^{(\Sigma)}\setminus \{f^{(\sigma)}\}$}
	\State{choose $a,b\in\<X>, g^{[\gamma]} \in G^{[\Sigma]} \cup \{f_1^{[\varepsilon_1]}, \dots, f_r^{[\varepsilon_r]}\}$ s.t.\ $\s(a\gamma b) = \sigma$ and $\lm(agb)$ is minimal}\label{line choose 1}
	\State{$g'^{[\gamma']} \leftarrow$ result of regular top $\s$-reducing $ag^{[\gamma]}b$ by $G^{[\Sigma]}$}\label{line reduce 1}
	\State{$G^{[\Sigma]} \leftarrow G^{[\Sigma]} \cup \{g'^{[\gamma']}\}$}
\EndWhile
\State{\textbf{return }$G^{[\Sigma]}$}
\end{algorithmic}
\end{algorithm}

\begin{remark}
As will be clear from the proof of Proposition~\ref{prop correctness algo 2}, the minimality condition in line~\ref{line choose 1} of Algorithm~\ref{algo 2} is not required for the correctness of the algorithm. 
It is included purely for efficiency reasons with the hope of having to do less $\s$-reductions if $\lm(agb)$ is minimal.
We note that the same also holds for the minimality condition in line~\ref{line choose 2} of Algorithm~\ref{algo 3}.
\end{remark}

\begin{proposition}\label{prop correctness algo 2}
Algorithm~\ref{algo 2} is correct.
\end{proposition}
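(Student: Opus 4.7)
The plan is to argue by induction on the iterations of the \textwhile loop, tracking that at each stage the constructed $G^{[\Sigma]}$ is a minimal labelled Gröbner basis up to the next signature in $G^{(\Sigma)}$ to be processed. Since $G^{(\Sigma)}$ is a minimal signature Gröbner basis up to $\mu$, Definition~\ref{def sig-labelled GB} guarantees the existence of a minimal labelled Gröbner basis $\tilde{G}^{[\Sigma]}$ up to $\mu$ whose set of (signature, leading monomial) pairs matches $G^{(\Sigma)}$, and by Proposition~\ref{prop minimal s-GB} (as used in the proof of Corollary~\ref{cor finite s-GB}) this set of pairs is uniquely determined among all minimal labelled Gröbner bases up to $\mu$. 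Hence correctness reduces to showing that the output realizes exactly these pairs.

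Enumerate $G^{(\Sigma)}$ as $f_1^{(\sigma_1)},\dots,f_n^{(\sigma_n)}$ with $\sigma_1 \prec \dots \prec \sigma_n$, and let $\tilde{g}_i^{[\tilde{\gamma}_i]}$ denote the element of $\tilde{G}^{[\Sigma]}$ with signature $\sigma_i$. I claim, by induction on $i$, that after $i$ iterations the set $G^{[\Sigma]}$ contains labelled polynomials $g_1'{}^{[\gamma_1']},\dots,g_i'{}^{[\gamma_i']}$ with $\s(\gamma_j') = \sigma_j$ and $\lm(g_j') = \lm(\tilde{g}_j)$ for all $j \le i$, and that $G^{[\Sigma]}$ is a minimal labelled Gröbner basis up to $\sigma_{i+1}$ (with the convention $\sigma_{n+1} = \mu$). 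The base case $i = 0$ follows because every non-zero element of $I^{[\Sigma]}$ with signature below $\mu$ is top $\s$-reducible by $\tilde{G}^{[\Sigma]}$, which forces its signature to be at least $\sigma_1$.

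In the inductive step, the algorithm selects $f_{i+1}^{(\sigma_{i+1})}$, and the required $(a,b,g^{[\gamma]})$ with $\s(a\gamma b) = \sigma_{i+1}$ always exists (e.g., write $\sigma_{i+1} = a\varepsilon_j b$ and take $g^{[\gamma]} = f_j^{[\varepsilon_j]}$). Regular top $\s$-reduction preserves signatures, so the result $g'^{[\gamma']}$ has signature $\sigma_{i+1}$ and is regular top $\s$-reduced by $G^{[\Sigma]}$; the same property holds for $\tilde{g}_{i+1}^{[\tilde{\gamma}_{i+1}]}$ relative to $G^{[\Sigma]}$, for otherwise the induction hypothesis $\lm(g_j') = \lm(\tilde{g}_j)$ would produce an element of $\tilde{G}^{[\Sigma]}$ top $\s$-reducing $\tilde{g}_{i+1}$, contradicting minimality. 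By induction $G^{[\Sigma]}$ is a labelled Gröbner basis up to $\sigma_{i+1}$, so Lemma~\ref{lemma same signature} applies (after scaling signature coefficients) and yields $\lm(g') = \lm(\tilde{g}_{i+1})$. To extend the labelled Gröbner basis property to $\sigma_{i+2}$, apply Lemma~\ref{lemma equiv characterisation s-GB}: any $h^{[\delta]}$ with $\s(\delta) \prec \sigma_{i+2}$ is top $\s$-reducible by some $\tilde{g}_j^{[\tilde{\gamma}_j]}$, and compatibility of $\preceq$ with multiplication forces $\sigma_j \prec \sigma_{i+2}$, that is $j \le i+1$; the matching element in $G^{[\Sigma]} \cup \{g'^{[\gamma']}\}$ has the same leading monomial and signature, so it also top $\s$-reduces $h^{[\delta]}$. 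Minimality is preserved because the signature $\sigma_{i+1}$ of $g'^{[\gamma']}$ strictly exceeds the signatures of previously added elements, which rules out top $\s$-reductions in either direction.

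The main obstacle should be the bootstrap between the labelled Gröbner basis property and Lemma~\ref{lemma same signature} in the inductive step: the property up to $\sigma_{i+1}$ is what unlocks the lemma and forces $\lm(g') = \lm(\tilde{g}_{i+1})$, and this matching is in turn what allows the property to be extended up to $\sigma_{i+2}$. Once this interplay is pinned down, the remaining arguments are essentially bookkeeping closely parallel to the commutative reconstruction procedure of~\cite{Gao-2015-new-framework-for}.
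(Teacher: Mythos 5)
Your proof takes essentially the same route as the paper: identify a reference minimal labelled Gröbner basis $\tilde G^{[\Sigma]}$ behind the input $G^{(\Sigma)}$, show by induction over the loop that the constructed elements have the same (leading monomial, signature) pairs as $\tilde G^{[\Sigma]}$, and use Lemma~\ref{lemma same signature} in each step, feasible because at the time $\sigma_{i+1}$ is processed the partial $G^{[\Sigma]}$ is already a labelled Gröbner basis up to $\sigma_{i+1}$. That bootstrap between ``Gröbner up to $\sigma_{i+1}$'' and the applicability of Lemma~\ref{lemma same signature} is exactly the engine of the paper's loop invariant.

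One imprecision worth flagging: your closing sentence claims minimality is preserved ``because the signature $\sigma_{i+1}$ of $g'^{[\gamma']}$ strictly exceeds the signatures of previously added elements, which rules out top $\s$-reductions in either direction.'' The signature ordering only rules out one direction (the new element $g'^{[\gamma']}$ cannot top $\s$-reduce earlier ones, since $a\sigma_{i+1}b \succeq \sigma_{i+1} \succ \sigma_j$). It does not by itself prevent an earlier $g_j'^{[\gamma_j']}$ from singularly top $\s$-reducing $g'^{[\gamma']}$ via some $a\sigma_j b = \sigma_{i+1}$. The correct conclusion is what the paper draws: once you have established that $G^{[\Sigma]}$ realizes exactly the (leading monomial, signature) pairs of the minimal $\tilde G^{[\Sigma]}$, minimality of $G^{[\Sigma]}$ follows automatically, because top $\s$-reducibility depends only on these pairs and $\tilde G^{[\Sigma]}$ is minimal. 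So your argument already contains the needed ingredient; it just shouldn't be attributed to the signature ordering alone.
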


\begin{proof}
Let $\tilde G^{[\Sigma]} \subseteq I^{[\Sigma]}$ be a minimal labelled Gr\"obner basis of $I^{[\Sigma]}$ (up to signature $\mu$) such that $G^{(\Sigma)} = \{g^{(\s(\gamma))} \mid g^{[\gamma]} \in \tilde G^{[\Sigma]}\}$.
Furthermore, let $G^{[\Sigma]}$ be the output of Algorithm~\ref{algo 2} given $G^{(\Sigma)}$ as input.
To prove the correctness of Algorithm~\ref{algo 2}, we show that
\begin{align}\label{eq proof algo 2}
	\{(\lm(g), \s(\gamma)) \mid g^{[\gamma]} \in G^{[\Sigma]} \} = \{(\lm(g), \s(\gamma)) \mid g^{[\gamma]} \in \tilde G^{[\Sigma]} \}.
\end{align}
In other words, we show that the labelled polynomials in $G^{[\Sigma]}$ have the same leading monomials and signatures as the elements in the labelled Gr\"obner basis $\tilde G^{[\Sigma]}$.
Because then, every $f^{[\alpha]} \in I^{[\Sigma]}$ is $\s$-reducible by $G^{[\Sigma]}$ if and only if it is $\s$-reducible by $\tilde G^{[\Sigma]}$ and
Lemma~\ref{lemma equiv characterisation s-GB} yields that $G^{[\Sigma]}$ is a labelled Gr\"obner basis (up to signature $\mu$).
Furthermore, the minimality of $\tilde G^{[\Sigma]}$ implies the minimality of $G^{[\Sigma]}$.

To prove~\eqref{eq proof algo 2}, we show that the following loop invariant holds whenever the algorithm reaches line~\ref{line while rec}:
\begin{equation}
	\label{loop invariant}
	\{(\lm(g), \sigma) \mid g^{(\sigma)} \in G^{(\Sigma)} \setminus H^{(\Sigma)} \} = \{(\lm(g), \s(\gamma)) \mid g^{[\gamma]} \in G^{[\Sigma]}\}.
\end{equation}
Once the algorithm terminates and $H^{(\Sigma)} = \emptyset$ this implies~\eqref{eq proof algo 2} since the leading monomials and signatures of the elements in $G^{(\Sigma)}$ are equal to those of  $\tilde G^{[\Sigma]}$ by definition of $\tilde G^{[\Sigma]}$.

Obviously~\eqref{loop invariant} holds in the very beginning when $H^{(\Sigma)} = G^{(\Sigma)}$.
So, now assume that~\eqref{loop invariant} holds at some point when the algorithm reaches line~\ref{line while rec} and let $f^{(\sigma)} \in H^{(\Sigma)}$ be the signature polynomial that is chosen in line~\ref{line f}.
Furthermore, let $\alpha \in \Fr$ be such that $f^{[\alpha]} \in \tilde G^{[\Sigma]}$ with $\s(\alpha) = \sigma$.
Then, let $a,b \in \<X>$ and $g^{[\gamma]} \in G^{[\Sigma]} \cup \{f_1^{[\varepsilon_1]}, \dots, f_r^{[\varepsilon_r]}\}$ be as chosen in line~\ref{line choose 1}.
Due to the presence of the generators $f_1^{[\varepsilon_1]}, \dots, f_r^{[\varepsilon_r]}$ such a choice of $a,b$ and $g^{[\gamma]}$ is always possible.
Let $g'^{[\gamma']}$ be the result of the computation in line~\ref{line reduce 1}.
By construction, $g'^{[\gamma']}$ is regular top $\s$\nobreakdash-reduced by $G^{[\Sigma]}$.
Furthermore, note that $f^{[\alpha]}$ is regular top $\s$-reduced by $\tilde G^{[\Sigma]}$ because $\tilde G^{[\Sigma]}$ is minimal.
Consequently, the loop invariant implies that $f^{[\alpha]}$ is also regular top $\s$-reduced by $G^{[\Sigma]}$.
Note that, since we only care about \emph{regular} top $\s$-reducibility, it is irrelevant whether we consider $G^{[\Sigma]}$ before or after adding $g'^{[\gamma']}$ as $\s(\alpha) = \s(\gamma')$.
Also, note that the loop invariant, together with the fact that $\sigma$ was chosen to be minimal among all signatures in $H^{(\Sigma)}$, implies that $G^{[\Sigma]}$ is a labelled Gr\"obner basis up to signature $\sigma$.
Hence, Lemma~\ref{lemma same signature} is applicable to $g'^{[\gamma']}$ and $cf^{[\alpha]}$ with $c = \frac{\sigc(\gamma')}{\sigc(\alpha)}$.
It yields that $\lt(g') = \lt(cf)$, and consequently, $\lm(g') = \lm(f)$. 
Since also $\s(\gamma') = \s(a\gamma b) = \sigma$, the loop invariant still holds after removing $f^{(\sigma)}$ from $H^{(\Sigma)}$ and adding $g'^{[\gamma']}$ to $G^{[\Sigma]}$.
\end{proof}

After recovering a labelled Gr\"obner basis, we can proceed with the following algorithm to 
also recover the syzygies whose signatures are saved in $H$.

\begin{algorithm}[h]
\renewcommand{\algorithmicrequire}{\textbf{Input:}}
\renewcommand{\algorithmicensure}{\textbf{Output:}}
\caption{SyzygyRecovery}\label{algo 3}
\begin{algorithmic}[1]
\Require{$G^{(\Sigma)} \subseteq I^{(\Sigma)}$ and $H \subseteq \MFr$ as produced by Algorithm~\ref{algo SigGB} }
\Ensure{$\tilde H \subseteq \Syz(f_1,\dots,f_r)$ such that $\s(\tilde H) = H$}

\State{$\tilde H \leftarrow \emptyset$}
\State{$G^{[\Sigma]} \leftarrow$ apply Algorithm~\ref{algo 2} to $G^{(\Sigma)}$}

\For{$\sigma \in H$}\label{line sigma}
	\State{choose $a,b\in\<X>, g^{[\gamma]} \in G^{[\Sigma]} \cup \{f_1^{[\varepsilon_1]}, \dots, f_r^{[\varepsilon_r]}\}$ s.t.\ $\s(a\gamma b) = \sigma$ and $\lm(agb)$ is minimal}\label{line choose 2}
	\State{$0^{[\gamma']} \leftarrow$ result of regular $\s$-reducing $ag^{[\gamma]}b$ by $G^{[\Sigma]}$}\label{line reduce 2}
	\State{$\tilde H \leftarrow \tilde H \cup \{\gamma'\}$}
\EndFor
\State{\textbf{return }$\tilde H$}
\end{algorithmic}
\end{algorithm}

\begin{proposition}
Algorithm~\ref{algo 3} is correct.
\end{proposition}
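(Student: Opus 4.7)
The plan is to verify three things about Algorithm~\ref{algo 3}: that the choice in line~\ref{line choose 2} is always possible; that the regular $\s$-reduction in line~\ref{line reduce 2} terminates at an element whose polynomial part vanishes; and that the resulting module part $\gamma'$ is a syzygy with $\s(\gamma') = \sigma$.

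For the first point, I would trace how each $\sigma \in H$ was created in Algorithm~\ref{algo SigGB}. An element is added to $H$ only when some $p^{(\sigma)} \in P$ regular $\s$-reduces to the zero polynomial, and such a $p^{(\sigma)}$ is either an initial generator $f_i^{(\varepsilon_i)}$, in which case $\sigma = \varepsilon_i$ and one takes $a = b = 1$ with $g^{[\gamma]} = f_i^{[\varepsilon_i]}$, or a regular S-polynomial of two elements of $G^{(\Sigma)}$, in which case by Definition~\ref{def:S-pol} $\sigma$ is exactly of the form $\s(a\gamma b)$ for some $g^{[\gamma]} \in G^{[\Sigma]}$ and $a, b \in \<X>$.

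For the second and third points, the crux is to exploit a witness syzygy $\mu \in \Syz(f_1,\dots,f_r)$ with $\s(\mu) = \sigma$, namely the module part of the element that reduced to zero in Algorithm~\ref{algo SigGB} when $\sigma$ was added to $H$. Setting $c = \sigc(a\gamma b)/\sigc(\mu)$, the auxiliary element $ag^{[\gamma]}b - c \cdot 0^{[\mu]}$ has the same polynomial part as $ag^{[\gamma]}b$ but signature strictly smaller than $\sigma$. Since Algorithm~\ref{algo SigGB} processes signatures in ascending order (Lemma~\ref{lemma ascending signatures}) and $\sigma$ has been fully processed, $G^{(\Sigma)}$ is a signature Gröbner basis up to a signature strictly greater than $\sigma$; by Proposition~\ref{prop correctness algo 2}, $G^{[\Sigma]}$ is then a labelled Gröbner basis up to the same signature. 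Hence the auxiliary element $\s$-reduces to zero by $G^{[\Sigma]}$, and every reduction used has signature strictly smaller than $\sigma$. Transplanting the same sequence of reductions to $ag^{[\gamma]}b$ (of signature $\sigma$), each step becomes a regular $\s$-reduction, and since the polynomial parts transform identically, $ag^{[\gamma]}b$ is also regular $\s$-reduced to an element with vanishing polynomial part.

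To conclude that \emph{any} maximal sequence of regular $\s$-reductions of $ag^{[\gamma]}b$ yields an element $0^{[\gamma']}$, I would argue by contradiction. If a regular $\s$-reduced result $q^{[\gamma'']}$ had $q \neq 0$, then with $c' = \sigc(\gamma'')/\sigc(\mu)$ the element $q^{[\gamma'' - c'\mu]}$ has the same polynomial part $q$ and signature strictly smaller than $\sigma$; it would therefore be top $\s$-reducible by $G^{[\Sigma]}$ in view of Lemma~\ref{lemma equiv characterisation s-GB}, and the corresponding top $\s$-reduction applied to $q^{[\gamma'']}$ would be a regular one, contradicting maximality. Since regular $\s$-reductions preserve signatures, the final $\gamma'$ satisfies $\s(\gamma') = \sigma$ and $\overline{\gamma'} = 0$, as required. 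The main obstacle is the careful bookkeeping of signatures when translating reductions of the auxiliary element back to $ag^{[\gamma]}b$; once that is in place, everything follows from the properties of labelled Gröbner bases and regular $\s$-reductions established in Section~\ref{sec:comp-sign-grobn}.
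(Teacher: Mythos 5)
Your proof is correct and takes essentially the same route as the paper's: in both, the crux is that $\sigma$ is the signature of a syzygy, that $G^{[\Sigma]}$ produced by Algorithm~\ref{algo 2} is a labelled Gröbner basis up to a signature at least $\sigma$ (via Proposition~\ref{prop correctness algo 2}), and that these two facts force $ag^{[\gamma]}b$ to regular $\s$-reduce to zero. The difference is purely one of packaging. The paper dispatches the reduction step by invoking the Syzygy criterion (Proposition~\ref{prop syzygy criterion}); you reprove that criterion inline by explicitly subtracting a witness syzygy $\mu$, $\s$-reducing the resulting lower-signature element, and transplanting the same chain of reductions back onto $ag^{[\gamma]}b$ — which is precisely the proof of Proposition~\ref{prop syzygy criterion}. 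You also add a separate contradiction argument to show that \emph{every} maximal chain of regular $\s$-reductions terminates with zero polynomial part; this is a valid extra care the paper leaves implicit, as it follows from Lemma~\ref{lemma same signature} applied to $0^{[\mu]}$ and the normal form. For the feasibility of the choice in line~\ref{line choose 2}, the paper uses the simpler observation that the generators $f_i^{[\varepsilon_i]}$ always suffice (every module monomial factors as $a\varepsilon_i b$), whereas your tracing of how $\sigma$ entered $H$ reaches the same conclusion with more work.
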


\begin{proof}
To see the correctness of the algorithm, the only problematic lines are line~\ref{line choose 2} and~\ref{line reduce 2}.
To this end, let $\sigma \in H$ be the module monomial that is chosen in line~\ref{line sigma} during some iteration.
Due to the presence of the generators $f_1^{[\varepsilon_1]}, \dots, f_r^{[\varepsilon_r]}$ in line~\ref{line choose 2} of the algorithm, a choice of $a,b$ and $g^{[\gamma]}$ as required in this line is always possible.
It remains to show that $ag^{[\gamma]}b$ really regular $\s$-reduces to zero by $G^{[\Sigma]}$.
To this end, we note that it follows from the definition of $G^{(\Sigma)}$ and Proposition~\ref{prop correctness algo 2}, that $G^{[\Sigma]}$ is a labelled Gr\"obner basis up to signature $\sigma' = \max H$.
Furthermore, by definition of $H$, we know that $\sigma$ is the signature of a syzygy. 
Consequently, we can apply Proposition~\ref{prop syzygy criterion} to conclude that $ag^{[\gamma]}b$ indeed regular $\s$-reduces to zero.
\end{proof}

To conclude this section, we note that if Algorithm~\ref{algo SigGB} terminates without interruption, a Gr\"obner basis of $\Syz(f_1,\dots,f_r)$ can be obtained as follows:
First, apply Algorithm~\ref{algo 2} to obtain a labelled Gr\"obner basis of $I^{[\Sigma]}$. Next, use Algorithm~\ref{algo 3} to get the set $\tilde H$ containing the recovered syzygies.
Finally, a Gr\"obner basis of $\Syz(f_1,\dots,f_r)$ is given by $\tilde H \cup \{ \gamma m g' - g m \gamma' \mid g^{[\gamma]}, g'^{[\gamma']} \in G^{[\Sigma]}, m \in \<X>\}$.

\section{Experimental results and future work}

In this section, we compare Algorithm~\ref{algo SigGB} to the classical Buchberger algorithm.
Since our focus is on the feasibility of signature-compatible computations and not on their efficiency, we give data about the number of S-polynomials computed and reduced as well as about the number of reductions to zero when computing (signature) Gr\"obner bases for certain benchmark examples.
The following are taken from~\cite{LL09}.
\begin{table}[H]
\centering
\begin{tabular}{ |c|c| } 
 \hline
Example & Generators of the ideal\\
\hline
\hline
\texttt{braid3} & $yxy-zyz, xyx-zxy, zxz - yzx, x^3+y^3+z^3+xyz $\\
 \hline
 \texttt{lp1} & $z^4 + yxyx - xy^2x - 3zyxz, x^3yxy - xyx, zyx - xyz + zxz$ \\
 \hline
 \texttt{lv2} & $xy+yz, x^2+xy - yx- y^2$ \\
 \hline
\end{tabular}
\end{table}

As done in~\cite{LL09}, we only compute truncated (signature) Gr\"obner bases of these homogeneous ideals.
The designated degree bounds are indicated by the number after the ``\texttt{-}'' in the name of each example in Table~\ref{table comparison}.
 So, for example \texttt{lp1-11} means that we compute a partial Gr\"obner basis of the example \texttt{lp1} up to degree 11. 
Additionally, we also consider two non-homogeneous ideals derived from finite generalized triangular groups taken from~\cite[Theorem 2.12]{RS02} as done in~\cite{Xiu12}.
Both of these ideals have finite (signature) Gr\"obner bases.
\begin{table}[H]
\centering
\begin{tabular}{ |c|c| } 
 \hline
Example & Generators of the ideal\\
\hline
\hline
\texttt{tri1} & $ x^3-1, y^2-1, (yxyxyx^2yx^2)^2 -1$\\
 \hline
\texttt{tri3} & $x^3-1, y^3 - 1, (yxyx^2)^2 -1$ \\
 \hline
\end{tabular}
\end{table}
For all examples, we fix $\preceq_{\textup{deglex}}$ as a monomial ordering where we order the indeterminates as $x \prec_{\textup{lex}} y \prec_{\textup{lex}} z$ and work over the coefficient field $\Q$.
As a module ordering, $\preceq_\textbf{top}$ is chosen.

Table~\ref{table comparison} compares the number of S-polynomials computed and reduced and the number of reductions to zero that occur while computing (truncated) (signature) Gr\"obner bases for the examples stated above.
Algorithm~\ref{algo SigGB}, denoted by \texttt{SigGB}, is compared to a vanilla Buchberger algorithm, denoted by \texttt{BB vanilla}, and to an optimized Buchberger algorithm including a noncommutative version of the chain criterion as described in~\cite[Sec.~4.5.1]{Hof20}, denoted by \texttt{BB optimized}.
For each example, we list in the column ``S-poly'' the total number of S\nobreakdash-polynomials that are computed and reduced during the execution of the respective algorithm.
Additionally, we list the total number of reductions to zero in the column ``red.\ to 0''.

We note that all algorithms are part of the \texttt{OperatorGB} package\footnote{Available at \url{https://clemenshofstadler.com/software/}} and that a \MMA\ notebook containing all computations can be obtained from the same website as the package.

\begin{table}[H]
\centering
\begin{tabular}{ |c|c|c|c|c|c|c| } 
 \hline
\multirow{2}{*}{Example} & \multicolumn{2}{c|}{\texttt{SigGB}} & \multicolumn{2}{c|}{\texttt{BB vanilla}} & \multicolumn{2}{c|}{\texttt{BB optimized}} \\
\cline{2-7}
& S-poly & red.\ to 0 & S-poly & red.\ to 0  & S-poly & red.\ to 0 \\
 \hline
 \hline
\texttt{braid3-10} & 1053 & 40 & 1154 & 661 & 1121 & 634  \\
 \hline
 \texttt{lp1-11} & 155 & 0 & 205 & 130 & 198 & 125  \\
 \hline
 \texttt{lv2-100} & 201 & 0 & 9702 & 4990 & 9702  & 4990  \\
 \hline
 \texttt{tri1} & 335 & 164 & 9435 & 8897 & 3480 & 3288  \\
 \hline
 \texttt{tri3} & 252 &136 & 2705 & 2573 & 1060 & 979  \\
 \hline
\end{tabular}
\caption{Number of S-polynomials and reduction to zero during the computation of (truncated) (signature) Gr\"obner bases for several benchmark examples.} 
\label{table comparison}
\end{table}

As can be seen, the signature-based algorithm considers fewer S-polynomials and needs fewer reductions to zero. 
In two of the examples, there are even no zero reductions at all.
However, in terms of absolute computation time, \texttt{SigGB} cannot compete with the two other algorithms.
In comparison, \texttt{SigGB} performs worst on the \texttt{tri1} benchmark example where it is about four times slower than \texttt{BB vanilla} and about ten times slower than \texttt{BB optimized} (62 sec vs.\ 16 sec vs.\ 6 sec). 
For other examples, such as \texttt{lv2-100}, the timings are closer together but still in favor of the classical Buchberger algorithm (60 sec vs.\ 43 sec vs.\ 46 sec).


This is mainly because of two reasons.
First of all, when using the F5 criterion, the number of checks that have to be done for each S-polynomial increases quadratically with the size of the set $G^{(\Sigma)}$, which becomes computationally quite intense as $G^{(\Sigma)}$ grows.
Additionally, the fact that we are restricted to regular $\s$\nobreakdash-reductions in Algorithm~\ref{algo SigGB} requires an additionally check before each $\s$-reduction. 
This cost also adds up for longer computations.

We will investigate whether it is possible to improve the performance of Algorithm~\ref{algo SigGB} to obtain a competitive algorithm in practice. 
One step towards achieving this goal could be finding ways to also allow non-fair module orderings such as a position-over-term ordering.
Additionally, future research will be focused on adapting the concepts developed in this paper to the noncommutative F4 algorithm.

We also plan to leverage the algorithms developed here to find short representations of ideal elements.
This is particularly useful when proving operator identities, where such short representations correspond to short proofs of the statement about operators.
In particular, the effective description of the syzygy module provided by a signature Gr\"obner basis might allow to compute the \emph{shortest} proof of certain operator identities.


\section*{Acknowledgements}

The first author was supported by the Austrian Science Fund (FWF) grant P32301.
The second author was supported by the Austrian Science Fund (FWF) grants P31571-N32 and P34872.
The authors thank the anonymous referees for their helpful suggestions which improved this work a lot, as well as 
Clemens G.\ Raab and Georg Regensburger for their careful reading and their valuable comments.

\bibliographystyle{alpha}
\bibliography{SigGB} 

\appendix
\section{Detailed example}\label{appendix}

\begin{example}[continues=infinite-calculations]
Recall that in Example~\ref{ex infinite s-GB} we considered the ideal $I = (f_1,f_2,f_3) \subseteq K\<X>$ with
\[
	f_1 = xyx - xy, \qquad f_2 = yxy, \qquad f_3 = xyy - xxy,
\]
over a field $K$ in the variables $X = \{x,y\}$.
Furthermore, we used $\preceq_{\textup{deglex}}$ where we ordered the indeterminates as $x \prec_{\textup{lex}} y$ and used $\preceq_{\textbf{top}}$ as a module ordering.
We claimed that the set
\[
	G^{[\Sigma]} = \{f_1^{[\varepsilon_1]}, f_2^{[\varepsilon_2]}, f_3^{[\varepsilon_3]}, f_4^{[\alpha]} \} \cup \{g_n^{[\gamma_n]} \mid n \geq 0\}
\]
with $f_4 = xxy$, $g_n = yx^{n+2}y$ and certain $\alpha, \gamma_n \in \Fr$ such that $\s(\alpha) = \varepsilon_1 y$ and $\s(\gamma_n) = y\varepsilon_3y^n$,
is a minimal labelled Gr\"obner basis of $I^{[\Sigma]}$ w.r.t.\ the family of generators $f_1,f_2,f_3$.
We postponed the verification that $G^{[\Sigma]}$ is indeed a labelled Gr\"obner basis.
We finish this proof here using Theorem~\ref{thm spairs}.
To this end, we compute all regular ambiguities of $G^{[\Sigma]}$ and regular $\s$-reduce the respective S\nobreakdash-polynomials.
We have the following regular ambiguities $a_{ij}$ between $f_i$ and $f_j$:
\begin{align*}
	a_{11} &= (xyxyx, xy, yx, f_1^{[\varepsilon_1]}, f_1^{[\varepsilon_1]}), & a_{12} &= (xyxy, x, y, f_1^{[\varepsilon_1]}, f_2^{[\varepsilon_2]}), \\
	a_{13} &= (xyxyy, xy, yy, f_1^{[\varepsilon_1]}, f_3^{[\varepsilon_3]}), & a_{14} &= (xyxxy, xy, xy, f_1^{[\varepsilon_1]}, f_4^{[\alpha]}), \\
	a_{21} &= (yxyx, y, x, f_2^{[\varepsilon_2]}, f_1^{[\varepsilon_1]}), & a_{22} &= (yxyxy,yx,xy,f_2^{[\varepsilon_2]}, f_2^{[\varepsilon_2]}),\\
	a_{23} &= (yxyy, y, y, f_2^{[\varepsilon_2]}, f_3^{[\varepsilon_3]}), & a_{32} &= (xyyxy, xy, xy, f_3^{[\varepsilon_3]}, f_2^{[\varepsilon_2]}),\\
	 a_{41} &= (xxyx, x, x, f_4^{[\alpha]}, f_1^{[\varepsilon_1]}), & a_{42} &= (xxyxy, xx,xy, f_4^{[\alpha]}, f_2^{[\varepsilon_2]}), \\
	a_{43} &= (xxyy, x, y, f_4^{[\alpha]}, f_3^{[\varepsilon_3]}).
\end{align*}
The corresponding S-polynomials are
\begin{align*}
	\spol(a_{11}) &= (-xyyx + xyxy)^{[\varepsilon_1yx - xy\varepsilon_1]}, & \spol(a_{12}) &= -xyy^{[\varepsilon_1y - x\varepsilon_2]}, \\
	\spol(a_{13}) &= (-xyyy + xyxxy)^{[\varepsilon_1yy - xy \varepsilon_3]}, & \spol(a_{14}) &= -xyxy^{[\varepsilon_1xy - xy \alpha]}, \\
	\spol(a_{21}) &= yxy^{[\varepsilon_2x - y \varepsilon_1]}, & \spol(a_{22}) &= 0^{[\varepsilon_2 xy - yx \varepsilon_2]}, \\
	\spol(a_{23}) &= yxxy^{[\varepsilon_2y - y \varepsilon_3]}, & \spol(a_{32}) &= - xxyxy^{[\varepsilon_3xy - xy \varepsilon_2]},\\
	\spol(a_{41}) &= xxy^{[\alpha x - x\varepsilon_1]}, & \spol(a_{42}) &= 0^{[\alpha xy - xx \varepsilon_2]}, \\
	\spol(a_{43}) &= xxxy^{[\alpha y - x \varepsilon_3]}.
\end{align*}

Of these S-polynomials, all but two regular $\s$-reduce to 0.
The two exceptions are $\spol(a_{12})$, which regular $\s$-reduces to
 $-xxy^{[\alpha_{12}]}$ with $\alpha_{12} = \varepsilon_1 y - x\varepsilon_2 + \varepsilon_3$,
 and $\spol(a_{23})$, which is already regular $\s$-reduced.
Note that $\s(\alpha_{12}) = \varepsilon_1y = \s(\alpha)$. Therefore, $-xxy^{[\alpha_{12}]}$ is singular top $\s$\nobreakdash-reducible by $f_4^{[\alpha]}$.
Similarly, $\spol(a_{23})$ is singular top $\s$-reducible by $g_0^{[\gamma_0]}$ since $\s(\varepsilon_2 y - y\varepsilon_3) = y\varepsilon_3 = \s(\gamma_0)$.

Additionally, for each $n \geq 0$, we have eight regular ambiguities between the $f_i$ and $g_n$.
The S\nobreakdash-polynomials of these ambiguities all regular $\s$-reduce to 0 except for
$\spol(\tilde a_{n3}) = yx^{n+3}y^{[\gamma_n y - yx^{n+1} \varepsilon_3]}$, which comes from the overlap ambiguity
\[
	\tilde a_{n3} = (yx^{n+2}yy, yx^{n+1}, y, g_n^{[\gamma_n]}, f_3^{[\varepsilon_3]})
\]
between $f_3^{[\varepsilon_3]}$ and $g_n^{[\gamma_n]}$ and which is already regular $\s$-reduced.
Note that $\spol(\tilde a_{n3})$ is singular top $\s$-reducible by $g_{n+1}^{[\gamma_{n+1}]}$ since $\s(\gamma_n y - yx^{n+1} \varepsilon_3) = \s(\gamma_n y)=  y \varepsilon_3 y^{n+1} = \s(\gamma_{n+1})$.

Finally, we also have the following regular ambiguity between $g_i$ and $g_j$ for all $i,j \geq 0$:
\[
	a'_{ij} = (yx^{i+2}yx^{j+2}y, yx^{i+2}, x^{j+2}y, g_i^{[\gamma_i]},  g_j^{[\gamma_j]}).
\]
The respective S-polynomial is $\spol(a'_{ij}) = 0^{[\gamma_i x^{j+2}y - yx^{i+2}\gamma_j]}$.

So, all regular S-polynomials of $G^{[\Sigma]}$ regular \mbox{$\s$-reduce} to $0$ or to a singular top $\s$-reducible element.
Hence, Theorem~\ref{thm spairs} yields that $G^{[\Sigma]}$ is a labelled Gr\"obner basis of $I^{[\Sigma]}$ w.r.t.\ the family of generators $f_1,f_2,f_3$.

Furthermore, we also claimed that the set 
\[
	G^{[\Sigma']} = \{f_1^{[\varepsilon_1]}, f_2^{[\varepsilon_2]}, f_3^{[\varepsilon_3]}, f_4^{[\varepsilon_4]} \} \cup \{yxxy^{[\gamma_0]} \}
\]
with $\s(\gamma_0) =  y\varepsilon_3$ is a minimal labelled Gr\"obner basis of $I^{[\Sigma']}$ w.r.t.\ the family of generators $f_1,f_2,f_3,f_4$.
To see why this is the case, we note that with $f_4$ now being a basis element, it has the signature $\varepsilon_4$ instead of $\varepsilon_1 y$.
Consequently, $f_4^{[\varepsilon_4]}$ can now be used to regular $\s$-reduce the S\nobreakdash-polynomial $\spol(\tilde a_{03}) = yxxxy^{[y\varepsilon_3 y - yx \varepsilon_3]}$ to zero.
Therefore, $G^{[\Sigma']}$ is a labelled Gr\"obner basis of $I^{[\Sigma']}$.
The minimality of $G^{[\Sigma']}$ follows from the fact that $\varepsilon_3 \prec \varepsilon_4$ as therefore neither 
$f_3^{[\varepsilon_3]}$ nor $yxxy^{[\gamma_0]}$ is $\s$-reducible by $f_4^{[\varepsilon_4]}$.
\end{example}

\end{document}